\renewcommand{\paragraph}[1]{\vspace{-1ex}\subparagraph*{#1}}
\newcommand{\stlex}{\mathtt{st\mbox{-}lex}}
\newcommand{\stlexp}{\mathtt{st\mbox{-}lex}^+}
\newcommand{\stlexm}{\mathtt{st\mbox{-}lex}^-}
\newcommand{\stcolex}{\mathtt{st\mbox{-}colex}}
\newcommand{\stcolexp}{\mathtt{st\mbox{-}colex}^+}
\newcommand{\stcolexm}{\mathtt{st\mbox{-}colex}^-}
\newcommand{\stpos}{\mathtt{st\mbox{-}pos}}
\newcommand{\stposp}{\mathtt{st\mbox{-}pos}^+}
\newcommand{\stposm}{\mathtt{st\mbox{-}pos}^-}
\newcommand{\sA}{\ensuremath{\mathrm{sA}}}
\newcommand{\SA}{\ensuremath{\mathrm{SA}}}
\newcommand{\PA}{\ensuremath{\mathrm{PA}}}
\newcommand{\ISA}{\ensuremath{\mathrm{ISA}}}
\newcommand{\IPA}{\ensuremath{\mathrm{IPA}}}
\newcommand{\LPF}{\ensuremath{\mathrm{LPF}}}
\newcommand{\BWT}{\ensuremath{\mathrm{BWT}}}
\newcommand{\coBWT}{\ensuremath{\mathrm{coBWT}}}
\newcommand{\SRC}{\ensuremath{\mathrm{SRC}}}
\newcommand{\PDA}{\ensuremath{\mathrm{PDA}}}
\newcommand{\sd}{\ensuremath{\mathrm{sd}}}
\newcommand{\lce}{\ensuremath{\mathrm{lce}}}
\newcommand{\llce}{\ensuremath{\mathrm{llce}}}
\newcommand{\rlce}{\ensuremath{\mathrm{rlce}}}
\newcommand{\lcs}{\ensuremath{\mathrm{lcs}}}
\newcommand{\lcp}{\ensuremath{\mathrm{lcp}}}
\renewcommand{\root}{\ensuremath{\mathit{root}}}
\newcommand{\child}{\ensuremath{\mathit{child}}}
\newcommand{\first}{\ensuremath{\mathit{first}}}
\newcommand{\succn}{\ensuremath{\mathit{succ}}}
\newcommand{\labeln}{\ensuremath{\mathit{label}}}
\newcommand{\lleaf}{\ensuremath{\mathit{lleaf}}}
\newcommand{\rleaf}{\ensuremath{\mathit{rleaf}}}
\renewcommand{\next}{\ensuremath{\mathit{next}}}
\newcommand{\locate}{\ensuremath{\mathit{locate}}}
\newcommand{\sdepth}{\ensuremath{\mathit{sdepth}}}
\newcommand{\ancestor}{\ensuremath{\mathit{ancestor}}}
\newcommand{\isleaf}{\ensuremath{\mathit{isleaf}}}
\newcommand{\sufsearch}{\ensuremath{\mathrm{sufsearch}}}
\newcommand{\LCP}{\ensuremath{\mathrm{LCP}}}
\newcommand{\PLCP}{\ensuremath{\mathrm{PLCP}}}
\newcommand{\locus}{\ensuremath{\mathrm{locus}}}
\newcommand{\out}{\ensuremath{\mathrm{out}}}
\newcommand{\argmin}{\mathrm{argmin}}
\newcommand{\argmax}{\mathrm{argmax}}
\title{Compressing Suffix Trees by Path Decompositions}
\author{Ruben Becker}{Ca' Foscari University of Venice, Italy}{rubensimon.becker@unive.it}{https://orcid.org/0000-0002-3495-3753}{}
\author{Davide Cenzato}{Ca' Foscari University of Venice, Italy}{davide.cenzato@unive.it}{https://orcid.org/0000-0002-0098-3620}{}
\author{Travis Gagie}{Dalhousie University, Halifax, Nova Scotia, Canada}{Travis.Gagie@dal.ca}{https://orcid.org/0000-0003-3689-327X}%
  {Funded by NSERC (Discovery Grant RGPIN-07185-2020).}
\author{Ragnar {Groot Koerkamp}}{ETH Zurich, Switzerland \and Karlsruhe Institute of Technology, Germany}{ragnar.grootkoerkamp@gmail.com}{https://orcid.org/0000-0002-2091-1237}{}
\author{Sung-Hwan Kim}{Ca' Foscari University of Venice, Italy}{sunghwan.kim@unive.it}{https://orcid.org/0000-0002-1117-5020}{}
\author{Giovanni Manzini}{University of Pisa, Italy}{giovanni.manzini@unipi.it}{https://orcid.org/0000-0002-5047-0196}%
  {Funded by the Next Generation EU PNRR MUR M4 C2 Inv 1.5 project ECS00000017 Tuscany Health Ecosystem Spoke 6 CUP I53C22000780001.}
\author{Nicola Prezza}{Ca' Foscari University of Venice, Italy}{nicola.prezza@unive.it}{https://orcid.org/0000-0003-3553-4953}{}
\authorrunning{R. Becker et al.}
\keywords{Text indexing, suffix tree, I/O-efficient}
\begin{document}
	
	\maketitle

\begin{abstract}
    The suffix tree is arguably the most fundamental data structure on strings: introduced by Weiner (SWAT 1973) and McCreight (JACM 1976), it allows solving a myriad of computational problems on strings in linear time. Motivated by its large space usage, subsequent research focused first on reducing its size by a constant factor via Suffix Arrays, and later on reaching space proportional to the size of the compressed string. Modern compressed indexes, such as the $r$-index (Gagie et al., SODA 2018), fit in space proportional to $r$, the number of runs in the Burrows-Wheeler transform (a strong and universal repetitiveness measure). These advances, however, came with a price: while modern compressed indexes boast optimal bounds in the RAM model, they are often orders of magnitude slower than uncompressed counterparts in practice due to catastrophic cache locality. This reality gap highlights that Big-O complexity in the RAM model has become a misleading predictor of real-world performance, leaving a critical question unanswered: can we design compressed indexes that are efficient in the I/O model of computation? 
    
    We answer this in the affirmative by introducing a new Suffix Array sampling technique based on particular path decompositions of the suffix tree. We prove that sorting the suffix tree leaves by specific priority functions induces a decomposition where the number of distinct paths (each corresponding to a string suffix) is bounded by $r$. This allows us to solve indexed pattern matching efficiently in the I/O model using a Suffix Array sample of size at most $r$, strictly improving upon the (tight) $2r$ bound of Suffixient Arrays, another recent compressed Suffix Array sampling technique.
    
    Experiments confirm that this theoretical I/O efficiency translates to practice in pangenomic applications: our index locates pattern occurrences using less space and orders of magnitude less time than the $r$-index when performing pattern matching on repetitive DNA collections. Beyond this, our contributions are twofold: (i) unlike Suffixient Arrays, our technique supports most standard suffix tree operations in $O(r)$ space on top of the text while matching the I/O complexity of uncompressed suffix trees; and (ii) we establish a general framework where any valid path decomposition induces a Suffix Array sampling whose size is a new strong repetitiveness measure; we provide a universal mechanism for locating all pattern occurrences for each such path decomposition. 
\end{abstract}
    
	\section{Introduction}\label{sec:intro}
	
	In this paper, we describe a new elegant and very efficient paradigm to solve the well-studied problem of compressing suffix trees. Suffix trees were introduced in 1973 by Weiner \cite{weiner1973linear} and revisited (in their modern form) in 1976 by McCreight \cite{McCreight76} to solve string processing problems such as finding longest common substrings and matching patterns on indexed text in linear time. 
	The latter problem (indexed string matching) asks to build a data structure on a text $\mathcal T$ of length $n$ over alphabet of size $\sigma$ so that later (at query time), given a string $P$ (the pattern) of length $m\le n$, the following can be returned: (1) one exact occurrence $\mathcal T[i,j] = P$ of $P$ in $\mathcal T$ if any exists (\emph{find} queries), (2) all exact occurrences  of $P$ in $\mathcal T$ (\emph{locate} queries), or (3) the number of exact occurrences of $P$ in $\mathcal T$ (\emph{count} queries).
	While being extremely fast due to excellent query-time cache locality, suffix trees require a linear number of words to be stored in memory regardless of the input compressibility and are therefore not suitable to nowadays massive-data scenarios such as pan-genome indexing (where one aims at indexing terabytes of data in the form of repetitive collections of thousands of genomes). This problem was later mitigated by Suffix Arrays \cite{ManberM90,manber1993suffix,gonnet1992new}, which use only a constant fraction of the space of suffix trees while supporting (cache-efficiently) a subset of their functionality, still sufficient to support pattern matching queries. 
	
    Subsequent research succeeded (spectacularly) in reducing the space usage of suffix trees and Suffix Arrays to the bare minimum needed to store the \emph{compressed text}. Notable contributions in this direction include the compressed Suffix Array (CSA) \cite{GV00}, FM-index \cite{FM00}, run-length compressed Suffix Array \cite{makinen2010storage}, run-length FM-index \cite{MN05}, $r$-index \cite{GNP20}, Lempel-Ziv-based \cite{KNtcs12} and indexes based on straight-line programs (SLPs) \cite{claude2012improved} (and their variants), and the more recent $\delta$-SA \cite{DBLP:conf/focs/KempaK23}. 
	See the survey of Navarro~\cite{Nav22b} for an extensive treatment of the subject. 
	While the first line of work (CSA and FM-index) focused on entropy compression, the subsequent works mentioned above switched to text compressors capable of exploiting the \emph{repetitiveness} of the underlying text sequence (a source of redundancy that entropy compression is not able to exploit).  Among those, the $r$-index \cite{rindex-soda,GNP20} and its improvements \cite{Nishimoto21Move,BertramMoveR24,zakeri2024movi} stood out for its optimal linear-time pattern matching query time and its size --- linear in the number $r$ of equal-letter runs in the Burrows-Wheeler transform (BWT) of the text. 
    These works on repetition-aware compressed text indexes spurred a very fruitful line of research on compressibility measures (the survey of Navarro \cite{Nav22a} covers the subject in detail), culminating in recent breakthroughs \cite{kempa2018roots,kempa2020resolution} which showed that $r$ is a strong and universal repetitiveness measure, being equivalent to all other known compressibility measures (such as the size of the Lempel-Ziv factorization \cite{LempelZ76}, normalized substring complexity~\cite{kociumaka2022toward}, and straight-line programs) up to a multiplicative polylogarithmic factor. Altogether, these results laid the theoretical foundations for subsequent works in computational pan-genomics that showed how the run-length encoded BWT can successfully be used to index very large collections of related genomes in compressed space~\cite{Moni22,Cozzi23,Rossi22,ahmed2023spumoni,shivakumar2024sigmoni, ahmed2021pan}. 
    
	\subsection{Are compressed data structures just a theoretical tool?} 
	
Modern compressed indexes such as the $\delta$-SA of Kempa and Kociumaka \cite{DBLP:conf/focs/KempaK23} support random access and pattern matching queries, but their time complexities depend on a high polynomial of the logarithm of the text's length, which makes them hardly practical.
    Indexes based on the Lempel-Ziv factorization or on grammar compression mitigate this problem (even reaching optimal search time \cite{ChristiansenEKN21}), but rely on complex and cache-inefficient data structures that, again, make them orders of magnitude slower than simple suffix trees in practice.
    The $r$-index (in its modern version \cite{Nishimoto21Move}) uses $O(r)$ words of space and solves \emph{find}, \emph{locate}, and \emph{count} queries in $O(m)$ time (assuming constant alphabet for simplicity), plus the number of occurrences to be reported (if any). While this is essentially the end of the story in the word RAM model,
    %(apart from the problem of searching the bit-packed query pattern --- partially solved in \cite{GNP20}), 
    it does not take into account caching effects. As a matter of fact, each of the $O(m)$ steps of the \emph{backward search} algorithm of the $r$-index and of its predecessor (the FM-index \cite{FM00}) triggers I/O operations (that is, likely cache misses). 
    While this issue was later partially addressed by the \emph{move} structure of Nishimoto and Tabei \cite{Nishimoto21Move},
    that solution still triggers $O(m)$ cache misses. 
    This does not happen with the suffix tree:
    
    \begin{remark}\label{remark:ST cache}
        The suffix tree of Weiner~\cite{weiner1973linear} and McCreight~\cite{McCreight76} uses $O(n)$ words on top of the plain text ($n\log\sigma$ bits) and allows locating all the $occ$ occurrences of any pattern $P$ of length $m$ with $O(d + m/B + occ)$ I/O complexity, where $d$ is the node depth of $P$ in the suffix tree and $B$ is the number of integers fitting in an I/O block.
    \end{remark}

    To see this, observe that path compression (i.e. each suffix tree edge is represented as a pair of pointers to the text) makes it possible to compare (a substring of) the pattern with the label of an edge of length $\ell$ with $O(\ell/B + 1)$ I/O complexity. In particular, each of the $d$ edge traversals triggers at least one I/O operation in the worst case.
    %\footnote{While this is true in general, a more accurate choice of path compression pointers (based on heavy paths) can reduce the term $d$ to $\log n$ in the worst case. Typical suffix tree construction algorithms, however, do not provide such a guarantee (and, in practice, an arbitrary choice of the pointers leads to good performance anyways).}. 
    The additive term $d$ is negligible in practice as in most interesting scenarios the suffix tree tends to branch mostly in the highest levels (we investigate this effect in Section \ref{sec: experiment ST vs rindex}). For instance, if the text is uniform then $d \in \Theta(\log n)$ with high probability since the longest repeated substring's length is $\Theta(\log n)$ w.h.p.
    While Remark \ref{remark:ST cache} reflects the original suffix tree design \cite{weiner1973linear,McCreight76} (locating pattern occurrences by subtree navigation), it is worth noticing that augmenting the suffix tree with the Suffix Array $\SA$ \cite{manber1993suffix,gonnet1992new}, the $occ$ term gets reduced to $occ/B$ since the $occ$ pattern occurrences occur contiguously in $\SA$. 
    In this paper, we 
    stick with the original suffix tree design of Remark \ref{remark:ST cache} 
    since we are mostly interested in matching the $m/B$ term, which becomes dominant as the pattern length increases asymptotically ($occ$, on the other hand, does not increase when appending new characters to a given query pattern).
    
    The performance gap in the I/O model between compressed indexes and the suffix tree shows up dramatically in practice: a simple experiment (see Section \ref{sec: experiment ST vs rindex}) shows that, while the $r$-index is orders of magnitude smaller than the suffix tree on very repetitive inputs, it also solves queries \emph{orders of magnitude slower}. 
    The same holds for all the existing compressed indexes using a space close to that of the $r$-index.
    This reality gap highlights that Big-O complexity in the RAM model has become a misleading predictor of real-world performance in the context of compressed data structures, leaving a critical question unanswered:  
	
	\begin{center}
		\emph{
		Can compressed indexes be efficient in the I/O model of computation?
		}
	\end{center}

\subsection{Our contributions}\label{sec:overview}

    We answer this question in the affirmative by introducing a novel compressed sampling of the Prefix Array (the colexicographically sorted array of all text prefixes). In one configuration, our sampling requires at most $r$ samples to resolve pattern matching queries via a simple binary search strategy, assuming random access to the text is available. 
    Since random access in compressed space is a well-studied problem \cite{BilleLRSSW15,KNtcs12,BCGGKNOPTjcss20,kuruppu2010relative} (see also \cite{Nav22a} for a survey on the topic), this requirement does not limit our strategy. On the contrary, unlike existing compressed indexes, our approach is flexible enough to utilize any random access compressed data structure.
    
    By incorporating additional data structures, we achieve our main result: a compressed suffix tree topology occupying only $O(r)$ words of space on top of a (potentially compressed) text oracle that still efficiently supports most standard navigation queries:

    \begin{theorem}\label{thm:st operations}
    Let $\mathcal T$ be a text of length $n$ over an integer alphabet of size $\sigma$.
        Assume we have access to an oracle supporting longest common extension ($\lce$) and random access queries (extraction of one character) on $\mathcal T$ in $O(t)$ time. Then, there is a representation of $\mathcal T$'s suffix tree using $O(r)$ words on top of the text oracle and supporting these queries:
        \begin{itemize}
            \item $\root()$ in $O(1)$ time: the suffix tree root. 
            \item $\child(u,a)$ in $O(t\log r + \log\sigma)$ time: the child of node $u$ by letter $a$.
            \item $\first(u)$ in $O(\log\sigma)$ time: the alphabetically-smallest label among the outgoing edges of $u$. 
            \item $\succn(u,a)$ in $O(\log\sigma)$ time: given node $u$ and a character $a$ labeling one of the outgoing edges of $u$, return the successor of $a$ (in alphabetic order) among the characters labeling outgoing edges of $u$ (return $\bot$ if no such label exists).
            \item $\labeln(u,v)$ in $O(1)$ time: given an edge $(u,v)$, return $(i,j) \in [n]^2$ such that the edge's label is $\mathcal T[i,j]$.
            \item $\lleaf(u)$, $\rleaf(u)$ in $O(1)$ time: the leftmost/rightmost leaves of the subtree rooted in a given node $u$.
            \item $\next(u)$: if $u$ is a leaf, return the next leaf in lexicographic order; we support following a sequence of $k$ leaf pointers in $O(\log\log (n/r) + k)$ time.
            \item $\locate(u)$ in $O(1)$ time: the text position $i$ of a given leaf (representing suffix $\mathcal T[i,n]$).
            \item $\sdepth(u)$ in $O(1)$ time: the string depth of node $u$.
            \item $\ancestor(u,v)$ in $O(t)$ time: whether $u$ is an ancestor of $v$.
        \end{itemize}
        If the text oracle also supports computing a collision-free (on text substrings) hash $\kappa(\mathcal T[i,j])$ of any text substring in $O(h)$ time (an operation we call \emph{fingerprinting}), then $\child(u,a)$ can be supported in $O(t + h\log n + \log\sigma)$ time within the same asymptotic space. 

        The same asymptotic bounds apply to I/O complexity if $t$ and $h$ represent I/O costs rather than time.
    \end{theorem}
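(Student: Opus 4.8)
The plan is to realize $\mathcal T$'s suffix tree \emph{implicitly} on top of a size-$O(r)$ index built from a carefully chosen set $S$ of $O(r)$ text positions --- the \emph{samples} of the priority-$\pi$ path decomposition, taking $\pi$ to be the lexicographic rank of $\mathcal T$'s suffixes --- representing every node $v$ by a pair $(p,d)$ with $d=\sdepth(v)$ and $p$ the starting position of \emph{some} occurrence of the string-label of $v$ (so $\mathcal T[p,p+d-1]$ is that label; for a leaf, $d=n-p+1$). Several operations are then immediate: $\root()$ returns a designated record in $O(1)$; $\sdepth(u)$ reads $d$ and, for a leaf, $\locate(u)$ reads $p$, both $O(1)$; for an edge $(u,v)$ with $v=(p_v,d_v)$ and $d_u=\sdepth(u)$ we have $\labeln(u,v)=(p_v+d_u,\;p_v+d_v-1)$ by arithmetic, $O(1)$; and $\ancestor(u,v)$ with $u=(p_u,d_u)$, $v=(p_v,d_v)$ reduces to testing $d_u\le d_v$ and $\lce(p_u,p_v)\ge d_u$ --- a single oracle call, $O(t)$. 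For $\lleaf(u)/\rleaf(u)$ I would keep, inside the $O(r)$ words, a constant-time handle from a node to the text positions of the endpoints of its suffix-array interval, encoded compactly using the monotone behaviour of these endpoints along the segments of the decomposition.

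First I would establish the structural core: that $S$ --- the set of positions that witness, for every right-maximal substring $\alpha$ of $\mathcal T$ and every one-character extension $\alpha a$ occurring in $\mathcal T$, an occurrence of $\alpha a$ whose last character lies at a sampled position --- can be chosen with $|S|=O(r)$, with samples in bijection with the segments of the path decomposition. I would prove the bound by charging each sample to a boundary between two equal-letter runs of the $\BWT$ (the places where, scanning the $\pi$-ordered leaves, the decomposition is forced to open a new segment). I expect this to be the main obstacle: the remaining work is essentially plumbing, but this $O(r)$ bound is the crux and needs the right combinatorial lemma tying $\pi$-ordered leaves, Weiner links, and $\BWT$ runs together.

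Next I would implement descent, $\child(u,a)$, the only operation needing search. For internal $u=(p,d)$ the label $\alpha=\mathcal T[p,p+d-1]$ is right-maximal, so if $\alpha a$ occurs it is witnessed by some $q\in S$ with $\mathcal T[q-d,q-1]=\alpha$ and $\mathcal T[q]=a$. Keeping $S$ in colexicographic order of the prefixes its positions terminate --- i.e.\ by $\rev(\mathcal T[1,q-1])$ --- makes $\{q\in S:\mathcal T[q-d,q-1]=\alpha\}$ a contiguous range, located by a binary search each of whose steps compares $\rev(\alpha)$ against a candidate $\rev(\mathcal T[1,q-1])$ up to length $d$ via one $\lce$-type oracle call, for $O(t\log r)$ in total; an additional $O(\log\sigma)$-time lookup on the following characters inside the range picks the unique witness with $\mathcal T[q]=a$, from which the child $(q-d,d')$ is assembled, its string depth $d'$ read off the data attached to that sample (or recovered by one further $\lce$). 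With fingerprinting available, the $\log r$ per-step $\lce$ calls are replaced by fingerprint comparisons (binary-searching over prefix lengths in $[n]$) costing $O(h\log n)$, plus a single $O(t)$ verification, giving $O(t+h\log n+\log\sigma)$. The rest is light: $\first(u)$ and $\succn(u,a)$ are a minimum resp.\ a successor query over the characters labeling $u$'s outgoing edges --- the segment-continuation character at $u$ together with the (sorted) branching characters of segments hanging off $u$ --- each $O(\log\sigma)$; and $\next(u)$, following $k$ consecutive leaves in lexicographic order, uses the $r$-index-style $\phi$ machinery over $S$: a predecessor query in the $O(r)$-size $S$ over universe $[n]$ in $O(\log\log(n/r))$, after which the run structure lets the remaining $k-1$ steps be taken in $O(1)$ amortized, for $O(\log\log(n/r)+k)$.

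Finally I would note that every query algorithm above touches only (i) the text oracle and (ii) the $O(r)$-word auxiliary structures, the latter accessed $O(1)$, $O(\log\sigma)$, $O(\log r)$ or $O(\log\log(n/r))$ times as stated; hence reading $t$ and $h$ as I/O complexities and charging each auxiliary access $O(1)$ I/O, every bound in the statement holds verbatim as an I/O-complexity bound. In short, the only genuinely non-routine ingredients are the $O(r)$ sample bound and the colexicographic layout enabling the $\child$ search; the remaining pieces are standard compact-data-structure components assembled on top of the node representation $(p,d)$.
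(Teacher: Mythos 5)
Your overall direction matches the paper (an $O(r)$, colexicographically-ordered sample of prefix positions, binary search with $\lce$ or fingerprints for descent, $\bar\phi$/move machinery for $\next$), but your node representation $(p,d)$ is too weak, and this is not a detail --- it is exactly where the paper's technical work lies. The paper represents a node $u=\locus(\alpha)$ by $(b,e,i_{min},i_{max},|\alpha|)$, where $[b,e]$ is the colexicographic range of $\alpha$ in the sample array and $i_{min},i_{max}$ are the text positions of the lexicographically smallest and largest suffixes prefixed by $\alpha$; these fields are what make $\lleaf$, $\rleaf$ constant-time and $\first$, $\succn$ $O(\log\sigma)$-time (a range-next-value query on the wavelet tree of $\mathcal L[b,e]$). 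With only $(p,d)$ you cannot get any of this: your ``constant-time handle from a node to the endpoints of its suffix-array interval'' would have to serve $\Theta(n)$ distinct nodes from $O(r)$ words with $O(1)$ lookup given just a text position and a depth, which is essentially compressed-suffix-array functionality and is not justified by a monotonicity remark; likewise, recovering $[b,e]$ from $(p,d)$ costs a binary search, so your $O(\log\sigma)$ bounds for $\first/\succn$ cannot be met.

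The same deficiency breaks your $\child(u,a)$. After finding samples $q$ with $\mathcal T[q-d,q-1]=\alpha$ and $\mathcal T[q]=a$, the witness is generally \emph{not} unique, and neither an arbitrary witness nor ``data attached to that sample'' determines the child's string depth: a sample heads a whole decomposition path and witnesses nodes of many depths, and the $\lce$ of two arbitrary occurrences of $\alpha a$ overestimates the child's depth. The paper obtains the depth as $\rlce(i_{min}',i_{max}')$ where $i_{min}',i_{max}'$ are the lexicographic \emph{extremes} among suffixes prefixed by $\alpha a$; finding them requires (i) sampling with respect to both $\pi=\ISA$ and its dual $\bar\pi$ (arrays $\stlexm$ and $\stlexp$, each of size at most $r$), (ii) Range Minimum/Maximum structures over the $\pi$-values of the samples, accessed through the $\mathcal{LF}/\mathcal{FL}$ mapping, and (iii) a case analysis inheriting $i_{min}$ or $i_{max}$ from the parent when $a$ is the smallest or largest outgoing label (those extremes need not be witnessed by any sample). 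None of this appears in your proposal, and without it the claimed bounds for $\child$, $\lleaf$, $\rleaf$, $\first$, $\succn$ do not follow. Your size-$O(r)$ sampling argument and the $O(t\log r)$ / $O(h\log n)$ suffix-search step are sound and close to the paper's, but the missing extreme-occurrence bookkeeping is the crux of the theorem's proof rather than ``plumbing''.
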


    A subset of the above queries suffices to navigate the suffix tree (a task at the core of several string-processing algorithms) and to solve pattern matching queries.
    For example, using the cache-efficient text representation of Prezza~\cite{prezzaTalg21}, supporting $\lce$ with $O(t) = O(\log n)$ I/O complexity, fingerprinting with $O(h)=O(1)$ I/O complexity, and extraction of $\ell$ contiguous characters with $O(1+\ell/B)$ I/O complexity on polynomial alphabets, we obtain the following corollary:

    \begin{restatable}{corollary}{locatingOurST} \label{cor:locating on our ST}
        Let $\mathcal T$ be a text of length $n$ over an integer alphabet of size $\sigma$.
        The topology of $\mathcal T$'s suffix tree can be compressed in $O(r)$ words on top of a text representation~\cite{prezzaTalg21} of $n\log\sigma + O(\log n)$ bits so that all the $occ$ occurrences of any pattern $P$ of length $m$ can be located with $O(d\log n + m/B + occ)$ I/O complexity, where $d$ is the node depth of $P$ in the suffix tree and $B$ is the number of integers fitting in an I/O block.
    \end{restatable}

    That is, the same I/O complexity of Weiner and McCreight's suffix tree up to a logarithmic factor multiplying $d$ (see Remark~\ref{remark:ST cache}).
    At the same time, the space usage on top of the text is reduced from $O(n)$ to $O(r)$ words ($r$ is orders of magnitude smaller than $n$ on very repetitive inputs~\cite{GNP20}).
    Furthermore, we show that the term $d\log n$ can be replaced by $d\log m$ with a different technique (Theorem \ref{thm:locate all general}; this is important since in typical applications, $m\ll n$ holds).

    Using up-to-date cache-efficient compressed data structures, we show experimentally that an optimized implementation of our fully-compressed index is simultaneously \emph{smaller} and \emph{orders of magnitude faster} than the $r$-index on the task of locating \emph{all} pattern occurrences on a highly repetitive collection of genomes.
    %, even when using its most recent optimizations \cite{Nishimoto21Move,BertramMoveR24}. 
    
    %Differently from the long line of research on \emph{compressed self-indexes} \cite{FM00,GV00} (integrating index and text in the same compressed data structure), we obtain this result by separating the indexing functionality (suffix tree topology) from the text oracle. This is exactly what happens in classic suffix trees and has the additional advantage of allowing us to play with the text oracle (e.g., by choosing a different compression method depending on the dataset at hand). 

    \subsection{Improvements over related techniques}

\subsubsection{Suffixient Array}
Our approach is most closely related to the \emph{Suffixient Array}~\cite{cenzato2025suffixientarraysnewefficient}, another recent compressed sampling of the Prefix Array. While Suffixient Arrays pioneered this direction by showing that $O(r)$ samples are sufficient for pattern matching queries, our work reveals their theoretical and practical sub-optimality. We strictly improve upon them in three dimensions:

\begin{itemize}
    \item \textbf{Space Bounds.} The Suffixient Array requires up to $2r$ samples~\cite{navarro2025smallestsuffixientsetsrepetitiveness}, a bound recently proven to be worst-case tight up to a small additive constant~\cite{date2025neartightnesschileq2r}. We prove this is unnecessary: our technique demonstrates that $r$ samples are sufficient for pattern matching, effectively halving the worst-case space requirement.
    
    \item \textbf{Functionality.} Unlike Suffixient Arrays, which are limited to pattern matching, our sampling retains the topological structure of the suffix tree. Consequently, we support a much broader range of suffix tree operations efficiently.
    
    \item \textbf{Locating Efficiency.} Suffixient Arrays return an \emph{arbitrary} pattern occurrence, offering no control over which one is found. To locate \emph{all} occurrences starting from an arbitrary position $SA[i]$, one requires bidirectional navigation (both $SA[i-1]$ and $SA[i+1]$), costing about $4r$ words of auxiliary space~\cite{GNP20}. In contrast, our technique returns the occurrence minimizing a user-defined priority function $\pi$. By setting $\pi$ to the lexicographic order, our index guarantees returning the \emph{first} pattern occurrence in Suffix Array order. This allows us to locate all remaining occurrences using only unidirectional successor Suffix Array queries (i.e., retrieving $SA[i+1]$ given $SA[i]$), which requires only about $2r$ words of auxiliary space~\cite{GNP20}.
\end{itemize}

\subsubsection{Compressed Suffix Trees.} Gagie et al.~\cite{GNP20} previously addressed compressed suffix trees in space bounded by a function of $r$, showing that full functionality is possible in $O(r\log(n/r))$ space with logarithmic operation time. When augmented with a I/O efficient text oracle, their theoretical I/O complexity for pattern matching matches our Corollary~\ref{cor:locating on our ST}; however, their approach consumes significantly more space on top of the oracle ($O(r\log(n/r))$ vs $O(r)$). Furthermore, their structure relies on complex machinery atop a grammar-compressed Suffix Array; to our knowledge, it has never been implemented, likely due to the practical reality that Suffix Arrays do not compress as effectively as the text itself via grammar compression.

\subsubsection{Other I/O Efficient Solutions.} 

Other works in the literature addressed the I/O-efficiency of full-text indexes.  Chien et al.~\cite{chien2008geometric} introduced the Geometric Burrows-Wheeler Transform, which connects range searching with text indexing to provide compressed indexing techniques and theoretical lower bounds within the I/O model. Their solution, however, achieves I/O-efficient performance only in the uncompressed setting. Similarly, Ferragina and Venturini~\cite{ferragina2016compressed} proposed a compressed version of the string B-tree, adapting classical external-memory string structures to use compressed space while remaining efficient across memory hierarchies. While their compressed cache-oblivious string B-Tree supports I/O-efficient prefix search and enumeration, its space for \emph{locate} queries remains proportional to the number of suffixes.

As far as fully-compressed space is concerned, several recent works have targeted the I/O bottleneck of the $r$-index. The \emph{Move} structure of Nishimoto and Tabei~\cite{Nishimoto21Move} reduces the I/O operations of the $r$-index by a $\log\log n$ factor by replacing predecessor queries with pointers. However, it retains a worst-case complexity of $O(m+occ)$ I/O operations. While practical implementations~\cite{zakeri2024movi,BertramMoveR24} are roughly an order of magnitude faster than the standard $r$-index, they require significantly more space. Similarly, Puglisi and Zhukova~\cite{puglisi2020relative} applied relative Lempel-Ziv compression to a transformed Suffix Array. While this yields good memory locality and query speeds in practice, it lacks formal guarantees and increases the space usage of the $r$-index by an order of magnitude, again suffering from the poor compressibility of the Suffix Array relative to the text.

    \subsection{Overview of our techniques.}

    %Even though Suffixient Arrays do reduce the term $m$ in the query complexity to $O(1+m/B)$, (i) they allow locating only \emph{one} pattern occurrence; (ii) even worse, the returned occurrence depends on implementation details of the index and cannot be chosen by the user; (iii)
    %as we show in this paper, the smallest suffixient set size $\chi$ can be twice as large as $r$.
    %Our results stem from the observation that the issues (i-ii) are, in fact, connected. 
    %As a byproduct, we also solve issue (iii) by providing a Prefix Array sample of size bounded by $r$ that is still sufficient to perform pattern matching.
    Our work can be interpreted as a generalization of suffix sorting: by sorting the text's suffixes according to any priority function (permutation) $\pi: [n] \rightarrow [n]$ satisfying a natural and desirable \emph{order-preserving} property (see Definition \ref{def:order-preserving pi}), 
    we obtain a compressed index that returns the pattern occurrence $\mathcal T[i,j]$ minimizing $\pi(i)$ among all pattern occurrences. Figure~\ref{fig:STLEX} broadly introduces our solution. Intuitively, we (i) sort the suffix tree's leaves according to $\pi$, (ii) we build a \emph{suffix tree path decomposition} (STPD for brevity) prioritizing the leftmost paths (i.e., smaller $\pi$) in such an order, and (iii) we path-compress the STPD paths by just recording their starting position in the text. At this point, we show that the colexicographically-sorted \emph{Path Decomposition Array} $\PDA$ of those positions (a sample of the Prefix Array) can be used to obtain a new elegant, simple, and remarkably efficient compressed suffix tree.
    
    This procedure can be formalized precisely as follows. In Section \ref{sec:order preserving STPD} we observe that each order-preserving STPD (i.e., an STPD using an order-preserving $\pi$) is associated with a Longest Previous Factor array $\LPF_\pi$ storing in entry $\LPF_\pi[i]$ the length $k$ of the longest factor $\mathcal T[j,j+k-1] = \mathcal T[i,i+k-1]$ occurring in a position $j$ with $\pi(j)<\pi(i)$. This array generalizes the well-known Permuted Longest Common Prefix Array $\PLCP$ (obtained by taking $\pi = \ISA$, the Inverse Suffix Array), and the Longest Previous Factor array $\LPF$ (obtained by taking $\pi = id$, the identity permutation). 
    Then, the positions at the beginning of each STPD path are $\{i+\LPF_\pi[i]\ :\ i\in [n]\}$. Array $\PDA$ is obtained by colexicographically-sorting such set, whose size is the number of \emph{irreducible} values in $\LPF_\pi$, that is, positions $i$ such that either $i=1$ or $\LPF_\pi[i] \neq \LPF_\pi[i-1]-1$ (a terminology borrowed from the literature on the $\PLCP$ array). This is interesting because the number of irreducible values in $\LPF_\pi$ is known to be at most $r$ when $\pi = \ISA$ (and in practice consistently smaller than that, as we show experimentally). More in general, by taking $\pi$ to be any order-preserving permutation, we obtain a new compressibility measure ($|\PDA_\pi|$) generalizing the number of irreducible $\PLCP$ values.
    In Theorem \ref{thm: compressing PDA} we show that such repetitiveness measure is reachable, meaning that $O(|\PDA_\pi|)$ words of space are sufficient to compress the text. 
    We provide a general mechanism of space $O(|\PDA_\pi|)$ (on top of the text oracle) for locating all pattern occurrences on any order-preserving $\pi$. When $\pi$ is the lexicographic rank of suffixes ($\pi=\ISA$) or the colexicographic rank of prefixes ($\pi = \IPA$, the Inverse Prefix Array), we describe more efficient (both in theory and practice) strategies for locating all pattern occurrences. We give more details below.

    \begin{figure}[h!]
		\centering
        \includegraphics[width=0.6\linewidth,page=2]{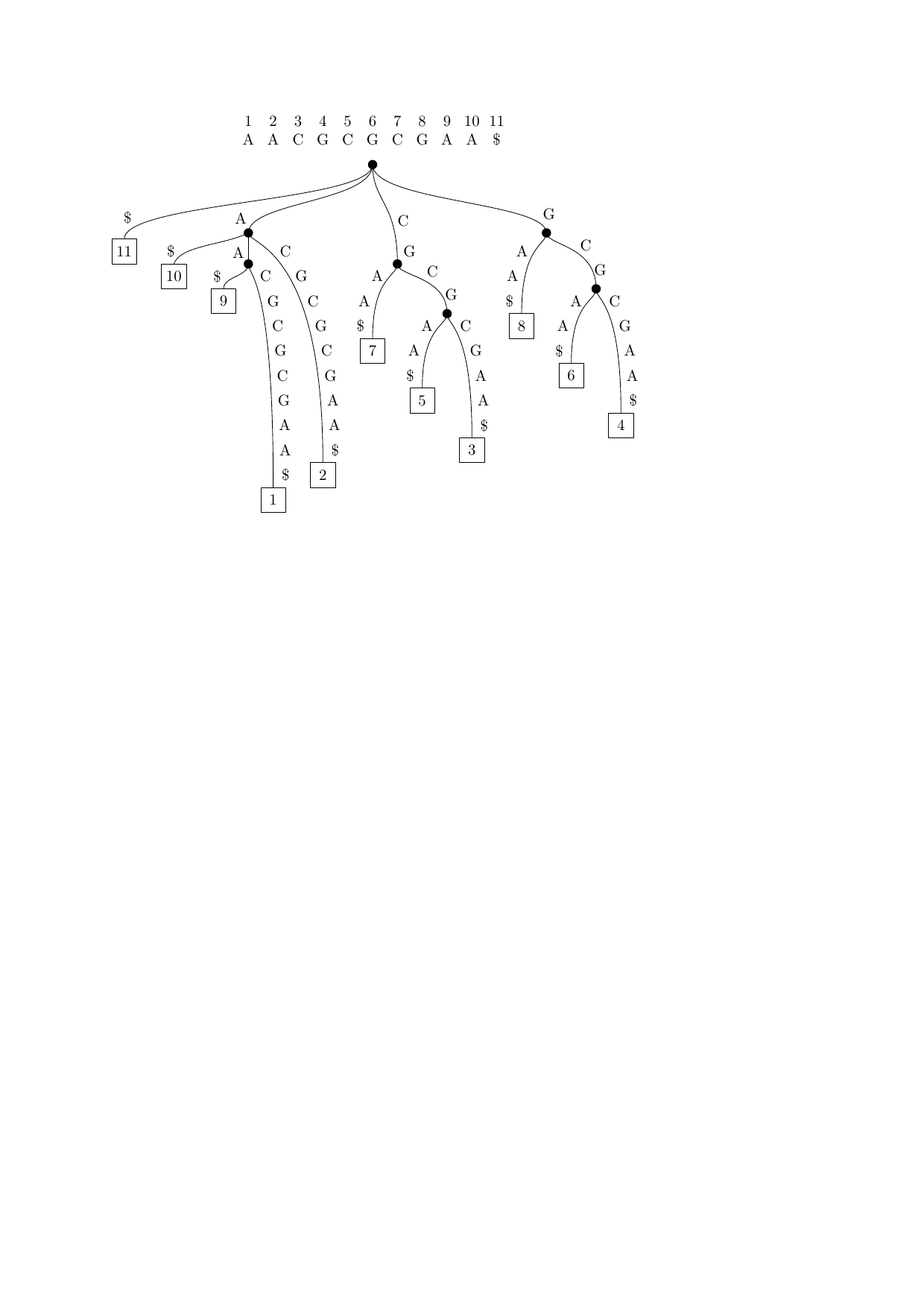}
        
		\caption{
            Overview of our technique. 
		We sort $\mathcal T$'s suffixes $\mathcal T[i,n]$ (equivalently, suffix tree leaves) by increasing $\pi(i)$. 
		In this example, $\pi$ corresponds to the standard lexicographic order of the text's suffixes (but $\pi$ can be more general).
		This induces a \emph{suffix tree path decomposition} (an edge-disjoint set of node-to-leaf paths covering all edges) obtained by always following the leftmost path. 
		At this point, we associate each path with the integer $i$ such the path's label is $\mathcal T[i,n]$ (in the figure, we also color each path according to the color of the associated position $i$). 
        In particular, the label from the root to the first path's edge is $\alpha_{i,k} = \mathcal T[i-k+1,i]$, where $k$ is the string depth of the first path's edge. Our indexing strategy essentially consists in implicitly colexicographically-sorting strings $\alpha_{i,k}$ in a subset of the Prefix Array called the \emph{Path Decomposition Array} $\PDA$: the colexicographically-sorted array containing (without duplicates) the first position of each path (in our example: $\PDA = [11,10,3,7,8]$). 
        Observe that, in this example, there are few (five) \emph{distinct} path labels: indeed, we show that $|\PDA|$ is bounded by universal compressibility measures and that it can be used to support basic suffix tree navigation and pattern matching operations. 
		}
		\label{fig:STLEX}
	\end{figure}

    \paragraph{Lexicographic rank.}
    We start in Section \ref{sec:stlex} with perhaps the most natural order-preserving permutation $\pi$: the lexicographic rank among the text's suffixes (see Figure \ref{fig:STLEX}). We prove that the corresponding $\PDA$ array --- called $\stlexm$ --- has size $|\stlexm| \le r$. In contrast, the Suffixient Array ~\cite{cenzato2025suffixientarraysnewefficient} has at most $2r$ samples~\cite{navarro2025smallestsuffixientsetsrepetitiveness}, a bound recently proven to be tight up to a small additive constant~\cite{date2025neartightnesschileq2r}.
    %we show that (i) the smallest suffixient set \cite{cenzato2025suffixientarraysnewefficient} has size $\chi \le r + w - 1$, where $w\le r$ is the number of leaves in the Weiner link tree, (ii) show that this upper bound is tight, and (iii) provide an infinite string family on which $|\stlexm| \le r \le \chi/2 - 1$ holds (Corollary \ref{cor:separation chi r stlex}). 
    %This proves a separation between $r$ and $\chi$ and proves that our new sampling of the Prefix Array is superior to suffixient sets as a function of $r$. 
    At this point, we consider the dual order-preserving permutation $\bar\pi(i) = n-\pi(i)+1$, yielding an STPD that always chooses the lexicographically-largest suffix tree paths. We prove that the same bound holds on the size of the corresponding array: $|\stlexp| \le r$. 
    We then show that the colexicographically-sorted array $\stlex = \{i-1\ :\ i\in \stlexm \cup \stlexp \cup \{n+1\}\}$, in addition to longest common extension queries on the text, naturally supports several suffix tree operations, including descending to children.
    Intuitively, we design a constant-size representation $R_u$ of any suffix tree node $u = \locus(\alpha)$ (i.e. $u$ is the node reached by reading $\alpha\in \Sigma^*$ from the suffix tree root) formed by the colexicographic range of $\alpha$ in $\stlex$, the lexicographically smallest and largest text suffixes being prefixed by $\alpha$, and the node's string depth $|\alpha|$. Given a character $c\in \Sigma$, the representation of $u$ combined with $\stlex$ and Range Minimum/Maximum Query (RMQ) data structures of $O(|\stlex|)$ bits allow us to find the lexicographically smallest and largest text suffixes being prefixed by $\alpha\cdot c$. Then, the longest common prefix between those two suffixes (found by a longest common extension query) is precisely the length of the suffix tree edge reached by reading $\alpha\cdot c$ from the root. This allows us to find the remaining characters $\beta \in \Sigma^*$ labeling that edge, meaning that the end of the edges is reached by reading $\alpha\cdot c\cdot \beta$ from the root. In turn, this allows us to reconstruct the representation of node $\child(x,c)$ via binary search on $\stlex$ and longest common extension queries.
    We note that our representation works only for \emph{explicit} suffix tree nodes, since strings $\alpha$ corresponding to implicit nodes do not suffix any string in $\stlex$.
        
    The procedure outlined above suffices to identify the suffix tree locus of pattern $P$. By traversing the subtree rooted at this node, we can report all $occ$ occurrences of $P$, taking $O(\log n)$ time per occurrence for a total of $O(occ \cdot \log n)$.
    Notably, this strategy operates independently of the $r$-index's $\phi$ function \cite{GNP20}, which was previously the only known technique for supporting locate queries within $r$-bounded space.
    We reduce this time to $O(\log\log (n/r) + occ)$ as follows.  
    Getting the leftmost/rightmost leaves of the subtree rooted in a given node and returning the beginning $i$ of the suffix $\mathcal T[i,n]$ corresponding to a given leaf are easily supported in constant time each thanks to the particular node representation that we adopt. Finally, leaf pointers (connecting leaves in lexicographic order) are supported in the claimed running time using techniques borrowed from the $r$-index \cite{GNP20} (the above-mentioned $\phi$-function, optimized with the \emph{move} technique of Nishimoto and Tabei \cite{Nishimoto21Move}), adding further $O(r)$ memory words of space. 

    \paragraph{General locating mechanism.} 

    We proceed in Section \ref{sec: general STPD index}
    by presenting a general technique to locate all pattern occurrences on \emph{any} order-preserving STPD. 
    Our technique is based on the observation that array $\LPF_\pi$ induces 
    an \emph{overlapping bidirectional parse} (i.e., a text factorization generalizing Lempel-Ziv '77 and allowing phrases to overlap) with $|\PDA_\pi|$ phrases. We prove that these factorizations have the following remarkable property: any given pattern $P$ that is a substring of $\mathcal T$ has exactly one text occurrence $\mathcal T[i,j]$ not entirely contained in a single phrase --- that is, a \emph{primary occurrence}. All the remaining occurrences of $P$ are entirely contained inside a phrase --- we call these \emph{secondary occurrences} --- and can be located from the primary occurrence by resorting to two-dimensional orthogonal point enclosure in $O(\log n)$ time each. To locate the primary occurrence, we present an algorithm working on any order-preserving STPD and returning the pattern occurrence $P[i,j]$ minimizing $\pi(i)$.

    \paragraph{Colexicographic rank.}

    In Section \ref{sec:stcolex} we then observe that on a particular order-preserving permutation $\pi$, the above general locating mechanism gets simplified: this happens when $\pi$ is the colexicographic rank of the text's prefixes (that is, $\pi = \IPA$). We show that the corresponding $\PDA$ array --- deemed $\stcolexm$ --- has size $|\stcolexm| \le \bar r$, where $\bar r$ is the number of runs in the Burrows-Wheeler transform of $\mathcal T$ reversed (note the symmetry with $|\stlexm| \leq r$). 
    This improves upon the known bound for Suffixient Arrays, whose size $\chi$ is only known to be at most $\chi \leq 2\bar r$ ~\cite{cenzato2025suffixientarraysnewefficient}.
    This STPD possesses a feature that makes it appealing for a practical implementation: the image  $\pi(\stcolexm)$ of $\stcolexm$ through $\pi$ is increasing. Ultimately, this implies that we do not need Range Minimum queries (constant-time in theory, but slow in practice) to identify samples minimizing $\pi$.
    This simplified version of the algorithm of Section \ref{sec: general STPD index} allows us to locate the colexicographically-smallest text prefix being suffixed by the query pattern $P$. 
    %Even though we cannot support suffix tree navigation operations with this STPD, we show that $\stcolexm$ is still sufficient to locate the colexicographically-smallest text prefix being suffixed by the query pattern $P$. 
    At this point, the locate mechanism of the $r$-index allows us to locate the remaining $occ$ occurrences. 
    Due to a smaller space usage with respect to the compressed suffix tree of Section \ref{sec:stlex} (the constant hidden in the $O(\bar r)$ space usage is close to $3$ in practice), our experimental results on \emph{locate} queries use an optimized implementation of this index.
    Experimentally, we show that this index solves the long-standing locality problem of compressed indexes supporting the powerful \emph{locate} query, being at the same time \emph{smaller} than the $r$-index and \emph{one to two orders of magnitude faster}. 
    %Due to the considerable interest~\cite{Moni22,Cozzi23,Rossi22,ahmed2023spumoni,shivakumar2024sigmoni, ahmed2021pan, zakeri2024movi} that the $r$-index has gained since its introduction in applications related with computational pangenomics (that is, indexing repetitive genomic collections), we expect our new solution to have a profound impact in the field.

    \paragraph{Text position order.}
    Of interest is also the order-preserving identity permutation $\pi(i) = i$. The corresponding path decomposition array $\stposm$ allows computing the \emph{leftmost} pattern occurrence in the text. 
    The size $|\stposm|$ of this STPD array is the number of irreducible values in the Longest Previous Factor Array ($\LPF$), a new interesting repetitiveness measure that we analyze both in theory and experimentally.  
    We show that $p$ words of space are close to optimal in the worst case to compress strings of length $n$ with $|\stposm|=p$. 
    Together with the fact that $O(p)$ words of space are sufficient to compress the text (Theorem \ref{thm: compressing PDA}), this indicates that $|\stposm|$ is a strong compressibility measure, a fact that we confirm in section \ref{sec:experiments} showing that $|\stposm|$ is consistently smaller than $r$ in practice.
    We then briefly discuss applications of this STPD and observe that it is tightly connected with (1) the celebrated Ukkonen's suffix tree construction algorithm, and (2) the PPM$^*$ (prediction by partial matching) compression algorithm.

	\begin{comment}
	\begin{figure}[h!]
		\begin{minipage}{0.45\linewidth}
			\centering
			\includegraphics[width=\linewidth]{}
		\end{minipage}
		\begin{minipage}{0.45\linewidth}
			\centering
			\includegraphics[width=\linewidth]{}
		\end{minipage}
		\caption{
			\textbf{Left}. Consider the suffixient set of Figure \ref{fig:sA example}, and consider sample 7 (purple) on the node reached by reading string $C$ from the root. The corresponding text suffix is $\mathcal T[7,11] = CGAA\$$. This means that, when searching any prefix of this text suffix, after finding sample 7 (by binary search on the suffixient array), we will match the rest of the pattern without the need of running other binary searches. In particular, the sample 9 (yellow) reached by reading string $CGA$ form the root (visible in Figure \ref{fig:sA example}), is never needed.     \textbf{Right}. Repeating this reasoning following a top-down traversal of the suffix tree, we obtain this path decomposition, with associated Path Decomposition Array $\PDA = [11,2,9,7,4]$. Observe that sample 3 (green) is no longer needed.}
		\label{fig:suffixient-STPD}
	\end{figure}
\end{comment}

	%We show that the class of \emph{order-preserving STPDs} includes very natural ones able to achieve \emph{universal compression} (i.e. such that the sizes of the corresponding $\PDA$ arrays can be bounded in terms of $r$) and ultimately lead to indexes solving the more powerful \emph{locate} queries. In particular:

\section{Preliminaries}
Let $\Sigma = \{0,\dots, \sigma-1\}$ be a finite integer alphabet of size $\sigma$ endorsed with a total order $<$, which we call the \emph{alphabetic order}.
A \emph{string} of length $n$ over $\Sigma$ is a sequence $\mathcal S=\mathcal S[1]\mathcal S[2]\cdots\mathcal S[n]\in \Sigma^n$. In particular, note that $\sigma \le n$.
The \emph{reverse} $\mathcal S^{rev}$ of $\mathcal S$ is $\mathcal S^{rev} = \mathcal S[n]\mathcal S[n-1] \cdots \mathcal S[1]$.
With $\Sigma^*$ we denote the set of strings of arbitrary length, i.e., $\Sigma^*=\bigcup_{n\ge 0} \Sigma^n$. 
A \emph{text} is a string $\mathcal T\in \Sigma^n$ such that symbol $\mathcal T[n] = \$ \in \Sigma$ appears only in $\mathcal T[n]$ and is alphabetically-smaller than all other alphabet symbols, i.e., $\$ < c$ for all $c\in \Sigma \setminus \{\$\}$. 

With $[i, j]$ we denote the integer interval $\{i, i + 1, \ldots, j\}$ and with $[i]$ the interval $[1,i]$. For an arbitrary string $\mathcal S\in\Sigma^*$ and an interval $[i,j]$, we let $\mathcal S[i,j] = \mathcal S[i]\mathcal S[i + 1]\ldots \mathcal S[j]$. A string $\alpha$ is a \emph{substring} of a string $\mathcal S$, if there exists an interval $[i,j]$ such that $\alpha = \mathcal S[i, j]$. If $i=1$ ($j=n$), we call $\mathcal S[i, j]$ a \emph{prefix} (\emph{suffix}) of $\mathcal S$. The string $\mathcal S[i, j]$ is called a \emph{proper prefix} (\emph{proper suffix}) if it is a prefix (suffix) and in addition $\mathcal S[i, j] \neq \mathcal S$. To improve readability, we use symbols $\mathcal S$ and $\mathcal T$ to indicate the main string/text subject of our lemmas and theorems (usually, this is the string/text being indexed), and Greek letters $\alpha, \beta, \dots$ for their substrings.

We use the same notation that is used for strings for indexing arrays, i.e., if $A\in U^n$ is an array of $n$ elements over a universe $U$, then $A[i,j]= A[i]\ldots A[j]$. For a function $f$ with domain $U$, we use $f(A)$ to denote the array $f(A[1]) \ldots f(A[n])$.

\begin{definition}
    A substring $\alpha$ of $\mathcal S$ is said to be \emph{right-maximal} if (1)~it is a suffix of $\mathcal S$ or (2) there exist distinct $a,b\in \Sigma$ such that $\alpha a$ and $\alpha b$ are substrings of $\mathcal S$.
\end{definition}

We extend the alphabetic order $<$ of $\Sigma$ to the \emph{lexicographic} order $<_{\text{lex}}$ of $\Sigma^*$ as follows. For two strings (substrings) $\alpha$ and $\beta$, it holds that $\alpha<_{\text{lex}} \beta$ if $\alpha$ is a proper prefix of $\beta$, or if there exists $j$ such that $\alpha[i]=\beta[i]$ for all $i\in [j-1]$ and $\alpha[j]<\beta[j]$.
The \emph{colexicographic order} $<_{\text{colex}}$ is defined symmetrically: for two strings $\alpha$ and $\beta$, it holds that $\alpha<_{\text{colex}} \beta$ if $\alpha$ is a proper suffix of $\beta$, or if there exists $j$ such that $\alpha[|\alpha| - i + 1]=\beta[|\beta| - i + 1]$ for all $i\in [j-1]$ and $\alpha[|\alpha| - j + 1]<\beta[|\beta| - j + 1]$.

We proceed with the definition of the longest common prefix (suffix) functions.

\begin{definition}[$\lcp$, $\lcs$]
    The \emph{longest common prefix} function $\lcp$ (the \emph{longest common suffix} function $\lcs$) is defined as the function that, for two strings $\alpha\in \Sigma^n$ and $\beta\in\Sigma^m$, returns the maximum integer $k = \lcp(\alpha, \beta)$ ($k = \lcs(\alpha, \beta)$) such that $\alpha[1,k]=\beta[1,k]$ ($\alpha[n - k + 1, n]=\beta[m - k + 1, m]$).
\end{definition}
Using these functions we now define longest common extension oracles.
\begin{definition}[$\rlce$, $\llce$, and $\lce$]
    Let $\mathcal S$ be a string of length $n$.
    The \emph{right (left) longest common extension} function $\mathcal S.\rlce$ ($\mathcal S.\llce$) is the function that, for two distinct integers $i,j\in[n]$, returns $\mathcal S.\rlce(i, j) = \lcp(\mathcal S[i, n], \mathcal S[j, n])$ ($\mathcal S.\llce(i, j) = \lcs(\mathcal S[1, i], \mathcal S[1, j])$). 
    A \emph{longest common extension ($\lce$)} oracle is an oracle that supports both $\mathcal S.\rlce$ and $\mathcal S.\llce$ queries for the string $\mathcal S$. 
\end{definition}
If the string $\mathcal S$ is clear from the context, we simply write $\rlce$ and $\llce$ instead of $\mathcal S.\rlce$ and $\mathcal S.\llce$.

We review further basic concepts that we consider the expert reader to already be familiar with in Appendix~\ref{app:basic concepts}. This includes the Suffix Array, suffix tree and Burrows-Wheeler transform.

% Section 3 dropped for now so that it all fits in exactly 20 pages.
% But we should probably shrink it in some other way?

\section{Order-preserving STPDs}\label{sec:order preserving STPD}

 As introduced above, the starting point of our technique is the choice of a priority function (permutation) $\pi : [n] \rightarrow [n]$ that we use to generalize suffix sorting. 
    In this paper we focus on permutations possessing the following property (but the technique can be made to work on any permutation: we will treat the general case in an extension of this article). 
    %The text $\mathcal T\in \Sigma^n$ is fixed in our description below. 

	\begin{definition}[Order-preserving permutation]\label{def:order-preserving pi}
	Let $\mathcal S \in\Sigma^n$ be a string and 	
        $\pi : [n] \rightarrow [n]$ be a permutation. The permutation $\pi$ is said to be \emph{order-preserving for $\mathcal S$} if and only if  ${\pi(i) < \pi(j)} \wedge \mathcal S[i,i+1] = \mathcal S[j,j+1]$ implies $\pi(i+1) < \pi(j+1)$ for all $i,j \in [n-1]$.
	\end{definition}

    The above property is sufficient and necessary to guarantee the following desirable universal minimization property: if $\mathcal S[i,j] = \mathcal S[i',j']$ are two pattern occurrences, then $\pi(i+k) < \pi(i'+k)$ either holds for all $k\in [0,j-i]$ or for none. 
    It is a simple exercise to show that the lexicographic rank of suffixes, the colexicographic rank of prefixes, and the identity function are order-preserving permutations (but not the only ones). We will prove this formally in Lemmas \ref{lem: stlex <= r} and \ref{lem: stcolex <= bar r}.

    \paragraph{Suffix tree path decomposition.}
    In this paper, a \emph{suffix tree path decomposition} (STPD) is an edge-disjoint collection of node-to-leaf paths covering all the suffix tree's edges built as follows. Since we will never start a path on an implicit suffix tree node, we can equivalently reason about path decompositions of the \emph{suffix trie}.
    We describe how to obtain the STPD associated with a given order-preserving permutation $\pi$, as we believe this will help the reader to better understand our technique. Then, we will make the construction fully formal.
    Let $\mathcal T\in \Sigma^n$ be a text.
    Imagine the process of inserting $\mathcal T$'s suffixes $\mathcal T[i,n]$ (for $i\in [n]$) in a trie in order of increasing $\pi(i)$. 
    The path associated with the first suffix $\mathcal T[\pi^{-1}(1),n]$ in this order is the one starting in the root and continuing with characters $\mathcal T[\pi^{-1}(1),n]$. 
    When inserting the $j$-th ($j>1$) suffix $\mathcal T[\pi^{-1}(j),n]$, let $k = \max_{j'<j}\rlce(\pi^{-1}(j'), \pi^{-1}(j))$ be the longest common prefix between the $j$-th suffix and all the previous suffixes in the order induced by $\pi$. The corresponding new path in the decomposition is the one starting in the suffix tree locus of string $\mathcal T [\pi^{-1}(j), \pi^{-1}(j)+k-1]$ and labeled with string $\mathcal T [\pi^{-1}(j)+k,n]$. In other words, the path associated with $\mathcal T[\pi^{-1}(j),n]$ is its suffix that ``diverges'' from the trie containing the previous suffixes in the order induced by $\pi$. See Figure \ref{fig:STLEX}, where leaves (suffixes) $\mathcal T[i,n]$ are sorted left-to-right in order of increasing $\pi(i)$ where $\pi = ISA$.
    
    The core of our indexing strategy is to store in a colexicographically-sorted array $\PDA$ all the \emph{distinct} integers $\pi^{-1}(j)+k$ obtained in this process (that is, the starting positions of paths in $\mathcal T$). We now formalize this intuition.
    
	%Note that this procedure generalizes suffix sorting: when $\pi = \ISA$, we get the standard suffix sorting, and the STPD array corresponds to irreducible LCP values (the run-length BWT). Our mechanism therefore generalizes the RLBWT.
	%We formalize the idea of order-preserving suffix tree path decompositions.
	
	%\begin{definition}[Edge sample]
	%	Let $\pi : [n] \rightarrow [n]$ be a permutation. 
	%	The sample of suffix trie edge $e$ is defined as $\sigma(e) = \mathtt{argmin}_{i\in [n]}\{ \pi(i)\ :\ \mathcal T[i,n]\mathrm{\ labels\ a\ path\ starting\ in\ }e\}$.
	%\end{definition}

\paragraph{$\LPF$ and $\PDA$ arrays.}
We first introduce the concept of the Generalized Longest Previous Factor Array $\LPF_{\pi}$. Intuitively, this array stores the lengths of the longest common prefixes of a string's suffixes in the order induced by $\pi$. This array generalizes the well-known \emph{Permuted Longest Common Prefix} ($\PLCP$) array (obtained when taking $\pi = \ISA$ to be the lexicographic rank of the text's suffixes) and the $\LPF$ array (obtained when taking $\pi = id$ to be the identity function).
    
	\begin{definition}[Generalized Longest Previous Factor Array $\LPF_{\mathcal S, \pi}$]\label{def:LPF array}
    Let $\mathcal S \in\Sigma^n$ be a string and 	
        $\pi : [n] \rightarrow [n]$ be a permutation. The \emph{generalized Longest Previous Factor array} $\LPF_{\mathcal S, \pi}[1,n]$ associated with $\mathcal S$ and $\pi$ is the integer array that, for $i\in[n]$, is defined as:
		$$
        \LPF_{\mathcal S, \pi}[i] =
        \begin{cases}
            0 & \text{ if}\ \pi(i) = 1,  \\
            \max_{\pi(j)<\pi(i)} \rlce(j,i) & \text{ otherwise.}
        \end{cases}
		$$
		
	\end{definition}

	\begin{definition}[Path Decomposition Array $\PDA_{\mathcal S, \pi}$]\label{def: STPD array}
		Let $\mathcal S \in\Sigma^n$ be a string and 	
        $\pi : [n] \rightarrow [n]$ be an order-preserving permutation. The \emph{(suffix tree) Path Decomposition Array} $\PDA_{\mathcal S, \pi}$ associated with $\mathcal S$ and $\pi$ is the set $\{j = i+\LPF_{\mathcal S, \pi}[i]\ :\ i\in[n]\}$ sorted in colexicographic order of the corresponding string's prefixes $\mathcal S[1,j]$. 
	\end{definition}

        It will always be the case that the indexed string $\mathcal S$ is fixed in our discussion, so we will simply write $\LPF_{\pi}$ and $\PDA_{\pi}$ instead of $\LPF_{\mathcal S, \pi}$ and $\PDA_{\mathcal S, \pi}$, respectively. When also $\pi$ is clear from the context, we will write $\LPF$ and $\PDA$.
        \begin{example}\label{example:st-lex}
        Consider the STPD of Figure \ref{fig:STLEX}. In this example, $\pi(i)$ is the rank of leaf $i$ in lexicographic order (in other words, $\pi = \ISA$): $\pi = (4, 5, 8, 11, 7, 10, 6, 9, 3, 2, 1)$. Then, the corresponding $\LPF$ array corresponds to the $\PLCP$ array: $\LPF = \PLCP = [2,1,4,3,2,1,0,0,1,0,0]$. For instance, $\LPF[1] = 2$ because $\pi(1)=4$ and the suffixes $\mathcal T[j,n]$ with $\pi(j) < \pi(1)$ are those starting in positions $j\in \{11,10,9\}$. Among those, the one with the longest common prefix with $\mathcal T[1,n]$ is $\mathcal T[9,n]$, and their longest common prefix is $2 = \LPF[1]$.

        At this point, the sequence $i+\LPF[i]$ for $i=1, \dots, n$ is equal to $(3,3,7,7,7,7,7,8,10,10,11)$. The \emph{Path Decomposition Array} is the array containing the distinct values in such a sequence, sorted colexicographically: $\PDA = [11, 10, 3, 7, 8]$ (that is, $j$ precedes $j'$ in the order if and only if $\mathcal T[1,j]$ is colexicographically smaller than $\mathcal T[1,j']$). 
        \end{example}

        The STPD associated with $\pi$ and the corresponding array $\PDA_\pi$ are said to be \emph{order-preserving} if $\pi$ is order-preserving.

        The expert reader might have noticed that, for the permutation $\pi$ used in Example \ref{example:st-lex} (lexicographic rank), the values in $\PDA$ are in a one-to-one correspondence with the \emph{irreducible LCP values} (known to be at most $r$ in total, which gives a hint of how we will later prove $|\PDA_\pi| \le r$ for this particular $\pi$). 
        As a matter of fact, our technique generalizes the notion of \emph{irreducible values} to any array $\LPF_\pi$ such that $\pi$ is order-preserving. 
        First, note that the array $\LPF_\pi$ is almost nondecreasing: 

        \begin{lemma}\label{lem: LPF almost nondecreasing}
            For any string $\mathcal S\in\Sigma^n$, 
            if $\pi$ is order-preserving then for every $i>1$ it holds $\LPF_\pi[i] \ge \LPF_\pi[i-1]-1$. In particular, the sequence $i+\LPF_\pi[i]$, for $i=1, \dots, n$, is nondecreasing. 
        \end{lemma}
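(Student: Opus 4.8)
The plan is to prove the inequality $\LPF_\pi[i] \ge \LPF_\pi[i-1] - 1$ directly from the definition of $\LPF_\pi$ together with the order-preserving property of $\pi$, and then observe that adding $i$ to both sides gives monotonicity of $i + \LPF_\pi[i]$. First I would dispose of the trivial cases: if $\LPF_\pi[i-1] \le 1$ then $\LPF_\pi[i] \ge 0 \ge \LPF_\pi[i-1] - 1$ holds automatically since $\LPF_\pi$ is nonnegative; and if $\pi(i-1) = 1$ then again $\LPF_\pi[i-1] = 0$ and we are done. So assume $\ell := \LPF_\pi[i-1] \ge 2$ and $\pi(i-1) > 1$.

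The heart of the argument is the following. Since $\ell = \LPF_\pi[i-1] = \max_{\pi(j) < \pi(i-1)} \rlce(j, i-1) \ge 2$, there is a position $j^*$ with $\pi(j^*) < \pi(i-1)$ and $\rlce(j^*, i-1) = \ell$, so in particular $\mathcal S[j^*, j^*+\ell-1] = \mathcal S[i-1, i-2+\ell]$. Looking at the length-$2$ prefixes of these common blocks, $\mathcal S[j^*, j^*+1] = \mathcal S[i-1, i]$, and $\pi(j^*) < \pi(i-1)$; by the order-preserving property of $\pi$ applied to the pair $(j^*, i-1)$ we get $\pi(j^*+1) < \pi(i)$. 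Now I claim $\rlce(j^*+1, i) \ge \ell - 1$: indeed, shifting the common-block equality $\mathcal S[j^*, j^*+\ell-1] = \mathcal S[i-1, i-2+\ell]$ by one position to the right shows $\mathcal S[j^*+1, j^*+\ell-1] = \mathcal S[i, i-2+\ell]$, a common prefix of length $\ell - 1$ of $\mathcal S[j^*+1, n]$ and $\mathcal S[i, n]$. Since $j^*+1$ is an index with $\pi(j^*+1) < \pi(i)$, it is one of the candidates in the maximization defining $\LPF_\pi[i]$ (note $\pi(i) \ge 2$ here, so we are in the ``otherwise'' branch of Definition~\ref{def:LPF array}), hence $\LPF_\pi[i] \ge \rlce(j^*+1, i) \ge \ell - 1 = \LPF_\pi[i-1] - 1$.

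For the second statement, adding $i$ to both sides of $\LPF_\pi[i] \ge \LPF_\pi[i-1] - 1$ gives $i + \LPF_\pi[i] \ge (i-1) + \LPF_\pi[i-1]$, i.e.\ the sequence $i + \LPF_\pi[i]$ is nondecreasing in $i$, which is exactly the claim.

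The main thing to be careful about is the boundary bookkeeping, in particular making sure the shifted indices $j^*+1$ and $i$ are legitimate positions in $[n]$ and that we land in the correct branch of the $\LPF_\pi$ definition: when $\ell \ge 2$ and $j^* + \ell - 1 \le n$ (which follows since $\rlce(j^*, i-1) = \ell$ forces both blocks to fit inside $\mathcal S$), the index $j^*+1 \le n$ is fine; and $\pi(i) > \pi(j^*+1) \ge 1$ guarantees $\pi(i) \ge 2$ so that $\LPF_\pi[i]$ is genuinely defined as the maximum and not forced to $0$. No deep obstacle is expected here — the order-preserving hypothesis is tailored precisely to make this ``shift by one'' step go through.
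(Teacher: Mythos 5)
Your proof is correct and follows essentially the same argument as the paper's: take the witness $j^*$ of $\LPF_\pi[i-1]$, use the order-preserving property on the equal length-$2$ blocks to get $\pi(j^*+1)<\pi(i)$, and shift by one to obtain $\LPF_\pi[i]\ge \LPF_\pi[i-1]-1$. The only difference is presentational — the paper argues by contradiction while you argue directly (with slightly more careful boundary bookkeeping) — so there is nothing substantive to change.
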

        \begin{proof}
        If, for a contradiction, it were $k = \LPF_\pi[i-1] \ge \LPF_\pi[i] + 2 \ge 2$, then the position $j-1$ with $\pi(j-1)<\pi(i-1)$ such that $\mathcal S[j-1,j+k-2] = \mathcal S[i-1,i+k-2]$ would satisfy 
        $\mathcal S[j-1,j] = \mathcal S[i-1,i]$ hence, by the order-preserving property of $\pi$,
        $\pi(j)<\pi(i)$ would hold. But then, since $\mathcal S[j,j+k-2] = \mathcal S[i,i+k-2]$, we would have $\LPF_\pi[i] \ge \rlce(i,j)\ge k-1 \ge \LPF_\pi[i] + 1$, a contradiction. 
        \end{proof}

        Values where the inequality of Lemma \ref{lem: LPF almost nondecreasing} is strict are of particular interest:
	
        \begin{definition}\label{def:irreducible LPF value}
            Let $\mathcal S\in\Sigma^n$ be a string and
            $\pi:[n]\rightarrow [n]$ be an order-preserving permutation. We say that $i\in [n]$ is an \emph{irreducible $\LPF_{\pi}$ position} if and only if either $i=1$ or $\LPF_{\pi}[i] \neq \LPF_{\pi}[i-1]-1$.
        \end{definition}	

        It is a simple observation that $\LPF_{\pi}[i] = \LPF_{\pi}[i-1]-1$ is equivalent to $(i-1) + \LPF_{\pi}[i-1] = i + \LPF_{\pi}[i]$, from which we obtain: 

        \begin{remark}\label{rem: PDA = irreducible}
            For any order-preserving permutation $\pi:[n] \rightarrow [n]$, 
            $\{x \in \PDA_\pi\} = \{i + \LPF_{\pi}[i]\ :\ i\ \mathrm{is\ an\ irreducible\ }\LPF_\pi \mathrm{\ position}\}$, hence $|\PDA_\pi|$ is equal to the number of irreducible $\LPF_{\pi}$ positions (since $\PDA_\pi$ contains distinct values).
        \end{remark}

        The following lemma generalizes the well-known relation between irreducible \PLCP~values and the Burrows-Wheeler transform \cite[Lemma~4]{KarkkainenMP09} in more general terms.

        \begin{lemma}
            \label{lemma: generalized iLCP vs bwt}
            Let $\mathcal S \in\Sigma^n$ be a string and $\pi : [n] \rightarrow [n]$ be an order-preserving permutation. Let moreover $i>1$ be 
            an irreducible $\LPF_\pi$ position. 
            Then, for every $j>1$ with $\pi(j-1)<\pi(i-1)$ and $\rlce(i,j)=\LPF_\pi[i]$, it holds that $\mathcal S[i-1]\ne\mathcal S[j-1]$.
        \end{lemma}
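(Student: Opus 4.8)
The plan is to argue by contradiction. Suppose that, for some $j$ as in the statement, $\mathcal S[i-1]=\mathcal S[j-1]$; I will derive that $\LPF_\pi[i]=\LPF_\pi[i-1]-1$, which by Definition \ref{def:irreducible LPF value} contradicts the hypothesis that $i$ is an irreducible $\LPF_\pi$ position.

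First I would dispose of a degenerate case: if $\pi(i-1)=1$, then no position has $\pi$-value strictly below $\pi(i-1)$, so no $j$ satisfies the hypotheses and the claim holds vacuously. Hence we may assume $\pi(i-1)>1$, so that $\LPF_\pi[i-1]$ is given by the ``$\max$'' branch of Definition \ref{def:LPF array}. The crucial observation is the elementary identity $\rlce(i-1,j-1)=1+\rlce(i,j)$, which holds precisely because $\mathcal S[i-1]=\mathcal S[j-1]$: indeed $\rlce(i-1,j-1)=\lcp(\mathcal S[i-1,n],\mathcal S[j-1,n])$, the leading characters coincide, and after removing them the common prefix length is exactly $\lcp(\mathcal S[i,n],\mathcal S[j,n])=\rlce(i,j)$. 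Using the hypothesis $\rlce(i,j)=\LPF_\pi[i]$, this yields $\rlce(i-1,j-1)=\LPF_\pi[i]+1$.

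Next, since $j>1$ and $\pi(j-1)<\pi(i-1)$ (so in particular $j-1\neq i-1$, as their $\pi$-values differ), the index $j-1$ is one of the candidates in the maximum defining $\LPF_\pi[i-1]$; therefore $\LPF_\pi[i-1]\ge\rlce(i-1,j-1)=\LPF_\pi[i]+1$, i.e.\ $\LPF_\pi[i]\le\LPF_\pi[i-1]-1$. Finally, I would invoke Lemma \ref{lem: LPF almost nondecreasing} (which is exactly where the order-preserving hypothesis on $\pi$ enters), giving the reverse inequality $\LPF_\pi[i]\ge\LPF_\pi[i-1]-1$. The two bounds together force $\LPF_\pi[i]=\LPF_\pi[i-1]-1$, the desired contradiction.

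I do not anticipate a genuine obstacle: the argument is short, and the only points needing a little care are the vacuous case $\pi(i-1)=1$ and checking that the identity $\rlce(i-1,j-1)=1+\rlce(i,j)$ is valid uniformly — in particular also when $\LPF_\pi[i]=0$ — so that no case split on the value of $\LPF_\pi[i]$ is required.
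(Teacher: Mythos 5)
Your proof is correct and follows essentially the same route as the paper's: assuming $\mathcal S[i-1]=\mathcal S[j-1]$, you use $j-1$ as a candidate in the maximum defining $\LPF_\pi[i-1]$ to get $\LPF_\pi[i-1]\ge\LPF_\pi[i]+1$, combine this with Lemma~\ref{lem: LPF almost nondecreasing} (the only place the order-preserving property enters), and contradict irreducibility. The only difference is cosmetic: the paper splits into the cases $\LPF_\pi[i]=0$ and $\LPF_\pi[i]>0$, whereas you correctly observe that the identity $\rlce(i-1,j-1)=1+\rlce(i,j)$ makes a single uniform argument suffice, and you additionally note the vacuous case $\pi(i-1)=1$.
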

        \begin{proof}
            Let $i>1$ be an irreducible $\LPF_\pi$ position. We analyze separately the cases (i) $\LPF_\pi[i]=0$ and (ii) $\LPF_\pi[i]>0$.
            
            (i) If $\LPF_\pi[i]=0$ and $i>1$ is irreducible, then by definition of irreducible position  $\LPF_\pi[i-1] \neq \LPF_\pi[i]+1$ holds. Combining this with Lemma \ref{lem: LPF almost nondecreasing} we obtain $\LPF_\pi[i-1] < \LPF_\pi[i]+1 = 1$, hence $\LPF_\pi[i-1]=0$.
            Assume now, for a contradiction, that there 
            exists $j>1$ with $\pi(j-1)<\pi(i-1)$, $\rlce(i,j) = \LPF_\pi[i] = 0$, and $\mathcal S[i-1] = \mathcal S[j-1]$. 
            In particular, this implies $\rlce(i-1,j-1) = 1$.
            Then, $0 = \LPF_\pi[i-1] \ge \rlce(i-1,j-1) = 1$, a contradiction. 

            (ii) Let $\LPF_\pi[i]>0$ and $i>1$ be irreducible. Let moreover $j>1$ be such that $\rlce(i,j) = \LPF_\pi[i]$. Assume, for a contradiction, that $\mathcal S[i-1] = \mathcal S[j-1]$. 
            Then, since $\pi(j-1)<\pi(i-1)$, $\LPF_\pi[i-1] \ge \rlce(i-1,j-1) = \LPF_\pi[i]+1$ holds. 
            On the other hand, by Lemma \ref{lem: LPF almost nondecreasing} any order-preserving $\pi$ must satisfy $\LPF_\pi[i-1] \le \LPF_\pi[i]+1$.
            We conclude that $\LPF_\pi[i-1] = \LPF_\pi[i]+1$, 
            hence $i$ cannot be an irreducible position and we obtain a contradiction.
        \end{proof}

        Later we will use Remark~\ref{rem: PDA = irreducible} and Lemma~\ref{lemma: generalized iLCP vs bwt} to bound the size of our compressed suffix tree.

    Lemma \ref{lem:PDA samples} formalizes the following intuitive fact about order-preserving STPDs.
    Consider an STPD built on string $\mathcal S$ using order-preserving permutation $\pi$.
    If for a suffix tree node $u = \locus(\alpha)$, the three nodes $parent(u)$, $u$, and $\child(u,a)$ belong to the same STPD path (for some $a\in \Sigma$), then for any other outgoing label $b\in\Sigma$ of node $u$, string $\alpha\cdot b$ suffixes at least one sampled prefix $\mathcal S[1,t]$, for some $t\in \PDA_\pi$.

    \begin{lemma}\label{lem:PDA samples}
    Let $\mathcal S \in\Sigma^n$ be a string and $\pi : [n] \rightarrow [n]$ be an order-preserving permutation.
    For any one-character right-extension $\alpha\cdot a$ 
    of a right-maximal substring $\alpha$ of $\mathcal S$, define $\pi(\alpha\cdot a) = \min\{\pi(i-1)\ :\ \mathcal S[1,i]\ \mathrm{is\ suffixed\ by\ }\alpha\cdot a\}$. 

    Then, for any two distinct one-character right-extensions $\alpha\cdot a \neq \alpha\cdot b$ of $\alpha$ with $\pi(\alpha\cdot a) < \pi(\alpha\cdot b)$, there exists $t \in \PDA_\pi$ such that $\mathcal S[1,t]$ is suffixed by $\alpha\cdot b$.
    \end{lemma}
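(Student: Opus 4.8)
The plan is to pin down one specific occurrence of $\alpha\cdot b$ whose starting position, shifted by $|\alpha|$, is forced to be an element of $\PDA_\pi$. Write $\ell=|\alpha|$ and set $P_a=\{s : \mathcal S[s,s+\ell]=\alpha\cdot a\}$ and $P_b=\{s : \mathcal S[s,s+\ell]=\alpha\cdot b\}$, the sets of starting positions of the occurrences of $\alpha\cdot a$ and $\alpha\cdot b$ in $\mathcal S$; both are non-empty because $\alpha$ is right-maximal with $\alpha\cdot a,\alpha\cdot b$ substrings of $\mathcal S$. Unfolding the definition in the statement, $\mathcal S[1,i]$ is suffixed by $\alpha\cdot a$ exactly when $i=s+\ell$ for some $s\in P_a$, so $\pi(\alpha\cdot a)=\min_{s\in P_a}\pi(s+\ell-1)$, and symmetrically $\pi(\alpha\cdot b)=\min_{s\in P_b}\pi(s+\ell-1)$. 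Moreover $\mathcal S[1,t]$ is suffixed by $\alpha\cdot b$ iff $\mathcal S[t-\ell,t]=\alpha\cdot b$ iff $t=s+\ell$ for some $s\in P_b$; hence it suffices to exhibit $s\in P_b$ with $s+\ell\in\PDA_\pi$. (We may assume $\alpha\neq\epsilon$, i.e.\ $\ell\ge 1$; the corner case $\alpha=\epsilon$ is analogous.)

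Next I would use the order-preserving property to move all of the bookkeeping from the ``$\alpha$-ends'' (position $\ell-1$ after a start) back to the ``$\alpha$-starts''. All elements of $P_a\cup P_b$ are occurrences of the length-$\ell$ string $\alpha$, so by the universal minimization consequence of Definition~\ref{def:order-preserving pi} together with $\pi$ being a permutation, for any two such occurrences $s\neq s'$ either $\pi(s+k)<\pi(s'+k)$ for all $k\in[0,\ell-1]$ or $\pi(s+k)>\pi(s'+k)$ for all $k\in[0,\ell-1]$; in particular $\pi(s)<\pi(s')\iff\pi(s+\ell-1)<\pi(s'+\ell-1)$. Consequently $\argmin_{s\in P_a}\pi(s+\ell-1)=\argmin_{s\in P_a}\pi(s)=:s_a$ and $\argmin_{s\in P_b}\pi(s+\ell-1)=\argmin_{s\in P_b}\pi(s)=:s_b$, so $\pi(\alpha\cdot a)=\pi(s_a+\ell-1)$ and $\pi(\alpha\cdot b)=\pi(s_b+\ell-1)$; feeding the hypothesis $\pi(\alpha\cdot a)<\pi(\alpha\cdot b)$ back through the same equivalence yields $\pi(s_a)<\pi(s_b)$. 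In particular $\pi(s_b)\ge 2$, so by Definition~\ref{def:LPF array} we have $\LPF_\pi[s_b]=\max_{\pi(j)<\pi(s_b)}\rlce(j,s_b)$.

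The crux is to show $\LPF_\pi[s_b]=\ell$, which immediately finishes the argument: taking $i=s_b$ in Definition~\ref{def: STPD array} gives $t:=s_b+\ell=s_b+\LPF_\pi[s_b]\in\PDA_\pi$, and $\mathcal S[t-\ell,t]=\mathcal S[s_b,s_b+\ell]=\alpha\cdot b$, so $\mathcal S[1,t]$ is suffixed by $\alpha\cdot b$, as required. For the lower bound $\LPF_\pi[s_b]\ge\ell$: the index $s_a$ is admissible in the maximum since $\pi(s_a)<\pi(s_b)$, and $\rlce(s_a,s_b)\ge\ell$ because $\mathcal S[s_a,s_a+\ell-1]=\alpha=\mathcal S[s_b,s_b+\ell-1]$. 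For the upper bound $\LPF_\pi[s_b]\le\ell$: if some $j$ with $\pi(j)<\pi(s_b)$ had $\rlce(j,s_b)\ge\ell+1$, then $\mathcal S[j,j+\ell]=\mathcal S[s_b,s_b+\ell]=\alpha\cdot b$, i.e.\ $j\in P_b$, contradicting $\pi(s_b)=\min_{s\in P_b}\pi(s)$.

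The only genuinely delicate point is the index shift in the second paragraph: $\pi(\alpha\cdot a)$ and $\pi(\alpha\cdot b)$ are defined through $\pi$ evaluated at the \emph{last} position of the $\alpha$-part of an occurrence, whereas membership $s+\ell\in\PDA_\pi$ is driven by $\LPF_\pi$ evaluated at the \emph{first} position of the occurrence of $\alpha\cdot b$. The order-preserving property is precisely the tool that converts a $\pi$-comparison at one offset into the same comparison at another offset; without it, $\pi(\alpha\cdot a)<\pi(\alpha\cdot b)$ would not give $\pi(s_a)<\pi(s_b)$, which is exactly what powers the lower bound on $\LPF_\pi[s_b]$. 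Everything else is routine unwinding of the definitions of $\LPF_\pi$, $\PDA_\pi$, and $\rlce$.
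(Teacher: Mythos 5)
Your proof is correct and follows essentially the same route as the paper's: you pick the occurrence of $\alpha\cdot b$ realizing $\pi(\alpha\cdot b)$, use the order-preserving property to transfer the $\pi$-comparison from the end of the $\alpha$-part to its starting position $s_b$, then sandwich $\LPF_\pi[s_b]=|\alpha|$ (lower bound via the minimizing occurrence of $\alpha\cdot a$, upper bound via minimality of $\pi(\alpha\cdot b)$), so that $s_b+|\alpha|\in\PDA_\pi$ -- exactly the paper's argument with the index bookkeeping made explicit. Your deferred corner case $\alpha=\epsilon$ is in fact even easier (the occurrence of $b$ minimizing $\pi$ at the position of $b$ itself has $\LPF_\pi=0$, so it lies in $\PDA_\pi$ without using the hypothesis), and the paper's own proof does not treat it separately either.
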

    \begin{proof}
    Let $a,b\in\Sigma$ with $\pi(\alpha\cdot a) < \pi(\alpha\cdot b)$. 
    Let $j',j$ be such that $\pi(j') = \pi(\alpha\cdot a) < \pi(\alpha\cdot b) = \pi(j)$.
    In particular, $\mathcal S[1,j'+1]$ is suffixed by $\alpha\cdot a$ and  $\mathcal S[1,j+1]$ is suffixed by $\alpha\cdot b$.  
    Then, by the order-preserving property of $\pi$ it holds that $\pi(j'-|\alpha|+1) < \pi(j-|\alpha|+1)$, hence
    $\LPF_\pi[j-|\alpha|+1] \ge \rlce(j-|\alpha|+1, j'-|\alpha|+1) = |\alpha|$. On the other hand, 
    by definition of $\pi(\alpha\cdot b) = \min\{\pi(i-1)\ :\ \mathcal S[1,i]\ \mathrm{is\ suffixed\ by\ }\alpha\cdot b\}$, 
    we also have that $\LPF_\pi[j-|\alpha|+1] \le |\alpha|$. We conclude $\LPF_\pi[j-|\alpha|+1] = |\alpha|$. 
    But then, $(j-|\alpha|+1) + \LPF_\pi[j-|\alpha|+1] = j+1 \in \PDA_\pi$. Since, as observed above, $\mathcal S[1,j+1]$ is suffixed by $\alpha\cdot b$, the claim follows by taking $t=j+1$.
    \end{proof}

We conclude with the following result, implying that $|\PDA_\pi|$ is a reachable compressibility measure for any order-preserving $\pi$. 

\begin{theorem}\label{thm: compressing PDA}
    Let $\mathcal S \in\Sigma^n$ be a string over integer alphabet $\Sigma$ and let $\pi : [n] \rightarrow [n]$ be an order-preserving permutation. Then, $\mathcal S$ can be compressed in $O(|\PDA_\pi|\log n)$ bits of space. 
\end{theorem}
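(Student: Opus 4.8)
The goal is to show that a string $\mathcal S$ of length $n$ can be encoded in $O(|\PDA_\pi|\log(n\sigma))$ bits, for any order-preserving $\pi$. The natural route is to exhibit an explicit factorization (parse) of $\mathcal S$ with $O(|\PDA_\pi|)$ phrases, where each phrase can be described by $O(\log(n\sigma))$ bits --- either as a single literal character or as a pointer to an earlier occurrence. The $\LPF_\pi$ array already gives us such a parse: it is an overlapping bidirectional parse in which the phrase boundaries coincide exactly with the irreducible positions, and by Remark~\ref{rem: PDA = irreducible} there are exactly $|\PDA_\pi|$ of these. So the encoding I would use is: store, for each irreducible position $i$, the pair $(i, \LPF_\pi[i])$ together with one ``source pointer'' --- an index $j$ with $\pi(j)<\pi(i)$ realizing $\rlce(i,j)=\LPF_\pi[i]$ --- and also store the single character $\mathcal S[i+\LPF_\pi[i]]$ (the character at which $\mathcal S[i..n]$ diverges from all earlier suffixes in $\pi$-order). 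This is $O(\log n + \log\sigma) = O(\log(n\sigma))$ bits per irreducible position, so $O(|\PDA_\pi|\log(n\sigma))$ bits in total.

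\textbf{Key steps.} First I would argue that this data determines $\mathcal S$. Consider the positions $i_1 < i_2 < \cdots < i_p$ (with $p=|\PDA_\pi|$) that are irreducible, and note $i_1 = 1$. By Lemma~\ref{lem: LPF almost nondecreasing}, between two consecutive irreducible positions $i_t$ and $i_{t+1}$ the quantity $i+\LPF_\pi[i]$ is constant, so $\LPF_\pi[i] = \LPF_\pi[i_t] - (i - i_t)$ for all $i\in[i_t, i_{t+1}-1]$; hence $\LPF_\pi$ is fully determined on all of $[n]$ by its values at the irreducible positions. Second --- the heart of the decoding --- I would show $\mathcal S$ can be reconstructed left to right. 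The substring $\mathcal S[i_t, i_t+\LPF_\pi[i_t]-1]$ equals $\mathcal S[j_t, j_t + \LPF_\pi[i_t]-1]$ where $j_t$ is the stored source pointer; and the next character $\mathcal S[i_t + \LPF_\pi[i_t]]$ is stored explicitly. Since $i_{t+1} = i_t + \LPF_\pi[i_t] + 1$ (consecutive irreducible positions: the ``phrase'' covers positions $i_t$ through $i_t+\LPF_\pi[i_t]$), the union of the ranges $[i_t,\, i_t+\LPF_\pi[i_t]]$ over $t$ tiles $[1,n]$ with no gaps, so every character of $\mathcal S$ is covered. It remains to check the copies are consistent, i.e. that when we need $\mathcal S[i_t, i_t+\LPF_\pi[i_t]-1]$ the positions $j_t, \dots, j_t+\LPF_\pi[i_t]-1$ have already been decoded: this is where I would invoke the order-preserving structure --- since $\pi(j_t)<\pi(i_t)$ and $\mathcal S[j_t,j_t+1]=\mathcal S[i_t,i_t+1]$ etc., the source region starts at a suffix that appears earlier in $\pi$-order, and an induction on $\pi$-rank (exactly as in the correctness proof of LZ77-style overlapping parses) shows the needed region is already available. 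The characters we genuinely need ``from scratch'' are precisely the $p$ explicitly stored divergence characters plus the first character $\mathcal S[1]$ (handled as a degenerate phrase, $\LPF_\pi[1]=0$).

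\textbf{Main obstacle.} The delicate point is the well-foundedness of the overlapping copy operations: with overlaps allowed, a phrase may copy from a region that partially overlaps itself or a later phrase, and one must be sure the decoding order (some topological order on phrases, induced by $\pi$) is acyclic and that each source region is decoded before it is used. I expect to handle this by processing phrases not in text order but in order of increasing $\pi(i_t)$, and proving by induction on this order that when phrase $t$ is processed all of $\mathcal S[j_t .. j_t+\LPF_\pi[i_t]-1]$ lies in already-processed phrases --- using repeatedly that $\rlce(i_t,j_t)=\LPF_\pi[i_t]$ together with order-preservation to push the $\pi$-rank of the relevant source suffixes strictly below $\pi(i_t)$. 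A secondary, purely bookkeeping concern is that one must also store, or recover, which positions are irreducible and the mapping to colexicographic order used in $\PDA_\pi$; but $|\PDA_\pi|=p$ positions can be listed outright in $O(p\log n)$ bits, absorbed into the stated bound, so this is not a real difficulty.
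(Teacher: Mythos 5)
Your encoding is exactly the paper's (one quadruple per irreducible $\LPF_\pi$ position: start, length, a source with smaller $\pi$, and the divergence character), and the space analysis is fine. The gap is in the decoding argument, precisely at the point you flag as the main obstacle. Your proposed resolution --- process whole phrases in increasing order of $\pi(i_t)$ and prove that the source region $[s_t,\,s_t+\LPF_\pi[i_t]-1]$ always lies inside already-processed phrases --- is false. Order-preservation only gives you the position-wise inequality $\pi(s_t+k)<\pi(i_t+k)$, which compares each source position to the \emph{corresponding position inside phrase $t$}, not to the phrase's start value $\pi(i_t)$; a source position may be covered only by phrases whose starts have larger $\pi$ than $\pi(i_t)$, or only by phrase $t$ itself. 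Concrete counterexample: take $\mathcal S=a^{n-1}\$$ and $\pi=\ISA$. Then $\LPF_\pi=\PLCP=[n-2,n-3,\dots,1,0,0]$, the only irreducible positions are $1$ and $n$, and the phrase at $1$ has copy region $[1,n-2]$ with source region $[2,n-1]$ (the source $s_1=2$ lies to the \emph{right} of the phrase start). Since $\pi(n)<\pi(1)$, your order processes the phrase at $n$ first (just the character $\$$) and then the phrase at $1$, whose source region $[2,n-1]$ is not contained in any previously processed phrase --- it is contained only in the phrase itself, shifted right, so your ``reconstruct left to right'' step also stalls at the very first character. (A smaller slip: $i_{t+1}=i_t+\LPF_\pi[i_t]+1$ should be $i_{t+1}\le i_t+\LPF_\pi[i_t]+1$; the reducible run may end before the $\LPF_\pi$ values reach $0$, as in the paper's Figure with $\LPF[9,14]=(6,5,4,3,2,4)$. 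Coverage of $[1,n]$ still holds, so this part is repairable.)

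The repair is to carry out the induction at the level of individual positions rather than phrases, which is what the paper does: to extract $\mathcal S[j]$, either $j=i+\LPF_\pi[i]$ for some stored quadruple (explicit character), or $j$ lies in the copy region of some quadruple and $\mathcal S[j]=\mathcal S[j_1]$ with $j_1=s_i+(j-i)$, where order-preservation gives $\pi(j_1)<\pi(j)$; iterating strictly decreases the $\pi$-value, so the chain terminates at an explicitly stored character after at most $n$ steps. Equivalently, one may decode all positions in increasing order of $\pi(j)$ --- but the ordering must be by the $\pi$-rank of the \emph{position being decoded}, not by the $\pi$-rank of the start of the phrase containing it. In the counterexample above this works: $\mathcal S[4]$ is explicit, and $\mathcal S[3],\mathcal S[2],\mathcal S[1]$ follow by a chain of copies with strictly decreasing $\pi$. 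With this substitution your write-up becomes essentially the paper's proof.
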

\begin{proof}
    Our compressed representation is as follows. For each irreducible $\LPF_\pi$ position $i$, we store the quadruple $(i,\LPF_\pi[i],s_i, \mathcal S[i+\LPF_\pi[i]])$, where $s_i$ is the ``source'' of $i$, that is, any integer $s_i \in [n]$ such that $\pi(s_i) < \pi(i)$ and $\mathcal S[s_i,s_i+\LPF_\pi[i]-1] = \mathcal S[i,i+\LPF_\pi[i]-1]$ (if $\LPF_\pi[i]=0$, the latter condition is true for any $s_i\in[n]$). This set of quadruples takes $O(|\PDA_\pi|\log(n\sigma)) = O(|\PDA_\pi|\log n)$ bits of space by Remark \ref{rem: PDA = irreducible}. 

    We show how to reconstruct any character $\mathcal S[j]$ given $j \in [n]$ and the above representation. We consider two cases. 
    
    (i) If $j=i+\LPF_\pi[i]$ for some quadruple $(i,\LPF_\pi[i],s_i, \mathcal S[i+\LPF_\pi[i]])$, then $\mathcal S[j] = \mathcal S[i+\LPF_\pi[i]]$ (explicitly stored) and we are done. 

    (ii) Otherwise, find any quadruple $(i,\LPF_\pi[i],s_i, \mathcal S[i+\LPF_\pi[i]])$ such that $i \le j < i + \LPF_\pi[i]$ 
    %\sunghwan{I think it can be $j=i+\LPF_\pi[i]$ in the case of decreasing run ending with 0.}
    (there must exist at least one such quadruple since $j$ is either irreducible or reducible). 
    Let $j_1 = s_i + (j-i)$.
    By the definition of $s_i$, it holds that $\mathcal S[j] = \mathcal S[j_1]$. Moreover, since $\mathcal S[s_i,s_i+\LPF_\pi[i]-1] = \mathcal S[i,i+\LPF_\pi[i]-1]$, $\pi(s_i) < \pi(i)$, and by the order-preserving property of $\pi$, it follows that $\pi(j_1) < \pi(j)$. We repeat recursively the above procedure to extract $\mathcal S[j_1]$. 
    Let $j, j_1, j_2, \dots$ be the sequence of text positions, with $\mathcal S[j] = \mathcal S[j_1] = \mathcal S[j_2] = \dots$, obtained by repeating recursively the above procedure. Since $\pi(j) > \pi(j_1) > \pi(j_2) > \dots$, and since $\pi(q) \in [n]$ for all $q\in[n]$, it follows that the above recursive procedure must stop in case (i) after at most $n$ steps.  
\end{proof}

We proceed as follows. 
First, we discuss the notable order-preserving permutation given by the lexicographic rank of the text's suffixes (Section \ref{sec:stlex}). This permutation will enable us to support most suffix tree operations in $O(r)$ space on top of the text oracle. Then, in Section \ref{sec: general STPD index} we describe a general locating mechanism working on any order-preserving STPD. 
After that, we focus on another particular order-preserving permutation: the
colexicographic rank of the text's prefixes (Section \ref{sec:stcolex}). This case is particularly interesting because it allows us to simplify the locating algorithm of Section~\ref{sec: general STPD index}, and will lead to a practical solution (used in our experiments in Section~\ref{sec:experiments}). Finally, in Section~\ref{sec:stpos} we consider yet another remarkable order-preserving permutation: identity. This permutation will allow us to locate efficiently the leftmost and rightmost pattern occurrences. 

\subsection{Lexicographic rank ($\stlex$): suffix tree navigation}\label{sec:stlex}

Figure \ref{fig:STLEX} depicts the STPD obtained by choosing $\pi = \ISA=\SA^{-1}$ to be the Inverse Suffix Array (in this subsection, $\pi$ will always be equal to $\ISA$). We denote with $\stlexm = \PDA_\pi$ the path decomposition array associated with this permutation $\pi$. Similarly, $\stlexp$ denotes the path decomposition array associated with the dual permutation $\bar\pi(i) = n-\ISA[i]+1$. The following properties hold: 

\begin{lemma}\label{lem: stlex <= r}
    Let $\mathcal T\in \Sigma^n$ be a text.
    The permutations $\pi, \bar\pi$ defined as	$\pi(i) = \ISA[i]$ and $\bar\pi(i) = n-\ISA[i]+1$ for $i\in[n]$ are order-preserving for $\mathcal T$. Furthermore, it holds that $|\stlexm| \le r$ and $|\stlexp| \le r$.
\end{lemma}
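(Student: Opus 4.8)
The statement has two independent parts: that $\pi=\ISA$ and $\bar\pi(i)=n-\ISA[i]+1$ are order-preserving, and that $|\stlexm|\le r$ and $|\stlexp|\le r$.

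The order-preserving part is a short exercise from the definition of (co)lexicographic order. If $\pi(i)<\pi(j)$ and $\mathcal T[i,i+1]=\mathcal T[j,j+1]$, then in particular $\mathcal T[i]=\mathcal T[j]$, so comparing the suffixes $\mathcal T[i,n]$ and $\mathcal T[j,n]$ — which the hypothesis ranks in this order — reduces, after deleting the shared first character, to comparing $\mathcal T[i+1,n]$ and $\mathcal T[j+1,n]$; since all suffixes of a text are pairwise distinct, this comparison is strict, so $\pi(i+1)<\pi(j+1)$. The same reasoning applies to $\bar\pi$, which is just $\ISA$ with respect to the reversed alphabet order — and the order-preserving property of Definition~\ref{def:order-preserving pi} does not depend on the chosen total order on $\Sigma$, since its hypothesis is a plain equality of two-character windows.

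For the size bound I would fix $\pi=\ISA$, so that $\LPF_\pi$ is the $\PLCP$ array, and use Remark~\ref{rem: PDA = irreducible} to reduce the claim to bounding by $r$ the number of irreducible $\LPF_\pi$ positions. The plan is to inject this set into $\{1\}\cup\{k\in[2,n]\ :\ \BWT[k]\ne\BWT[k-1]\}$, which has exactly $r$ elements. I would send an irreducible position $i$ to the lexicographic rank $k=\ISA[i]$ of $\mathcal T[i,n]$, which is injective since $\ISA$ is a bijection. If $k=1$ we are done. If $k\ge2$ and either $i=1$ or the lexicographic predecessor of $\mathcal T[i,n]$ is the suffix $\mathcal T[1,n]$, then the relevant BWT position equals $\$$ while its neighbour does not, so $k$ is a run boundary; these subcases are routine. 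In the main case $k\ge2$, $i>1$, and the lexicographic predecessor of $\mathcal T[i,n]$ is some $\mathcal T[j,n]$ with $j>1$; this predecessor realizes the longest common prefix among all lexicographically smaller suffixes, so $\rlce(i,j)=\PLCP[i]=\LPF_\pi[i]$. The crux is to show $\BWT[k]=\mathcal T[i-1]\ne\mathcal T[j-1]=\BWT[k-1]$: if these were equal, prepending the common character to the ordered pair $\mathcal T[j,n]<_{\mathrm{lex}}\mathcal T[i,n]$ would give $\mathcal T[j-1,n]<_{\mathrm{lex}}\mathcal T[i-1,n]$, i.e. $\pi(j-1)<\pi(i-1)$, and then Lemma~\ref{lemma: generalized iLCP vs bwt} applied with this $j$ would force $\mathcal T[i-1]\ne\mathcal T[j-1]$, a contradiction. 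Hence $k$ is a run boundary, the injection is established, and $|\stlexm|\le r$.

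The bound $|\stlexp|\le r$ then follows by the mirror-image argument with $\bar\pi$ in place of $\pi$: now $\LPF_{\bar\pi}$ is the $\PLCP$-type array built from lexicographic \emph{successors} — equivalently, the ordinary $\PLCP$ array of $\mathcal T$ read with the reversed alphabet order, whose BWT is the reverse of $\BWT$ and therefore still has $r$ runs — and the symmetric computation injects its irreducible positions into a set of size $r$. The only real difficulty I anticipate is bookkeeping rather than any single inequality: choosing the right ``source'' $j$ (the lexicographic predecessor, resp. successor) and verifying the hypothesis $\pi(j-1)<\pi(i-1)$ that unlocks Lemma~\ref{lemma: generalized iLCP vs bwt}, and then checking that the handful of degenerate positions do not overflow the target set of size $r$ — all of which become transparent once one fixes the cyclic convention $\BWT[k]=\mathcal T[((\SA[k]-2)\bmod n)+1]$.
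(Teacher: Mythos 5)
Your proposal is correct and follows essentially the same route as the paper: the order-preserving part is the same one-line suffix-comparison argument, and the size bound reduces via Remark~\ref{rem: PDA = irreducible} to irreducible positions, injects them through $\ISA$ into BWT run boundaries, handles the $\$$/boundary cases, and uses Lemma~\ref{lemma: generalized iLCP vs bwt} (with the lexicographic predecessor as the source $j$) for the key inequality $\BWT[\ISA[i]]\ne\BWT[\ISA[i]-1]$. Your phrasing of the $\stlexp$ case via the reversed-alphabet BWT is just a cosmetic restatement of the paper's symmetric mapping to run ends.
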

\begin{proof}
    For every $i,j\in[n-1]$ such that $\mathcal{T}[i,n]<_{\text{lex}}\mathcal{T}[j,n]$ and $T[i]=T[j]$, it holds that $\mathcal{T}[i+1,n]<_{\text{lex}}\mathcal{T}[j+1,n]$ by definition of the lexicographic order. 
    This proves the order-preserving property for $\pi$ and $\bar\pi$.
    
    By Remark~\ref{rem: PDA = irreducible}, it holds that $|\stlexm|$ is equal to the number of irreducible $\LPF_\pi$ positions. 
    We bijectively map each irreducible $\LPF_\pi$ position $i$ to $\BWT[\ISA[i]]$ and show that $\BWT[\ISA[i]]$ is the beginning of an equal-letter BWT run. 
    This will prove $|\stlexm|\le r$.
    Symmetrically, to prove $|\stlexp|\le r$ we bijectively map each irreducible $\LPF_{\bar\pi}$ position $i$ to $\BWT[\ISA[i]]$ and show that $\BWT[\ISA[i]]$ is the \emph{end} of an equal-letter BWT run. Since this case is completely symmetric to the one above, we omit its proof.
    Let $i$ be an irreducible $\LPF_\pi$ position. We analyze the cases (i) $i=1$ and (ii) $i>1$ separately.
    \begin{compactenum}[(i)]
        \item If $i=1$, then $\BWT[\ISA[i]] = \$$. Since the symbol $\$$ occurs only once in $\mathcal T$, $\BWT[\ISA[i]]$ is the beginning of an equal-letter BWT run. 
    
        \item If $i>1$, then either $\ISA[i] = 1$ and therefore $\BWT[\ISA[i]] = \BWT[1]$ is the beginning of an equal-letter BWT run, or $\ISA[i] > 1$. In the latter case, let $j = \SA[\ISA[i]-1]$ (in particular, $\ISA[j] = \ISA[i] - 1$). 
        If $j=1$, then $\BWT[\ISA[j]] = \$$. Since the symbol $\$$ occurs only once in $\mathcal T$, $\BWT[\ISA[i]]$ is the beginning of an equal-letter BWT run. 
        In the following, we can therefore assume $i>1$ and $j>1$.
    \end{compactenum}
    Then, by the definition of $\LPF_\pi = \PLCP$, it holds that $\LPF_\pi[i] = \rlce(i,j)$. We show that it must be $\BWT[\ISA[i]] = \mathcal T[i-1] \neq \mathcal T[j-1] = \BWT[\ISA[j]] = \BWT[\ISA[i]-1]$, which proves the main claim. If, for contradiction, it was $\mathcal T[i-1] = \mathcal T[j-1]$, then $\ISA[j] < \ISA[i]$ would imply $\pi(j-1) = \ISA[j-1] < \ISA[i-1] = \pi(i-1)$.
    But then, Lemma~\ref{lemma: generalized iLCP vs bwt} would imply $\mathcal T[i-1] \neq \mathcal T[j-1]$, a contradiction. 
\end{proof}

\subsubsection{Data structure.}
Let $\mathcal T\in \Sigma^n$ be a text. 
We describe a data structure of $O(r)$ space supporting the suffix tree queries of Theorem \ref{thm:st operations} on top of any text oracle supporting Longest Common Extension (and, optionally, fingerprinting) queries on $\mathcal T$. We store the following components. 

\begin{compactenum}[(1)]
    \item The array $\stlex$ containing the integers $\{j=(i-1)\ :\ i\in \stlexm \cup \stlexp \cup \{n+1\}\} \subseteq \{0,\dots, n\}$ sorted increasingly according to the colexicographic order of the corresponding text prefixes $\mathcal T[1,j]$ (if $j=0$, then $\mathcal T[1,j]$ is the empty string). Note that, by Lemma \ref{lem: stlex <= r}, it holds that $|\stlex| \le 2r+1$. %\sout{Note also that, since $\$$ is the lexicographically-smallest character, then $\stlex[1] = n$.} \sunghwan{This might not be true if we have $0\in\stlex$?}\nicola{right:) modified also below and added remark 4 and def 10.}

    \item Let $\# \notin \Sigma$ be a new character not appearing in $\Sigma$ (taken to be lexicographically larger than all the characters in $\Sigma$). We store a string $\mathcal L[1,|\stlex|] \in (\Sigma \cup \{\#\})^{|\stlex|}$ defined as 
    \[
    \mathcal L[i] =
    \begin{cases}
        \#                      & \text{if }\stlex[i]=n,\\
       \mathcal T[\stlex[i]+1]  & \text{otherwise.}
    \end{cases}
    \] 

We store $\mathcal L$ with a wavelet tree \cite{navarro2014wavelet}, taking $O(|\stlex|)$ space and supporting rank and select operations in $O(\log \sigma)$ time. 

\item The string $\mathcal F[1,|\stlex|] \in (\Sigma \cup \{\#\})^{|\stlex|}$ obtained by sorting lexicographically the characters of $\mathcal L$.
We store $\mathcal F$ with a wavelet tree supporting rank and select operations in $O(\log \sigma)$ time. 

\item Let $\mathcal{LF}$ and $\mathcal{FL}$ be the permutations of $[|\stlex|]$ defined as follows. 
For any integers $i,j,k$, if $\mathcal L[i] = a$ is the $k$-th occurrence of $a$ in $\mathcal L$ and $\mathcal F[i] = a$ is the $k$-th occurrence of $a$ in $\mathcal F$, then $\mathcal{LF}(i) = j$ and $\mathcal{FL}(j) = i$ (note that $\mathcal{FL} = \mathcal{FL}^{-1}$). 
$\mathcal{LF}$ and  $\mathcal{FL}$ can be evaluated in $O(\log\sigma)$ time by running a constant number of rank and select operations on $\mathcal L$ and $\mathcal F$. 
\end{compactenum}

\begin{definition}\label{def:stlex'}
    We denote with $\stlex'$ the permutation of $\stlex$ defined as follows: for any $j\in [|\stlex|]$, $\stlex'[j] = \stlex[\mathcal{FL}(j)]$ or, equivalently (since $\mathcal{FL}$ and $\mathcal{LF}$ are inverse of each other), $\stlex[j] = \stlex'[\mathcal{LF}(j)]$.
\end{definition}

We store one Range Minimum and one Range Maximum data structure on the array $\pi(\stlex')$, supporting queries in constant time in $O(|\stlex|)$ bits of space (we do not need to store $\stlex'$ explicitly).

\begin{remark}
    While this is not fundamental for our discussion below, it may be helpful from an intuitive point of view to observe that string $\mathcal L$, once removed character $\#$ from it, is a subsequence of length $|\stlex|-1$ of the colexicographic Burrows-Wheeler transform ($\BWT$). Similarly, permutations $\mathcal{LF}$ and $\mathcal{FL}$ are the counterpart of functions $LF$ and $FL$ typically used with the BWT.
\end{remark}

\subsubsection{Node representation.}

Suffix tree navigation operations are supported on a particular representation of (explicit) suffix tree nodes that we describe next. First, Lemma \ref{lem:PDA samples} immediately implies:

\begin{corollary}\label{cor:stlex - st nodes}
    Let 
    $u$ be an explicit suffix tree node and $\alpha$ be such that $u= \locus(\alpha)$. Then, $\alpha$ suffixes $\mathcal T[1,j]$ for at least one value $j\in \stlex$.
\end{corollary}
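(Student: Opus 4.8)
The plan is to split on whether the explicit node $u$ is a leaf or an internal node of the suffix tree of $\mathcal T$, and in each case exhibit an explicit witness $j\in\stlex$ with $\mathcal T[1,j]$ suffixed by $\alpha$. This is essentially a direct consequence of Lemma~\ref{lem:PDA samples}, so the proof should be short; the point is to reduce both cases to what is already available.

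First, if $u$ is a leaf, then $\alpha=\mathcal T[i,n]$ is a suffix of $\mathcal T$ for some $i\in[n]$, hence $\alpha$ suffixes $\mathcal T[1,n]=\mathcal T$. Since $n+1\in\stlexm\cup\stlexp\cup\{n+1\}$ by construction, the element $j=(n+1)-1=n$ lies in $\stlex$, and this already settles the leaf case (and, as a special instance, the root $\alpha=\varepsilon$). Second, if $u$ is internal, then $\alpha$ is right-maximal because it admits at least two distinct one-character right-extensions that occur in $\mathcal T$. Recalling that throughout this subsection $\pi=\ISA$, let $\alpha a$ be the right-extension of $\alpha$ minimizing $\pi(\alpha a)$ among all right-extensions, and let $\alpha b\neq\alpha a$ be any other right-extension; then $\pi(\alpha a)<\pi(\alpha b)$. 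Applying Lemma~\ref{lem:PDA samples} with $\mathcal S=\mathcal T$ gives some $t\in\PDA_\pi=\stlexm$ such that $\mathcal T[1,t]$ is suffixed by $\alpha b$. Stripping the trailing character $b$ (note $|\alpha b|=|\alpha|+1\ge 1$, so $t\ge 1$), we get that $\mathcal T[1,t-1]$ is suffixed by $\alpha$; and by the definition of $\stlex$ as $\{i-1:i\in\stlexm\cup\stlexp\cup\{n+1\}\}$, the value $j=t-1$ belongs to $\stlex$. This completes the proof.

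There is no real obstacle here: the two structural facts used—that an explicit internal node has two distinct right-extensions, and that a leaf corresponds to a suffix of $\mathcal T$—are exactly what Lemma~\ref{lem:PDA samples} and the extra sample $n$ (built into $\stlex$ through the element $n+1$) are tailored to. The only point worth stating carefully is that Lemma~\ref{lem:PDA samples} needs \emph{two} distinct right-extensions and is therefore not applicable to a leaf (which has none), which is precisely why the leaf case is handled separately by the explicit inclusion of $n$ in $\stlex$.
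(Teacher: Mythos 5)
Your proof is correct and follows the same route the paper intends: the paper states the corollary as an immediate consequence of Lemma~\ref{lem:PDA samples}, and your internal-node case is exactly that implication (pick the extension minimizing $\pi$, get a sample $t\in\stlexm$ suffixed by another extension $\alpha b$, and strip $b$ to land in $\stlex$). Your separate treatment of leaves via the extra element $n\in\stlex$ is precisely the detail the paper leaves implicit (cf.\ Remark~\ref{rem: range of leaves}), so there is nothing to add.
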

% \begin{proof}
%     If $u$ is a leaf, then $\alpha = \mathcal T[i,n]$ for some $i\in [n]$. Then, $\alpha$ suffixes $\mathcal T[1,n]$ and we are done, since $n\in \stlex$.

%     Otherwise, $u$ is an internal explicit suffix tree node hence $\alpha$ is right-maximal. Let $a,c\in \Sigma$ such that both $\alpha\cdot a$ and $\alpha\cdot c$ occur in $\mathcal T$, with $a<c$. Let moreover $\mathcal T[i,n]$ be the lexicographically-smallest suffix being prefixed by $\alpha \cdot c$. Then, $\PLCP[i] = \LPF_\pi[i] = |\alpha|$, hence $i + \LPF_\pi[i] = i+|\alpha| \in \stlexm$. But then, $i+|\alpha|-1 \in \stlex$. Since $\alpha$ suffixes $\mathcal T[1,i+|\alpha|-1]$, we are done.
% \end{proof}

In the Following, we describe our representation of suffix tree nodes.

\begin{definition}[Suffix tree node representation]\label{def: ST node representation}
We represent suffix tree node $u = \locus(\alpha)$ with the tuple of integers 
$$
R_u = (b,e,i_{min}, i_{max}, |\alpha|), \text{ where }
$$ 
\begin{itemize}
    \item $\stlex[b,e]$ is the colexicographic range of $\alpha$ in $\stlex$ (by Corollary \ref{cor:stlex - st nodes}, $e \ge b$ always holds);
    \item $\mathcal T[i_{min}, i_{min} + |\alpha|-1] = \alpha$ is the occurrence of $\alpha$ minimizing $\pi(i_{min})$, that is, $\mathcal T[i_{min}, n]$ is the lexicographically-smallest suffix prefixed by $\alpha$;
    \item $\mathcal T[i_{max}, i_{max} + |\alpha|-1] = \alpha$ is the occurrence of $\alpha$ maximizing $\pi(i_{max})$, that is, $\mathcal T[i_{max}, n]$ is the lexicographically-largest suffix prefixed by $\alpha$; and
    \item $|\alpha|$ is the string depth of $u$.
\end{itemize}    
\end{definition}

\begin{remark}\label{rem: range of leaves}
    Observe that $n \in \stlex[1,2]$.   
    In particular, if $1\in \stlexm \cup \stlexp$ then $0 \in \stlex$. Since $0$ corresponds to the text's prefix $\mathcal T[1,0]$ (the empty string), in this case $\stlex[1]=0$ because the empty string is smaller than any other text prefix. In that case, $\stlex[2]=n$ because $\mathcal T[1,n]$ (ending with $\$$) is the second colexicographically-smallest sampled prefix. If, on the other hand, $1\notin \stlexm \cup \stlexp$, then $\stlex[1]=n$. 
\end{remark}

Based on the above remark: 

\begin{definition}\label{def: range of leaves}
    We denote with $i^*\in\{1,2\}$ the integer such that $\stlex[i^*]=n$.
\end{definition}

Letting $u$ be a leaf, observe that $\alpha(u)$ ends with character $\$$. It follows that the colexicographic range of $\alpha(u)$ in $\stlex$ is $\stlex[i^*,i^*]$. This will be used later. 

\subsubsection{Suffix tree operations.} 

Next, we show how to support a useful subset of suffix tree operations on our data structure. 
This will prove Theorem \ref{thm:st operations}.
In the description below, suffix tree operations will take as input node representations ($R_u$) instead of nodes themselves ($u$).
Recall that we are assuming we have access to a text oracle supporting longest common extension ($\lce$) and random access queries on $\mathcal T$ in $O(t)$ time and fingerprinting queries in $O(h)$ time. 

\paragraph{Root.} Let $u = \locus(\epsilon)$ be the suffix tree root. $\root()$  returns $R_u = (1,|\stlex|,i_{min},i_{max},0)$ in $O(1)$ time, where $\mathcal T[i_{min},n]$ is the lexicographically-smallest text suffix and $\mathcal T[i_{max},n]$ is the lexicographically-largest text suffix (in other words, $i_{min} = \SA[1] = n$ and $i_{max} = \SA[n]$).

\paragraph{String depth.} Let $R_u = (b,e,i_{min}, i_{max}, \ell)$. Then, $\sdepth(R_u)$ simply returns $\ell$ in $O(1)$ time.

\paragraph{Ancestor.} Let $R_u = (b,e,i_{min}, i_{max}, \ell)$ and $R_{u'} = (b',e',i_{min}', i_{max}', \ell')$. 
If $\ell > \ell'$, $u$ cannot be an ancestor of $u'$ so $\ancestor(R_u,R_{u'})$ returns false. Otherwise, $\ancestor(R_u,R_{u'})$ returns true if and only if $\rlce(i_{min},i_{min}') \ge \ell$, that is, if and only if the string $\alpha = \mathcal T[i_{min}, i_{min}+\ell-1]$ with $u=\locus(\alpha)$ is a prefix of $\alpha' = \mathcal T[i_{min}', i_{min}'+\ell'-1]$ with $u'=\locus(\alpha')$. This operation runs in $O(t)$ time.

\paragraph{Is leaf.}  Let $R_u = (b,e,i_{min}, i_{max}, \ell)$. Then, $\isleaf(R_u)$ returns true (in $O(1)$ time) if and only if $i_{min} = i_{max}$, if and only if $b=e= i^*$ (see Definition \ref{def: range of leaves}), if and only if the string $\alpha = \mathcal T[i_{min}, i_{min}+\ell-1]$ with $u = \locus(\alpha)$ ends with $\$$, that is, $i_{min}+\ell-1 = n$ (equivalently, $\mathcal T[i_{min}+\ell-1] = \$$).

\paragraph{Locate leaf.} Let $R_u = (b,e,i_{min}, i_{max}, \ell)$. The output of $\locate(R_u)$ is defined only if $\isleaf(R_u)$ is true. In that case, $\locate(R_u)$ returns $i_{min}$ ($=i_{max}$) in $O(1)$ time.

\paragraph{Leftmost/rightmost leaves.} Let $R_u = (b,e,i_{min}, i_{max}, \ell)$. Then, in $O(1)$ time we can compute $\lleaf(R_u) = (i^*,i^*,i_{min}, i_{min}, n-i_{min}+1)$ and $\rleaf(R_u) = (i^*,i^*,i_{max}, i_{max}, n-i_{max}+1)$.

\paragraph{Edge label.} Let $(u,u')$ be a suffix tree edge, with $R_u = (b,e,i_{min}, i_{max}, \ell)$ and $R_{u'} = (b',e',i_{min}', i_{max}', \ell')$. The function $\labeln(R_u,R_{u'})$ returns $(i_{min}'+\ell, i_{min}'+\ell'-1)$ in $O(1)$ time.

\paragraph{Next leaf.}
Nishimoto and Tabei~\cite{Nishimoto21Move} showed that, starting from $i\in[n]$, $k$ consecutive applications $\bar\phi(i), \bar\phi^2(i), \dots, \bar\phi^k(i)$ of the permutation defined below\footnote{This permutation is usually denoted as $\phi^{-1}$. Here we instead use the symbol $\bar\phi$.} can be computed in $O(\log\log(n/r) + k)$ time with a data structure using $O(r)$ words of space.

\begin{definition}[$\phi$-function]\label{def:phi}
    Let $\bar\phi : [n] \rightarrow [n]$ be defined as
    $$
    \bar\phi(i) = 
    \begin{cases}
        \SA[\ISA[i]+1]  & \text{if } \ISA[i]<n,\\
        \SA[1]          & \text{if } \ISA[i]=n.
    \end{cases}
    $$
\end{definition}

Observe that, by the very definition of $\bar\phi$, leaf $\locus(\mathcal T[\bar\phi(i),n])$ is the next leaf in lexicographic order after leaf $\locus(\mathcal T[i,n])$ (unless the latter is the suffix tree's rightmost leaf). 

Let $u$ be a leaf, with $R_u = (i^*,i^*,i, i, n-i+1)$ (see Definition \ref{def: range of leaves} for the definition of $i^*$). Then, unless $u$ is the rightmost leaf, $next(R_u) = (i^*,i^*,\bar\phi(i), \bar\phi(i), n-\bar\phi(i)+1)$. It follows that the structure of Nishimoto and Tabei can be used to evaluate $k$ consecutive applications of $\next(\cdot)$ in $O(\log\log(n/r)+k)$ time. 

\paragraph{Smallest children label and successor child.} Let $u$ be a node with $R_u = (b,e,i_{min}, i_{max}, \ell)$. Operation $\first(R_u)$ returns the alphabetically-smallest label in $\mathcal L[b,e]$ (i.e., in $\out(u)$). Operation $succ(R_u,a)$ returns the alphabetically-smallest label in $\mathcal L[b,e]$ ($\out(u)$) being larger than $a$. Both queries reduce to an \emph{orthogonal range successor} (also known as \emph{range next value}) query in the range $\mathcal L[b,e]$, an operation that can be solved in $O(\log\sigma)$ time on wavelet trees \cite{navarro2014wavelet}. In both queries, if $\# \in \mathcal L[b,e]$ then we simply ignore it.

\paragraph{Child by letter.} 
The function $\child(R_u, a)$ is the most technically-interesting operation.
Let $u$ be a node with $R_u = (b,e,i_{min}, i_{max}, \ell)$ and $a\in\Sigma$ be a letter. 
Let $\alpha = \mathcal T[i_{min}, i_{min} + \ell -1]$ be such that $u = \locus(\alpha)$.
Lemma \ref{lem:PDA samples} implies the following corollary:
\begin{corollary}\label{cor:outgoing labels from u}
    Let $u = \locus(\alpha)$, and let $\stlex[b,e]$ be the range containing the text positions $j \in \stlex$ such that $\mathcal T[1,j]$ is suffixed by $\alpha$.
    All (and only) the letters labeling the outgoing edges from node $u$ appear in $\mathcal L[b,e] \setminus \{\#\}$, that is, $\out(u) = \{c\ :\ c \in \mathcal L[b,e] \wedge c \neq \#\}$.
\end{corollary}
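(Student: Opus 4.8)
The plan is to prove the claimed set equality $\out(u)=\{c : c\in\mathcal{L}[b,e],\ c\ne\#\}$ by its two inclusions, the ``only'' direction being routine and the content sitting in the ``all'' direction, which rests on Lemma~\ref{lem:PDA samples}. For ``$\{c : c\in\mathcal{L}[b,e],\ c\ne\#\}\subseteq\out(u)$'' I argue straight from the definitions: given $k\in[b,e]$ with $c:=\mathcal{L}[k]\ne\#$, write $j:=\stlex[k]$; then $\mathcal{T}[1,j]$ is suffixed by $\alpha$ (by the definition of the range $[b,e]$) while $j\ne n$ and $\mathcal{L}[k]=\mathcal{T}[j+1]=c$ (by the definition of $\mathcal{L}$), so $\mathcal{T}[j-|\alpha|+1,\,j+1]=\alpha c$, i.e.\ $\alpha c$ occurs in $\mathcal{T}$; since $u=\locus(\alpha)$ is an explicit node this forces $c\in\out(u)$.

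For the reverse inclusion I first dispatch the two degenerate shapes of $u$. If $u$ is a leaf, then $\alpha$ ends with $\$$, the only $j\in\stlex$ whose prefix ends with $\alpha$ is $j=n$, so $[b,e]=[i^*,i^*]$ with $\mathcal{L}[i^*]=\#$, and both sides are empty. If $u$ is the root, then $[b,e]=[1,|\stlex|]$ and $\out(u)$ is the set of characters occurring in $\mathcal{T}$, so I must exhibit, for each such $c$, an index $j\in\stlex$ with $\mathcal{T}[j+1]=c$: I take $p_c$ to be the starting position of the lexicographically smallest suffix of $\mathcal{T}$ that begins with $c$, observe that $\PLCP[p_c]=0$ (either $p_c=\SA[1]$, or the suffix immediately preceding that of $p_c$ in suffix order begins with a character $<c$ and so shares no prefix with it), whence $p_c=p_c+\LPF_\pi[p_c]\in\stlexm$ because $\LPF_\pi=\PLCP$ for $\pi=\ISA$, hence $p_c-1\in\stlex$ and the corresponding entry of $\mathcal{L}$ equals $\mathcal{T}[p_c]=c$.

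The substantive case is an explicit internal node $u=\locus(\alpha)$ with $\alpha\ne\epsilon$; write $\out(u)=\{c_1,\dots,c_d\}$ with $c_1<_{\text{lex}}\cdots<_{\text{lex}}c_d$ and $d\ge 2$. I apply Lemma~\ref{lem:PDA samples} twice, once to $\pi=\ISA$ and once to the dual $\bar\pi(i)=n-\ISA[i]+1$, which is order-preserving by Lemma~\ref{lem: stlex <= r}. The linking observation is that the quantity $\pi(\alpha\cdot c)$ of Lemma~\ref{lem:PDA samples} is strictly increasing in the lexicographic order of $c$: an occurrence of $\alpha c$ ending at $i$ contributes the position $i-1$, whose suffix $\mathcal{T}[i-1,n]$ begins with the two-letter string $z\cdot c$, where $z$ is the last letter of $\alpha$, so these suffixes, hence their $\ISA$-ranks, hence the values $\pi(i-1)$, are grouped and ordered by $c$. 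Thus $c_1=\argmin_c\pi(\alpha\cdot c)$, and Lemma~\ref{lem:PDA samples} furnishes, for each $j\in\{2,\dots,d\}$, some $t_j\in\PDA_\pi=\stlexm$ with $\mathcal{T}[1,t_j]$ suffixed by $\alpha c_j$; then $t_j-1\in\stlex$, its prefix ends with $\alpha$ (so $t_j-1$ lies in $[b,e]$), and at that position $\mathcal{L}$ equals $\mathcal{T}[t_j]=c_j\ne\#$ since $t_j\le n$. Symmetrically, $\bar\pi(\alpha\cdot c)$ is strictly \emph{decreasing} in the lexicographic order of $c$, so $c_d=\argmin_c\bar\pi(\alpha\cdot c)$, and Lemma~\ref{lem:PDA samples} applied to $\bar\pi$ yields, for each $j\in\{1,\dots,d-1\}$, some $t'_j\in\PDA_{\bar\pi}=\stlexp$ with $\mathcal{T}[1,t'_j]$ suffixed by $\alpha c_j$, placing $c_j$ in $\mathcal{L}[b,e]\setminus\{\#\}$ the same way. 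The two families cover $c_1,\dots,c_d$ between them, which closes the proof.

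I expect the main obstacle to be precisely this need to combine the two path decompositions: Lemma~\ref{lem:PDA samples} run on $\pi$ alone never samples the $\pi$-minimal extension of $\alpha$ (it lies on the path continued through $u$ and is therefore not a path start), and likewise for $\bar\pi$, so neither direction suffices on its own; the argument only goes through because these two missing extensions can be pinned down explicitly as the lexicographically smallest and largest outgoing labels, and these are distinct exactly because $d\ge 2$. Making that identification needs the ``$z\cdot c$'' ordering observation, whose hypothesis $\alpha\ne\epsilon$ fails at the root — hence the separate $\PLCP$-based treatment of the root above. The rest (the shift $t\mapsto t-1$, membership of $t-1$ in $[b,e]$, and the bound $\stlexm,\stlexp\subseteq[n]$ that keeps the relevant $\mathcal{L}$-entries different from $\#$) is routine bookkeeping.
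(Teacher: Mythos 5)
Your proof is correct and follows essentially the same route as the paper: both inclusions, with the nontrivial one obtained by applying Lemma~\ref{lem:PDA samples} once to $\pi=\ISA$ and once to its dual $\bar\pi$, so that $\stlexm$ covers every outgoing label except the lexicographically smallest while $\stlexp$ covers every label except the largest. Your separate treatment of the root via $\PLCP[p_c]=0$ (and your explicit ``$z\cdot c$'' justification that the lex-smallest/largest labels are the minimizers of $\pi(\alpha\cdot c)$ and $\bar\pi(\alpha\cdot c)$, which indeed only follows directly when $\alpha\neq\epsilon$) is a welcome extra precision, but it does not change the underlying argument.
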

\begin{proof}
    Ignoring character $\#$, by definition  $\mathcal L[b,e]$ only contains characters following occurrences of $\alpha$ in $\mathcal T$, that is, characters in $\out(u)$.

    We now show that every $c\in \out(u)$ belongs to $\mathcal L[b,e]$.
    Assume that $c\in \out(u)$ is not the alphabetically-smallest character in $\out(u)$ (the other case --- $c$ is not the alphabetically-largest character in $\out(u)$ --- is symmetric and the following proof adapts by replacing $\pi$ with $\bar\pi$).     
    Let $a\neq c$ be the alphabetically-smallest character in $\out(u)$. 
    Then, any text suffix being prefixed by $\alpha\cdot a$ is lexicographically smaller than any text suffix being prefixed by $\alpha\cdot c$, hence (since $\pi=\ISA$) it holds $\pi(\alpha\cdot a)<\pi(\alpha\cdot c)$ (see Lemma \ref{lem:PDA samples} for the definition of this overloading of $\pi$ to right-extensions of right-maximal strings).
    But then, Lemma \ref{lem:PDA samples} implies that there exists $t \in \PDA_\pi = \stlexm$ with $\mathcal T[1,t]$ being suffixed by $\alpha\cdot c$, hence $t-1 \in \stlex[b,e]$ and therefore $c \in \mathcal L[b,e]$. 
\end{proof}

% \begin{proof}
%     If $c\in \mathcal L[b,e]$, then $\alpha\cdot c$ occurs in the text, hence $c$ labels an outgoing edge from $u$ and it must be $c\in out(c)$.

%     Conversely, Let $c \in out(u)$.
%     Since $u$ is an internal explicit suffix tree node, $\alpha$ is right-maximal. 
%     Since $\alpha$ is right-maximal, then either (i) there exists $a \in out(u)$ with $a<c$, or (ii) there exists $d\in out(u)$ with $d>c$. 
%     The two cases are completely symmetric, so we treat only (i).
%     From the above properties, we have that both $\alpha\cdot a$ and $\alpha\cdot c$ occur in $\mathcal T$. Let $\mathcal T[i,n]$ be the lexicographically-smallest suffix being prefixed by $\alpha \cdot c$. Then, $\PLCP[i] = \LPF_\pi[i] = |\alpha|$, hence $i + \LPF_\pi[i] = i+|\alpha| \in \stlexm$. 
%     It follows that $j = i+|\alpha|-1 \in \stlex$. Let $t$ be such that $\stlex[t] = j$. Then, by definition of $\mathcal L$ it holds that $\mathcal L[t] = c$. Furthermore, since $\alpha$ suffixes $\mathcal T[1,j]$ then $t \in [b,e]$ holds. Our claim follows.
%     
% \end{proof}

Following Corollary \ref{cor:outgoing labels from u}, if $a \notin \mathcal L[b,e]$ (a test taking $O(\log\sigma)$ time using rank and select operations on $\mathcal L$), then we can return $\child(R_u,a) = (0,0,0,0,0)$, signaling that no outgoing edge from $u$ is labeled with $a$.

Otherwise, we first show how to compute $i_{min}'$ and $i_{max}'$
such that
$\mathcal T[i_{min}',n]$ and $\mathcal T[i_{max}',n]$  are the lexicographically-smallest and lexicographically-largest suffixes being prefixed by $\alpha\cdot a$, respectively. After that, we show how to use this information to compute $\child(R_u,a)$.

Observe that $|\out(u)| \ge 2$, since $u$ is an explicit suffix tree node. 
We distinguish three cases. (i) $a$ is neither the alphabetically-largest nor the alphabetically-smallest label in $\mathcal L[b,e] \setminus \{\#\}$, i.e. $a \neq \min\{c \in \mathcal L[b,e] \setminus \{\#\}\}$ and $a \neq \max\{c \in \mathcal L[b,e] \setminus \{\#\}\}$ hold; (ii) $a$ is the alphabetically-smallest label in $\mathcal L[b,e] \setminus \{\#\}$, i.e., $a = \min\{c \in \mathcal L[b,e] \setminus \{\#\}\}$; (iii) $a$ is the alphabetically-largest label in $\mathcal L[b,e] \setminus \{\#\}$, i.e., $a = \max\{c \in \mathcal L[b,e] \setminus \{\#\}\}$. 

To simplify the discussion below, we rephrase Lemma~\ref{lem:PDA samples} to the particular STPDs we are using in this section (i.e. those derived from $\pi = \ISA$ and $\bar\pi[i]=n-\ISA[i]+1$):

\begin{corollary}\label{cor:stlex samples}

Let $\alpha$ be right-maximal, and let $\stlex[b,e]$ be the range containing the text positions $i \in \stlex$ such that $\mathcal T[1,i]$ is suffixed by $\alpha$. 
Let moreover $C = \{c\in \Sigma\ :\ \alpha\cdot c\ \mathrm{occurs\ in\ }\mathcal T\}$ be the set of all the characters extending $\alpha$ in the text. Then:

\begin{compactenum}[(a)]
        \item Let $c \in (C~\setminus~\min C)$, and let $\mathcal T[j -|\alpha|+1,n]$ be the lexicographically-smallest suffix being prefixed by $\alpha\cdot c = \mathcal T[j -|\alpha|+1,j+1]$. Then, $j \in \stlex[b,e]$.
         \item Let $c \in (C~\setminus~\max C)$, and let $\mathcal T[j -|\alpha|+1,n]$ be the lexicographically-largest suffix being prefixed by $\alpha\cdot c = \mathcal T[j -|\alpha|+1,j+1]$. Then, $j \in \stlex[b,e]$.       
    \end{compactenum}
\end{corollary}
% \begin{proof}
% The two cases are completely symmetric, so we treat only (i). Let $c \in \mathcal L[b,e]$ and assume there exists $a \in \mathcal L[b,e]$ with $a<c$. 
% Since $a,c \in \mathcal L[b,e]$ and $\stlex[b,e]$ is the range containing the text positions $j \in \stlex$ such that $\mathcal T[1,j]$ is suffixed by $\alpha$, we have that both $\alpha\cdot a$ and $\alpha\cdot c$ occur in $\mathcal T$. Let $\mathcal T[j',n]$, with $j'=j-|\alpha|+1$, be the lexicographically-smallest text suffix being prefixed by $\alpha\cdot c = \mathcal T[j',j'+|\alpha|-1]$. Then, $\PLCP[j'] = \LPF_\pi[j'] = |\alpha|$, hence $j' + \LPF_\pi[j'] = j'+|\alpha| =  j+1 \in \stlexm$. 
% It follows that $j \in \stlex$ and, since $\mathcal T[1,j]$ is suffixed by $\alpha$, it must be $j \in \stlex[b,e]$.
% \end{proof}

We now show how to compute $i_{min}'$ and $i_{max}'$ in cases (i-iii).
\begin{compactenum}[(i)]
\item By Corollary \ref{cor:stlex samples}, $i_{min}'',i_{max}'' \in \stlex[b,e]$, where $\mathcal T[i_{min}''-\ell+1,n]$ and $\mathcal T[i_{max}''-\ell+1,n]$  are the lexicographically-smallest and lexicographically-largest suffixes being prefixed by $\alpha\cdot a$, respectively. Observe that $i_{min}''$ and $i_{max}''$ are not necessarily distinct (they are equal precisely when descending to a leaf).
We locate the leftmost $\mathcal L[L] = a$ and rightmost $\mathcal L[R] = a$ occurrences of letter $a$ in $\mathcal L[b,e]$ using rank and select operations on $\mathcal L$, in $O(\log\sigma)$ time.
Then, we map all the occurrences of $a$ in $\mathcal L[b,e]$ to the corresponding $a$'s in $\mathcal F$ by applying $\mathcal{LF}$: those occurrences correspond to the range $\mathcal F[\mathcal{LF}(L), \mathcal{LF}(R)] = \mathcal F[b', e']$, hence $b'$ and $e'$ can be found in $O(\log\sigma)$ time using rank and select operations on $\mathcal L$ and $\mathcal F$. 
Observe that, by definition of the permutation $\stlex'$ of $\stlex$ (Definition \ref{def:stlex'}), it holds that $i_{min}'',i_{max}'' \in \stlex'[b',e']$. 
Then, our Range Minimum/Maximum data structures queried in range $\pi(\stlex')[b',e']$ will yield two positions $j_{min}, j_{max}$ such that $\stlex'[j_{min}] = i_{min}''$ and $\stlex'[j_{max}] = i_{max}''$. Although we are not storing $\stlex'$ explicitly (we could, since it would just take $|\stlex|$ further memory words), we can retrieve $i_{min}''$ and $i_{max}''$ by just applying $\mathcal FL$: by definition of $\stlex'$ (Definition \ref{def:stlex'}), we have that $\stlex'[j] = \stlex[\mathcal{FL}(j)]$ for any $j \in [|\stlex|]$. Then, $i_{min}' = i_{min}''-\ell+1$ and $i_{max}' = i_{max}''-\ell+1$. 

\item The letter $a$ is the alphabetically-smallest label in $\mathcal L[b,e] \setminus \{\#\}$. Since $|\out(u)| \ge 2$ (because $u$ is an internal suffix tree node), we obtain that $a$ is not the alphabetically-largest label in $\mathcal L[b,e] \setminus \{\#\}$. Then, by Corollary \ref{cor:stlex samples}, $i_{max}'' \in \stlex[b,e]$, where $\mathcal T[i_{max}''-\ell+1,n]$  is the lexicographically-largest suffix being prefixed by $\alpha\cdot a$. We can find $i_{max}''$ following the same procedure described in point (i) above (resorting to Range Maxima Queries). At this point, since $a$ is the alphabetically-smallest label in $\mathcal L[b,e] \setminus \{\#\}$ (equivalently, in $\out(u)$) and $\mathcal T[i_{min},n]$ is the lexicographically-smallest suffix being prefixed by $\alpha$, it also holds that $\mathcal T[i_{min},n]$ is the lexicographically-smallest suffix being prefixed by $\alpha\cdot a$. But then, we are done: $i_{min}' = i_{min}$ and $i_{max}' = i_{max}''-\ell+1$.

\item The letter $a$ is the alphabetically-largest label in $\mathcal L[b,e] \setminus \{\#\}$. This case is completely symmetric to case (ii) so we omit the details.
%\nicola{I forgot to always exclude $\#$ from the statements above, please check that I am not forgetting any occurrence of "$ \setminus \{\#\}$". That character serves only to include the full T (ST leaves) in the colex sorted list of prefixes (otherwise L does not have a corresponding char). } \sunghwan{But it seems that for internal nodes except root (i.e., for non-empty right-maximal strings), $\#\not\in\mathcal{L}[b,e]$ I am wondering if we really need to include $n$ in $\stlex$.}\nicola{$n$ is not for the root: it is there for leaves. I included $n$ on purpose because otherwise leaves do not have a range (i.e. Cor 2 would be false for leaves). You are right, $\#$ does not appear in $\mathcal{L}[b,e]$ for non-root nodes. So either we always write $\mathcal{L}[b,e] \setminus \{\#\}$ like I did, or we always specify that we are working on a non-root node... I'm not sure the latter is shorter} \sunghwan{I don't want to make lots of changes in the paper, but I am concerning the case with $0\in\stlex$ where it does not hold $\stlex[1]=n$ anymore. Maybe we can just define something like $i^*\in\{1,2\}$ s.t. $\stlex[i^*]=n$? Or we define $\mathcal{T}[1,j]=\#$ for $j=0$ and define $\$<\#<c$ for every $c\in\Sigma$?}\nicola{ahhh right... Done! pls check. Added Remark 4 and Def 10 and changed 1 in $i^*$ where needed. } \sunghwan{I think it is fine now.}
\end{compactenum}

At this point, we know that $\mathcal T[i_{min}',n]$ and $\mathcal T[i_{max}',n]$ are the lexicographically-smallest and lexicographically-largest suffixes prefixed by $\alpha\cdot a$, respectively. 
Note that $i_{min}' = i_{max}'$ if and only if $\child(R_u,a)$ is a leaf. In this case, we simply return $\child(R_u,a) = (i^*,i^*,i_{max}',i_{max}',n-i_{max}'+1)$.

In the following, we can therefore assume that  $i_{min}' \neq i_{max}'$, hence $\child(R_u,a)$ is not a leaf.
Then, note that $\ell' = \rlce(i_{min}',i_{max}')$ is the string depth of $\child(u,a)$. In particular, $\child(u,a) = \locus(\alpha')$, where $\alpha' = \mathcal T[i_{min}',i_{min}'+\ell'-1]$.

Binary-searching $\alpha'$ in $\stlex$ then yields the maximal range $\stlex[b'',e'']$ such that, for all $j\in [b'',e'']$, $\alpha'$ is a suffix of $\mathcal T[1, \stlex[j]]$. Using random access queries on $\mathcal T$ to guide binary search, this process requires in total $O(\ell'\log |\stlex|)$ random access queries. 
We can do better if the text oracle supports $\lce$ queries.
Since $\alpha' = \mathcal T[i_{min}',i_{min}'+\ell'-1]$ is a substring of $\mathcal T$ itself, each binary search step can be implemented with one $\llce$ and one random access query (the latter, extracting one character). This yields $[b'',e'']$ in time $O(t \cdot \log|\stlex|) \subseteq O(t \cdot \log r)$.

Yet another solution uses \emph{z-fast tries} \cite{BelazzouguiBPV09,boldi-z-fast}, a machinery supporting \emph{internal suffix searches} (that is, the queried string is guaranteed to suffix at least one string in the pre-processed set) in a set of $q$ strings each of length bounded by $n$. This solution uses space $O(q)$ and answers suffix-search queries in $O(\log n)$ steps, each requiring computing the fingerprint of a substring of the query $\alpha'$. 
In our case, the set of strings is $\{\mathcal T[1,\stlex[i]]\ :\ i\in [|\stlex|]\}$ and we have that $\alpha' = \mathcal T[i_{min},i_{min}+\ell'-1]$ indeed suffixes at least one of them by Corollary \ref{cor:stlex - st nodes} and by the fact that $i_{min}' \neq i_{max}'$, hence $\alpha'$ is right-maximal. 
This solution finds  $[b'',e'']$ in $O(h\log n)$ time (or I/O operations, if $h$ is the I/O complexity of fingerprinting). 

We finally have all ingredients to return our result: $\child(R_u,a) = (b'',e'',i_{min}',i_{max}',\ell')$. If the oracle only supports $\lce$ queries and random access in $O(t)$ time, then $\child(R_u,a)$ is answered in $O(t\log r + \log\sigma)$ time. If the oracle also supports fingerprinting queries in $O(h)$ time, then the running time is $O(t + h\log n + \log\sigma)$. The latter is preferable if fingerprinting queries are answered faster than $\lce$ queries (e.g., with the text oracle of Prezza \cite{prezzaTalg21}). 

\paragraph{Putting everything together.}
To sum up, our data structure uses $O(|\stlex| + r) = O(r)$ space on top of the text oracle. Taking into account the complexity of all queries discussed above, this proves Theorem \ref{thm:st operations}. 

We can finally prove Corollary \ref{cor:locating on our ST}.

\locatingOurST*

\begin{proof}
We plug the random access oracle \cite{prezzaTalg21} in Theorem \ref{thm:st operations}. This oracle uses $n\log\sigma + O(\log n)$ bits of space and supports $\lce$ queries with $O(\log n)$ I/O complexity, fingerprinting with $O(1)$ I/O complexity, and extraction of $\ell$ consecutive characters with $O(1+\ell/B)$ I/O complexity\footnote{Even if in \cite{prezzaTalg21} they only prove $O(\ell)$ random access time for a contiguous block of $\ell$ characters in the RAM model, it is not hard to see that their data structure triggers $O(\ell/B)$ I/O operations in the I/O model: the data structure consists in the Karp-Rabin fingerprints of a sample of the text's prefixes. To extract a character, the structure combines the fingerprints (adjacent in memory) of two consecutive prefixes. This locality property leads immediately to the claimed $O(\ell/B)$ complexity.} provided that the alphabet size is polynomial ($\sigma \le n^{O(1)}$). Starting with $k=0$ and $R_u = root()$, suppose we have matched a right-maximal proper pattern's prefix $P[1,k]$ and found the representation $R_u$ of node $u = \locus(P[1,k])$. 
To continue matching, we compute $R_v = child(R_u,P[k+1])$. If the child exists, we get the edge's text pointers $(i,i+q-1) = \labeln(R_u,R_v)$ ($q$ is the edge's length) and match $P[k+1,\dots]$ with $\mathcal T[i,i+q-1]$ by random access, either reaching node $v$ (i.e. the end of the edge  $(u,v)$) or reaching the end of $P$. 
In the former case, we repeat the above operations to continue matching $P$ on descendants of $v$.
In the latter case, we are left to locate all the $occ$ pattern occurrences in the subtree rooted in $v$. 

To locate the pattern's occurrences, we have two choices. The first (less efficient, corresponding to the original one of Weiner~\cite{weiner1973linear}) is to navigate the subtree (of size $O(occ)$) rooted in $v$ using $\child(\cdot,\cdot)$, $\first(\cdot)$, and $\succn(\cdot,\cdot)$ queries. For each reached leaf $u$ (we use $\isleaf(R_u)$ to check whether we reached a leaf), we call $\locate(R_u)$ to report the corresponding pattern occurrence. 
The I/O complexity of this navigation using the text oracle \cite{prezzaTalg21} is $O(occ\cdot(t+h\log n+\log\sigma)) = O(occ\cdot \log n)$.

A more efficient solution consists of navigating just the leaves using $\next(\cdot)$. This can be done by computing $R_{left} = \lleaf(R_v)$, $R_{right} = \rleaf(R_v)$ and evaluating $occ$ consecutive applications of $\next(\cdot)$ starting from $R_{left}$ until reaching $R_{right}$ (again, calling $\locate(R_u)$ on each leaf $u$ to get a pattern occurrence). This solution locates the $occ$ pattern occurrences with $O(\log\log(n/r) + occ)$ I/O complexity.

Finding the locus $v$ of pattern $P$ required calling a $\child(\cdot,\cdot)$ operation for each of the $d$ traversed suffix tree edges, plus $O(d + m/B)$ I/O complexity for matching the pattern against the edges' labels. In total, finding $v$ has therefore  $O(d\cdot(t+h\log n+\log\sigma) + d + m/B) = O(d\log n + m/B)$ I/O complexity. Taking into account the I/O complexity of locating the pattern's occurrences, this proves Corollary \ref{cor:locating on our ST}.   
\end{proof}

\subsection{General locating mechanism for order-preserving STPDs}\label{sec: general STPD index}

In this subsection we provide a general locating mechanism working on any order-preserving STPD. First, we show how to locate the pattern occurrence $P = \mathcal T[i,j]$ minimizing $\pi(i)$ among all pattern occurrences. We call this occurrence \emph{primary}. Then we show that, starting from the (unique) primary occurrence, we can locate the remaining ones (called \emph{secondary}) by resorting to orthogonal point enclosure. 
As a matter of fact, this technique generalizes the $r$-index' $\phi$ function (Definition \ref{def:phi}, cases $\pi =\ISA$ and $\pi = \IPA$) to arbitrary order-preserving permutations. 
%For any order-preserving permutation $\pi$, our data structure will use $O(|\PDA_\pi|)$ words of space on top of the text oracle and will support locating all $occ$ pattern occurrences in $O(occ\cdot \log n)$ time. 
This increased generality with respect to Section \ref{sec:stlex}, comes at the price of not being able to support suffix tree queries (only pattern matching).
%When $\pi=\ISA$, this provides an alternative locating mechanism with respect to the one described in Section \ref{sec:stlex} (that mechanism, however, additionally supports more powerful suffix tree queries).

After this section, we will tackle the particular case $\pi = \IPA$ (colexicographic order of the text's prefixes), for which the locating algorithm that we describe here can be simplified. That particular case will lead to our optimized implementation able to beat the $r$-index both in query time (by orders of magnitude) and space usage.

From here until the end of the section, we assume $\mathcal T\in \Sigma^n$ is a text and $\pi$ is any order-preserving permutation on $\mathcal T$.
We start by classifying the occurrences of a given pattern $P\in\Sigma^+$ as follows.
    
    \begin{definition}[Primary/Secondary occurrence]
         For a string $P\in\Sigma^+$, an occurrence $\mathcal{T}[i,i+|P|-1]=P$ is said to be a \emph{primary occurrence} if and only if $\LPF_\pi[i]<|P|$. All the other occurrences are called \emph{secondary occurrences}.
    \end{definition}
    
    \begin{lemma}
        For any string $P\in\Sigma^+$ that occurs in $\mathcal{T}$ there exists exactly one primary occurrence $P = \mathcal T[i,i+|P|-1]$. Furthermore, such occurrence is the one minimizing $\pi(i)$.
    \end{lemma}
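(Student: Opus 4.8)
The plan is to prove existence and uniqueness of the primary occurrence together with the characterization that it minimizes $\pi$. First I would observe that $\LPF_\pi[i]$ measures exactly the longest prefix of $\mathcal T[i,n]$ that is also a prefix of some $\mathcal T[j,n]$ with $\pi(j)<\pi(i)$; hence $\mathcal T[i,i+|P|-1]=P$ is \emph{primary} (i.e., $\LPF_\pi[i]<|P|$) if and only if no occurrence $\mathcal T[j,j+|P|-1]=P$ satisfies $\pi(j)<\pi(i)$. In other words, the primary occurrences of $P$ are precisely the occurrences $\mathcal T[i,i+|P|-1]=P$ that minimize $\pi(i)$ over all occurrences of $P$. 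Since $\pi$ is a permutation (hence injective), the set of occurrences of $P$ has a \emph{unique} minimizer of $\pi$; this immediately gives both that there is exactly one primary occurrence and that it is the $\pi$-minimizing one. This handles the statement once the ``if and only if'' is nailed down.

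The key step is therefore the equivalence $\LPF_\pi[i]<|P| \iff \pi(i)=\min\{\pi(j): \mathcal T[j,j+|P|-1]=P\}$ for an occurrence $i$ of $P$. The forward direction: if $\LPF_\pi[i] \ge |P|$, then by definition of $\LPF_\pi$ there exists $j$ with $\pi(j)<\pi(i)$ and $\rlce(j,i)\ge|P|$, so $\mathcal T[j,j+|P|-1]=\mathcal T[i,i+|P|-1]=P$ is an occurrence with strictly smaller $\pi$; contrapositive gives the claim. The backward direction: if $\pi(i)$ is not minimal among occurrences, pick an occurrence $j$ of $P$ with $\pi(j)<\pi(i)$; then $\rlce(i,j)\ge|P|$, so $\LPF_\pi[i]\ge|P|$, i.e., the occurrence is secondary. (If $\pi(i)=1$ then trivially $\LPF_\pi[i]=0<|P|$, consistent with $i$ being the unique global $\pi$-minimizer, which is in particular the minimizer over occurrences of $P$.) I do not expect the order-preserving property to be needed for this particular lemma — it is used elsewhere (e.g.\ for the secondary-occurrence machinery) but here the argument is purely about the definition of $\LPF_\pi$ and injectivity of $\pi$.

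I expect no serious obstacle here; the only subtlety to state carefully is that the minimum in ``$\min\{\pi(j):\mathcal T[j,j+|P|-1]=P\}$'' is over a nonempty finite set (nonempty since $P$ occurs in $\mathcal T$ by hypothesis), so the minimizer exists and, by injectivity of the permutation $\pi$, is unique. Everything else is a direct unwinding of Definition~\ref{def:LPF array}. I would write it as: (1) reduce to the equivalence above; (2) prove the equivalence in two lines each direction as sketched; (3) conclude by uniqueness of the argmin of an injective function over a nonempty finite set.
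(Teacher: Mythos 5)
Your proposal is correct and rests on the same ingredients as the paper's proof: unwinding Definition~\ref{def:LPF array} to show that an occurrence $\mathcal T[i,i+|P|-1]$ is primary precisely when no occurrence $\mathcal T[j,j+|P|-1]=P$ has $\pi(j)<\pi(i)$, and then using injectivity of the permutation $\pi$; your observation that the order-preserving property is not needed here is also consistent with the paper. The only difference is presentational — you prove the characterization ``primary $\iff$ $\pi$-minimizing'' directly and conclude from uniqueness of the argmin, whereas the paper argues existence and uniqueness separately by contradiction — so this counts as essentially the same approach.
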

    \begin{proof}
        Let $P\in\Sigma^+$ occur in $\mathcal{T}$.
        To prove the existence of a primary occurrence, suppose for a contradiction that there is no primary occurrence of $P$. Let $\mathcal{T}[i,i+|P|-1]=P$ be the secondary occurrence minimizing $\pi(i)$.  
        By definition of secondary occurrence we have $\LPF_\pi[i]\ge|P|$, hence, by definition of $\LPF_\pi$, there must exist $i'\in[n]$ such that $\pi(i')<\pi(i)$ and $\rlce(i',i)\ge|P|$ implying that $\mathcal{T}[i',i'+|P|-1]=P$, which contradicts the minimality of $\pi(i)$.
        
        To prove the uniqueness of the primary occurrence, suppose for a contradiction that there are at least two primary occurrences $i,i'\in[n]$ with $i\ne i'$. Since $\pi$ is a permutation, exactly one of (i)~$\pi(i)<\pi(i')$ and (ii)~$\pi(i)>\pi(i')$ holds. Without loss of generality, assume that $\pi(i)<\pi(i')$. Then by definition of $\LPF_\pi$, $\LPF_\pi[i'] \ge \rlce(i',i)\ge|P|$, a contradiction. 
        This also shows, as a byproduct, that the unique primary occurrence $P = \mathcal T[i,i+|P|-1]$ is the one minimizing $\pi(i)$ among all (primary and secondary) occurrences.
    \end{proof}

\subsubsection{Finding the primary occurrence.}

The idea to locate the primary pattern occurrence is simple and can be visualized as a process of walking along the paths of the STPD, starting from the root. Whenever we find a mismatch between the pattern $P$ and the current STPD path, we change path by running a suffix search (e.g.\ binary search) on $\PDA_\pi$.
We first provide an intuition through an example (Example \ref{ex:locate primary}). Our algorithm is formalized in Algorithm~\ref{alg: locate primary}. Then, we prove the algorithm's complexity, correctness, and completeness. 

\begin{example}\label{ex:locate primary}
Consider the STPD of Figure \ref{fig:STLEX}, and let $P = CGCGAA$ be the query pattern. 
We start by matching $P$ with the characters of the path $\mathcal T[11,11] = \$$ starting at the root. The longest pattern prefix matching the path is $\epsilon$ (the empty string). To continue matching $P$, we need to change path. We binary search $\PDA_\pi = \stlexm$ looking for a sampled text prefix being suffixed by $\epsilon \cdot P[1] = C$ (i.e.\ the concatenation of the pattern's prefix matched so far and the first unmatched pattern's character). Two sampled prefixes (elements in $\stlexm$) satisfy this requirement: $\mathcal T[1,3]$ and $\mathcal T[1,7]$. By definition of our STPD, among them we need to choose the one $\mathcal T[1,i]$ minimizing $\pi(i) = \ISA[i]$. 
This choice can be performed in constant time with the aid of a Range Minimum Query data structure built on top of $\pi(\stlexm)$. 
Such prefix minimizing $\pi(i)$ is $\mathcal T[1,7]$. This means that we choose an STPD path labeled with $\mathcal T[7,11]$. 
Observe that this also means that $\mathcal T[7]$ is the lexicographically-smallest occurrence of $P[1]$.
We repeat the process, matching the remaining characters $P[1,6]$ of $P$. As can be seen in Figure \ref{fig:STLEX} characters $P[1,2] = CG$ match (purple path starting below the root). Then, the path continues with $A$ and the pattern with $P[3] = C$. As done above, we binary search $\PDA_\pi = \stlexm$ looking for a sampled text prefix being suffixed by $P[1,2]\cdot P[3] = CGC$. Now, only one sampled prefix matches: $\mathcal T[1,7]$, corresponding to an STPD path labeled with $\mathcal T[7,11]$.
Observe that this means that $\mathcal T[7]$ is the lexicographically-smallest occurrence of $P[3]$ being preceded by $P[1,2]$; in other words, $\mathcal T[7-2,7] = \mathcal T[5,7]$ is the lexicographically-smallest occurrence of $P[1,3]$. 
We therefore continue matching the remaining pattern's suffix $P[3,6]$ on 
$\mathcal T[7,11]$. This time, the whole pattern's suffix matches, hence we are done. 
Since the last binary search returned the sampled text's prefix $\mathcal T[1,7]$, and before running the search we already matched $P[1,2]$, we return pattern occurrence $\mathcal T[7-2,(7-2) + |P| -1] = \mathcal{T}[5,10]$ which, by construction of our STPD and by the order-preserving property of $\pi$, is the lexicographically-smallest one (in this example, also the only one). 
\end{example}

\paragraph{Data structures and search algorithm.} 
We show that, interestingly, a small modification of the search algorithm of Suffixient Arrays \cite{cenzato2025suffixientarraysnewefficient} (see Appendix \ref{app:suffixient arrays}) allows locating the primary pattern occurrence on any order-preserving STPD.
The search algorithm is sketched in Example~\ref{ex:locate primary} and formalized in Algorithm~\ref{alg: locate primary}. We need the following data structures:

\begin{enumerate}
    \item A Range Minimum Data structure on $\pi(\PDA_\pi)$, requiring just $O(|\PDA_\pi|)$ bits and answering queries in constant time. In Algorithm~\ref{alg: locate primary}, this data structure allows finding the $\arg\min$ at Line \ref{line:argmin} in constant time. 
    \item A data structure supporting suffix searches on the text's prefixes $\{\mathcal T[1,\PDA_\pi[t]]\ :\ t\in [|\PDA_\pi|]\}$. 
    Given $(i,j,c) \in [n]\times [n] \times \Sigma$,  $\sufsearch(i,j,c)$ returns the maximal range $[b,e]\subseteq [|\PDA_\pi|] $ such that $\mathcal T[1,\PDA_\pi[t]]$ is suffixed by $\mathcal T[i,j]\cdot c$ for all $t\in [b,e]$. 
    This structure is used in Line \ref{line:sufsearch} of Algorithm~\ref{alg: locate primary}.
    We discuss different possible implementations for this structure below. 
    \item A text oracle supporting random access on $\mathcal T$. To speed up $\sufsearch$ we may also require the oracle to support $\lce$ and/or fingerprinting queries on $\mathcal T$ (we obtain different performance depending on which queries are available, read below).
\end{enumerate}

\SetKwComment{Comment}{/* }{ */}
\begin{algorithm}[H]
\caption{Locating the primary pattern occurrence} \label{alg: locate primary}
\LinesNumbered
\KwIn{Pattern $P\in (\Sigma\setminus \{\$\})^m$.}
\KwOut{Primary occurrence of $P$, i.e. position $i\in [n]$ such that (1) $\mathcal T[i,i+m-1]=P$ and (2) $i$ minimizes $\pi(i)$ among all the occurrences of $P$. If $P$ does not occur in $\mathcal T$, return \texttt{NOT\_FOUND}.}
$i\gets \pi^{-1}(1)$\;
$j\gets 1$\ \Comment*[r]{Invariant: $P[1,j-1] = \mathcal T[i-j+1,i-1]$ minimizes $\pi(i-j+1)$}
\While{ $j \le m$ }{
    \If{$\mathcal T[i] \ne P[j]$\label{line:locate one if}} {
        $[b,e] \gets \sufsearch(i-j+1,i-1,P[j])$\;\label{line:sufsearch}
        $i' \gets \arg\min_{i'\in [b,e]}  \{ \pi(\PDA_\pi[i']) \}$\; \label{line:argmin}
        $i \gets \PDA_\pi[i']$\;\label{line:get PDA sample}
    }
    $i \gets i+1;\ j\gets j+1$\;\label{line:incr i,j}
}

\eIf{$\mathcal T[i-m,i-1] = P$\label{line:compare random access}}{
    \textbf{return } $i-m$\;
}{
    \textbf{return } \texttt{NOT\_FOUND}\;
}

\end{algorithm}

\paragraph{Complexity.}

As observed above, the $\arg\min$ operation in Line  \ref{line:argmin} takes just $O(1)$ time using a Range Minimum data structure built over $\pi(\PDA_\pi)$.

A first simple implementation of $\sufsearch(i,j,c)$ uses binary search on $\PDA_\pi$ and random access on $\mathcal T$. This solution executes $O(\log|\PDA_\pi|) \subseteq O(\log n)$ random access queries, each requiring the extraction of $O(m)$ contiguous text characters. If the random access oracle supports the extraction of $\ell$ contiguous text characters with $O(1+\ell/B)$ I/O complexity, then this solution has $O((1+m/B)\log n)$ I/O complexity. Due to its simplicity, small space usage (only array $\PDA_\pi$ and the text oracle are required), and attractive I/O complexity, this is the solution we implemented in practice in the index tested in Section \ref{sec:experiments}.

A second optimized implementation of $\sufsearch$ executes one $\llce$ and one random access query for every binary search step. This solution runs in $O(t\cdot \log|\PDA_\pi|) \subseteq O(t \log n)$ time (respectively, I/O complexity), where $t$ is the time complexity (respectively, I/O complexity) of $\llce$ and random access queries. 

As done in Section \ref{sec:stlex}, a faster solution can be obtained using z-fast tries \cite{BelazzouguiBPV09,boldi-z-fast}. 
Since z-fast tries are guaranteed to answer correctly only internal suffix search queries (that is, the queried string suffixes at least one of the strings in the trie), we build a separate z-fast trie for every distinct character $c\in\Sigma$. The z-fast trie associated with character $c$ is built over text prefixes $\{\mathcal T[1,i-1]\ :\ i\in \PDA_\pi \wedge \mathcal T[i]=c\}$. 
We store the z-fast tries in a map associating each $c\in \Sigma$ to the corresponding z-fast trie and supporting constant-time lookup queries. At this point, query $\sufsearch(i-j+1,i-1,c)$ is solved by issuing an internal suffix search query $\mathcal T[i-j+1,i-1]$ on the z-fast trie associated with character $c$. 
In order to return the result $[b,e] = \sufsearch(i-j+1,i-1,c)$, we store along the z-fast trie for character $c$ the number $\Delta_c = |\{i\in \PDA_\pi\ :\  \mathcal T[i]<c\}|$ of sampled text prefixes ending with a character being smaller than $c$. Letting $[b',e']$ be the colexicographic range (retrieved with the z-fast trie for $c$) of $\mathcal T[i-j+1,i-1]$ among the text prefixes $\{\mathcal T[1,i-1]\ :\ i\in \PDA_\pi \wedge \mathcal T[i]=c\}$, then the result of $\sufsearch(i-j+1,i-1,c)$ is $[b,e] = [b'+\Delta_c,e'+\Delta_c]$.
This implementation of $\sufsearch$ runs in $O(h\log m)$ time (respectively, I/O complexity), where $h$ is the running time (respectively, I/O complexity) of fingerprinting queries on the text oracle.

Observe that Algorithm~\ref{alg: locate primary} calls $\sufsearch$ once for every STPD path crossed while reaching $\locus(P)$ from the suffix tree root. The number of crossed paths is upper-bounded by the node depth $d$ of $\locus(P)$ in the suffix tree of $\mathcal T$.
In the end (Line \ref{line:compare random access}), Algorithm~\ref{alg: locate primary} compares the pattern with a text substring of length $m$.
Using the latter implementation of $\sufsearch$ (z-fast tries), we obtain: 

\begin{lemma}\label{lem:locate primary complexity}
    Let $\mathcal T\in \Sigma^n$ be a text and $\pi:[n]\rightarrow [n]$ be an order-preserving permutation. 
    Suppose we have access to an oracle  supporting fingerprinting queries on $\mathcal T$ in $O(h)$ time (respectively, I/O complexity) and extraction of $\ell$ contiguous text characters in $e(\ell)$ time (respectively, I/O complexity).
    Then, Algorithm~\ref{alg: locate primary} runs in  $O(d\cdot h\log m + e(m))$ time (respectively, I/O complexity), where $d$ is the node depth of $\locus(P)$ in the suffix tree of $\mathcal T$.
\end{lemma}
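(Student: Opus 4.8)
The statement only asserts a time bound, but that bound is vacuous without knowing that Algorithm~\ref{alg: locate primary} halts and returns the primary occurrence, so I would establish both. Correctness I would prove by induction on the loop iterations, maintaining the invariant annotated in the pseudocode: at the top of the \texttt{while} loop, $P[1,j-1]=\mathcal T[i-j+1,i-1]$ and $i-j+1$ is the occurrence of $P[1,j-1]$ minimizing $\pi$ over all its occurrences. The base case $j=1$, $i=\pi^{-1}(1)$ is immediate. For the inductive step, if $\mathcal T[i]=P[j]$ the invariant extends to $j+1$ at once, the minimality being preserved by the universal-minimization consequence of the order-preserving property stated after Definition~\ref{def:order-preserving pi}. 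If $\mathcal T[i]\neq P[j]$ and $P$ occurs in $\mathcal T$, I would first show that $\alpha:=P[1,j-1]$ is right-maximal and, crucially, that $\mathcal T[i]$ is its $\pi$-smallest right-extension in the sense of Lemma~\ref{lem:PDA samples}: by the inductive hypothesis $i-1$ is the $\pi$-smallest ending position of $\alpha$, and since $i-1$ is followed by $\mathcal T[i]\neq c$ and $\pi$ is injective, $\pi(\alpha\cdot\mathcal T[i])\le\pi(i-1)<\pi(\alpha\cdot c)$ for every other right-extension $c$. Taking $c=P[j]$, Lemma~\ref{lem:PDA samples} yields a sample $t\in\PDA_\pi$ with $\mathcal T[1,t]$ suffixed by $P[1,j]$, so $\sufsearch(i-j+1,i-1,P[j])$ returns a non-empty range $[b,e]$, the Range Minimum Query at Line~\ref{line:argmin} selects the sample realizing the $\pi$-smallest occurrence of $P[1,j]$, and the updates at Lines~\ref{line:get PDA sample}--\ref{line:incr i,j} re-establish the invariant for $j+1$. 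If instead $P[1,j]$ does not occur, $\sufsearch$ returns an empty range and we report \texttt{NOT\_FOUND}. Finally, when the loop exits with $j=m+1$ the invariant gives $P=\mathcal T[i-m,i-1]$ with $i-m$ the $\pi$-smallest occurrence, which is the unique primary occurrence, and the comparison at Line~\ref{line:compare random access} confirms it.

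\textbf{Running time.} I would split the work into three parts. (1)~\emph{Path changes.} $\sufsearch$ is invoked exactly once per iteration that enters the \texttt{if}-branch, i.e.\ once per STPD path crossed on the way from the root to $\locus(P)$, hence at most $d$ times. Implemented with the per-character z-fast tries described above, each call performs $O(\log m)$ fingerprinting queries on substrings of $\mathcal T$ (the queried string $\mathcal T[i-j+1,i-1]$ is a substring of $\mathcal T$ of length $<m$), costing $O(h\log m)$, while the companion Range Minimum Query costs $O(1)$; total $O(d\,h\log m)$. (2)~\emph{Matching along a path.} Between two consecutive path changes the algorithm only reads consecutive text characters while advancing $j$ from $1$ up to $m$ without ever decreasing, so these runs have total length $\le m$; writing $\ell_k$ for the length of the $k$-th run, each is a contiguous block of $\mathcal T$ read with $O(1+\ell_k/B)$ I/Os, summing to $O(d+m/B)$ I/Os (resp.\ $O(d+m)$ time), which is absorbed into the claimed bound. (3)~\emph{Final comparison.} Line~\ref{line:compare random access} extracts and compares one length-$m$ text block, costing $e(m)$. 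Adding the three parts gives $O(d\,h\log m+e(m))$.

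\textbf{Main obstacle.} The delicate part is the correctness argument, specifically reconciling two uses of $\pi$: the algorithm always minimizes $\pi$ evaluated at the \emph{last} matched text position (that is where the samples of $\PDA_\pi$ live), whereas both the loop invariant and the definition of primary occurrence speak of $\pi$ at the \emph{first} position of the occurrence. The order-preserving property --- in the strong form that minimizing $\pi$ at one fixed offset inside a fixed substring is equivalent to minimizing it at any other offset --- is exactly what bridges this gap. A closely related subtlety is verifying that the mismatch character $\mathcal T[i]$ is forced to be the $\pi$-smallest right-extension of $P[1,j-1]$, which is what lets Lemma~\ref{lem:PDA samples} apply and guarantees that the range returned by $\sufsearch$ is non-empty precisely when $P$ occurs in $\mathcal T$.
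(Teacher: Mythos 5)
Your proof is correct and takes essentially the same route as the paper: at most $d$ calls to $\sufsearch$ (one per STPD path crossed), each costing $O(h\log m)$ with the per-character z-fast-trie implementation plus an $O(1)$ range-minimum query, and $e(m)$ for the final verification at Line~\ref{line:compare random access}. The correctness invariant you fold in is proved separately in the paper (Lemma~\ref{lem: locate primary correct}) by the same induction, and your explicit accounting of the in-loop character reads is a harmless extra that the paper leaves implicit.
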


For example, using the text oracle of Prezza \cite{prezzaTalg21} the I/O complexity of Algorithm~\ref{alg: locate primary} becomes $O(d\log m + m/B)$.

\begin{lemma}\label{lem: locate primary correct}
    Algorithm~\ref{alg: locate primary} is correct and complete. 
\end{lemma}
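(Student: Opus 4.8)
The plan is to prove correctness and completeness by establishing that the loop invariant stated in the pseudocode is maintained throughout execution, and then arguing that (a) if $P$ occurs in $\mathcal T$ the algorithm terminates with $i$ pointing just past an occurrence of $P$ that minimizes $\pi$, and (b) if $P$ does not occur, the final random-access comparison at Line~\ref{line:compare random access} fails and \texttt{NOT\_FOUND} is returned. Concretely, the invariant I would carry through the \textbf{while} loop is: just before testing the loop condition with counter value $j$, the position $i$ satisfies $\mathcal T[i-j+1,i-1] = P[1,j-1]$ \emph{and} $i-j+1$ minimizes $\pi(i-j+1)$ among all positions $i''$ with $\mathcal T[i''-j+1,i''-1]=P[1,j-1]$ (equivalently, $\mathcal T[i-j+1,i-1]$ is the primary occurrence of the prefix $P[1,j-1]$), \emph{provided} $P[1,j-1]$ occurs in $\mathcal T$. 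The base case $j=1$ holds since $P[1,0]=\epsilon$ occurs everywhere and $i=\pi^{-1}(1)$ trivially minimizes $\pi$.

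For the inductive step I would distinguish the two branches. If $\mathcal T[i]=P[j]$, then no \sufsearch{} is run; after Line~\ref{line:incr i,j} we have the new pair $(i+1,j+1)$ with $\mathcal T[(i+1)-(j+1)+1,(i+1)-1]=\mathcal T[i-j+1,i]=P[1,j]$, and minimality is inherited directly: if $\mathcal T[i-j+1,i-1]$ minimizes $\pi(i-j+1)$ over occurrences of $P[1,j-1]$, then by the universal minimization property of order-preserving permutations (stated right after Definition~\ref{def:order-preserving pi}) the extended occurrence $\mathcal T[i-j+1,i]$ minimizes $\pi(i-j+1)$ over occurrences of $P[1,j]$, because extending both sides by the common character $P[j]=\mathcal T[i]$ preserves the $\pi$-ordering of starting positions. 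If instead $\mathcal T[i]\ne P[j]$, the key point is that $\sufsearch(i-j+1,i-1,P[j])$ returns exactly the colexicographic range of those samples $\PDA_\pi[i']$ for which $\mathcal T[1,\PDA_\pi[i']]$ is suffixed by $\mathcal T[i-j+1,i-1]\cdot P[j] = P[1,j]$; the $\arg\min$ at Line~\ref{line:argmin} then picks $t$ minimizing $\pi(\PDA_\pi[t])$, and after Line~\ref{line:incr i,j} the pair becomes $(\PDA_\pi[t]+1, j+1)$ with $\mathcal T[(\PDA_\pi[t]+1)-(j+1)+1,\PDA_\pi[t]] = \mathcal T[\PDA_\pi[t]-j+1,\PDA_\pi[t]] = P[1,j]$. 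To re-establish minimality at level $j+1$ I would invoke Lemma~\ref{lem:PDA samples} (in its form) together with the universal minimization property: Lemma~\ref{lem:PDA samples} guarantees that whenever the primary occurrence of $P[1,j]$ extends beyond the current STPD path -- which is exactly what a mismatch $\mathcal T[i]\ne P[j]$ detects, since $P[1,j-1]$ was the primary occurrence of $P[1,j-1]$ and the path continues with $\mathcal T[i]\ne P[j]$ -- the starting-position-minimizing occurrence of $P[1,j]$ is recorded as some $t\in\PDA_\pi$, so the $\arg\min$ over the returned range genuinely finds the primary occurrence of $P[1,j]$.

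I expect the main obstacle to be precisely this last point: carefully justifying that the mismatch branch does not ``miss'' the primary occurrence, i.e., that the primary occurrence of $P[1,j]$ is always represented in $\PDA_\pi$ with the correct suffix $P[1,j]$ whenever we enter that branch. This needs the observation that at a mismatch the primary occurrence of $P[1,j]$ does \emph{not} coincide with (the extension of) the primary occurrence of $P[1,j-1]$ along the current path, hence $j + \LPF_\pi$ at the relevant position equals $j$ -- putting that position's endpoint into $\PDA_\pi$ -- which is the content of Lemma~\ref{lem:PDA samples} applied with $\alpha=P[1,j-1]$, $a=\mathcal T[i]$, $b=P[j]$. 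One must also handle the corner cases where $P[1,j]$ does not occur at all (then $\sufsearch$ returns an empty range, and I would argue the algorithm can detect this and fall through to the final comparison, or equivalently that the post-loop check at Line~\ref{line:compare random access} catches it), and the termination/output argument: since $j$ strictly increases each iteration and $i-j$ is fixed by the invariant computation, after $m$ iterations $j=m+1$ and the invariant gives $\mathcal T[i-m,i-1]=P$ with $i-m$ minimizing $\pi$; the final comparison then confirms the occurrence (it succeeds iff $P$ occurs), and returning $i-m$ yields the primary occurrence, which is unique and $\pi$-minimal by the Lemma preceding this subsection. Completeness (always returning an answer, never looping) follows from the strict increase of $j$ and the boundedness of all sub-operations.
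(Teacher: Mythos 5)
Your proposal is correct and follows essentially the same route as the paper's proof: the same loop invariant (primary occurrence of the matched prefix $P[1,j-1]$), the same two-branch induction where the match branch uses the order-preserving/minimization property and the mismatch branch applies Lemma~\ref{lem:PDA samples} with $\alpha=P[1,j-1]$, $a=\mathcal T[i]$, $b=P[j]$ to guarantee the $\pi$-minimal occurrence of $P[1,j]$ is sampled, and the same final random-access check to handle non-occurring patterns. Your explicit mention of the empty-range corner case when $P[1,j]$ does not occur is a minor extra precaution that the paper subsumes by conditioning the invariant on $P$ occurring in $\mathcal T$.
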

\begin{proof}
    We prove that the following invariant is maintained by the \texttt{while} loop: if $P$ occurs in $\mathcal T$, then $P[1,j-1] = \mathcal T[i-j+1,i-1]$ is the occurrence of $P[1,j-1]$ minimizing $\pi(i-j+1)$. At the beginning, the invariant is clearly true by the way we choose $i$ and $j$ (in particular, $P[1,j-1]$ is the empty string).

    We show that the invariant is maintained after one execution of the \texttt{while} loop, assuming that $P$ occurs in $\mathcal T$.
    If the condition of the \texttt{if} statement in Line \ref{line:locate one if} fails, then $\mathcal T[i] = P[j]$, we just increment $i$ and $j$ in Line \ref{line:incr i,j}, and the invariant still holds by the order-preserving property of $\pi$. Otherwise, let $\mathcal T[i] \neq P[j]$. Then, since $P$ occurs in $\mathcal T$, string $\alpha = P[1,j-1]$ is right-maximal in $\mathcal T$: there exists a copy of $\alpha$ followed by $P[j]$ and one followed by $\mathcal T[i]\neq P[j]$. 
    By the inductive hypothesis, $P[1,j-1] = \mathcal T[i-j+1,i-1]$ minimizes $\pi(i-j+1)$ across all occurrences of $P[1,j-1]$. By the order-preserving property of $\pi$, $\mathcal T[i-j+1,i-1]$ also minimizes $\pi(i-1)$ across all occurrences of $P[1,j-1]$. Then, this means that $\pi(\alpha\cdot \mathcal T[i]) < \pi(\alpha\cdot P[j])$ (see Lemma \ref{lem:PDA samples} for the definition of this extension of $\pi$ to right-extensions of right-maximal strings) so, by Lemma~\ref{lem:PDA samples}, we have that there exists $t\in \PDA_\pi$ such that $\mathcal T[1,t]$ is suffixed by $\alpha\cdot P[j] = P[1,j]$. But then, $\sufsearch(i-j+1,i-1,P[j])$ (implemented with z-fast tries) is guaranteed to return the (non-empty) range $\PDA_\pi[b,e]$ containing all and only the text prefixes in $\PDA_\pi$ being suffixed by $P[1,j]$. Among those, Lines \ref{line:argmin}-\ref{line:get PDA sample} choose the one $i \in \PDA_\pi[b,e]$ minimizing $\pi(i)$. 
    At this point, by the order-preserving property of $\pi$, we have that $\mathcal T[i-j+1,i] = P[1,j]$ is the occurrence of $P[1,j]$ minimizing $\pi(i-j+1)$.
    In Line \ref{line:incr i,j} we increment $i$ and $j$, hence the invariant is maintained. This proves completeness and correctness under the assumption that $P$ occurs in $\mathcal T$.

    If, on the other hand, $P$ does not occur in $\mathcal T$, then the comparison in Line \ref{line:compare random access} detects this event and we correctly return \texttt{NOT\_FOUND}. This completes the proof. 
\end{proof}

\subsubsection{Locating the secondary occurrences.}

   We now describe how to locate the secondary occurrences. This subsection is devoted to proving the following lemma.
    \begin{lemma}\label{lem:finding secondary occ}
        Let $\mathcal T\in \Sigma^n$ be a text and $\pi:[n]\rightarrow[n]$ be an order-preserving permutation. Let $P\in\Sigma^m$ occur in $\mathcal T$. There exists a data structure that takes $O(|\PDA_\pi|)$ words of space and that, given the primary occurrence of $P$, finds all the $occ-1$ secondary occurrences in $O(occ\cdot \log|\PDA_\pi|) \subseteq O(occ\cdot \log n)$ time (equivalently, I/O complexity).
    \end{lemma}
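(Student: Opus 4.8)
The plan is to run the classical primary/secondary decomposition of the occurrences of $P$, but with the parse induced by $\LPF_\pi$ in place of the Lempel--Ziv parse and with the $\pi$-order in place of the text order. Write the irreducible $\LPF_\pi$ positions as $1=i_1<i_2<\dots<i_p$, so $p=|\PDA_\pi|$ by Remark~\ref{rem: PDA = irreducible}, and set $i_{p+1}:=n+1$. By Lemma~\ref{lem: LPF almost nondecreasing}, inside each block $[i_k,i_{k+1}-1]$ the array $\LPF_\pi$ decreases by exactly one, i.e.\ $\LPF_\pi[i]=\LPF_\pi[i_k]-(i-i_k)$; hence the blocks $[i_k,i_{k+1}-1]$ tile $[1,n]$ (the parse's phrases), while the ``source windows'' $[s_k,\,s_k+\LPF_\pi[i_k]-1]$ --- where, exactly as in the proof of Theorem~\ref{thm: compressing PDA}, $s_k$ is fixed with $\pi(s_k)<\pi(i_k)$ and $\mathcal T[s_k,\,s_k+\LPF_\pi[i_k]-1]=\mathcal T[i_k,\,i_k+\LPF_\pi[i_k]-1]$ --- can be strictly longer than their blocks and may overlap one another.

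Next I would build the tree of occurrences. An occurrence $\mathcal T[i,i+m-1]=P$ is secondary iff $\LPF_\pi[i]\ge m$, which by the above is equivalent to the whole occurrence $[i,i+m-1]$ lying inside block $k(i)$'s factor $[i_{k(i)},\,i_{k(i)}+\LPF_\pi[i_{k(i)}]-1]$, where $k(i)$ is the block containing $i$; for such $i$ put $\mathrm{src}(i):=s_{k(i)}+(i-i_{k(i)})$. Iterating the order-preserving property (as in the proof of Theorem~\ref{thm: compressing PDA}) shows $\mathcal T[\mathrm{src}(i),\mathrm{src}(i)+m-1]=P$ and $\pi(\mathrm{src}(i))<\pi(i)$, so $\mathrm{src}$ sends every secondary occurrence to another occurrence of $P$ with strictly smaller $\pi$-value; since the only non-secondary occurrence is the (unique, $\pi$-minimal) primary one, following $\mathrm{src}$ from any secondary occurrence terminates at the primary one, so the directed graph $\{i\to\mathrm{src}(i):i\text{ secondary}\}$ is a tree that, after edge reversal, becomes a tree $\mathcal O$ rooted at the primary occurrence and spanning all $occ$ occurrences. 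The algorithm BFSes $\mathcal O$ from the given root, outputting every other node; the crux is to list the children of an occurrence $x$ efficiently. To block $k$ I associate the rectangle $R_k:=[\,s_k,\ s_k+(i_{k+1}-i_k)-1\,]\times[\,1,\ s_k+\LPF_\pi[i_k]-1\,]$ and claim that $k\mapsto q_k:=i_k+(x-s_k)$ is a bijection between $\{\,k:(x,x+m-1)\in R_k\,\}$ and the children of $x$ in $\mathcal O$. Indeed $(x,x+m-1)\in R_k$ unfolds to $0\le x-s_k\le i_{k+1}-i_k-1$ and $x+m-1\le s_k+\LPF_\pi[i_k]-1$: the former says $q_k\in[i_k,i_{k+1}-1]$, i.e.\ $k(q_k)=k$, and then $[x,x+m-1]$ sits in block $k$'s source window, so $\mathcal T[q_k,q_k+m-1]=\mathcal T[x,x+m-1]=P$ and $\LPF_\pi[q_k]=\LPF_\pi[i_k]-(q_k-i_k)\ge m$, i.e.\ $q_k$ is a secondary occurrence with $\mathrm{src}(q_k)=s_k+(q_k-i_k)=x$; conversely every child $q$ lies in a unique block $k(q)$ and its defining identities rearrange to $(x,x+m-1)\in R_{k(q)}$ with $q_{k(q)}=q$, while distinct blocks $k$ produce $q_k$'s in distinct blocks (injectivity).

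The data structure is then just the $p$ triples $(i_k,\LPF_\pi[i_k],s_k)$ --- from which $i_{k+1}$ and $R_k$ are recovered in $O(1)$ --- together with a standard structure for planar orthogonal point enclosure on $\{R_k\}_{k\in[p]}$; since the $R_k$ are $3$-sided (lower $y$-bound $1$), this occupies $O(p)=O(|\PDA_\pi|)$ words and reports the $t$ rectangles containing a query point in $O(\log p+t)$ time. The BFS issues one such query (point $(x,x+m-1)$) per visited occurrence; over the whole run the query outputs sum to $occ-1$, the edges of $\mathcal O$, and each reported $R_k$ is turned into $q_k$ in $O(1)$ extra time, giving $\sum_x O(\log p+t_x)=O(occ\cdot\log p)\subseteq O(occ\cdot\log n)$ total time in $O(|\PDA_\pi|)$ words --- which is the claim.

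The step I expect to be the main obstacle is getting this geometry right under overlap. A source window $[s_k,\,s_k+\LPF_\pi[i_k]-1]$ can be strictly longer than its phrase $[i_k,i_{k+1}-1]$, precisely when $\LPF_\pi$ jumps up at an irreducible position (Definition~\ref{def:irreducible LPF value}). Consequently the textbook Lempel--Ziv trick --- a $2$-sided dominance query on the point $(\text{source start},\text{source end})$ --- breaks down here: it would report one secondary occurrence through several overlapping source windows (wrecking the $O(occ)$ accounting), and the obvious repair of truncating each source window to phrase length would, in turn, miss the boundary-crossing secondary occurrences (those whose start lies in phrase $k$ but which extend past $i_{k+1}-1$). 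Using the phrase length as the $x$-extent of $R_k$, so that the query constrains only the \emph{start} $q_k$ to lie in phrase $k$ while letting the occurrence itself spill freely into later phrases, is what turns the correspondence with the edges of $\mathcal O$ into a genuine bijection; carefully checking that bijection (both inclusions and injectivity) against Lemma~\ref{lem: LPF almost nondecreasing} is the only delicate point, the remainder being the by-now-routine primary/secondary recursion.
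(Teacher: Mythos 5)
Your construction is correct and is essentially the paper's own proof: the same parse of $\mathcal T$ into phrases starting at the irreducible $\LPF_\pi$ positions with sources of smaller $\pi$, the same rectangles whose first coordinate is restricted to the maximal run of reducible positions (the paper's ``reducible prefix'', your ``block''), the same $O(|\PDA_\pi|)$-size orthogonal point-enclosure structure queried at $(x,x+m-1)$, and the same traversal started from the primary occurrence. The only difference is presentational: you package correctness, completeness and the no-duplicate accounting into the explicit parent map $\mathrm{src}$ and a BFS of the resulting tree of occurrences, whereas the paper establishes the same facts for its stack-based procedure via minimal-$\pi$ counterexample arguments.
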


    We start by covering $\mathcal{T}$ with $|\PDA_\pi|$ (possibly overlapping) \emph{phrases} (that is, substrings of $\mathcal{T}$), as follows: 

    \begin{definition}[Phrase cover of $\mathcal T$]
    \begin{itemize}
        \item \emph{(Type-1 phrases)} we associate phrase $\mathcal T[i]$ (one character) to each position $i \in [n]$ such that $\LPF_\pi[i]=0$.
         \item \emph{(Type-2 phrases)} We associate phrase $\mathcal T[i,i+\LPF_\pi[i]-1]$ to each irreducible $\LPF_\pi$ position $i$ such that $\LPF_\pi[i]>0$.
    \end{itemize}
    \end{definition}

    Observe that there are at most $|\PDA_\pi|$ type-2 phrases. 
    By definition, our cover of $\mathcal T$ into phrases satisfies the following properties: 

    \begin{remark}
        Any pattern occurrence $\mathcal T[i',i'+m-1]$ crossing a type-1 phrase $\mathcal T[i]$ (i.e. $i\in [i',i'+m-1]$) is primary. 
        To see this observe that, due to $\LPF_\pi[i]=0$, we know that $c = \mathcal T[i]$ is the occurrence of $c$ minimizing $\pi(i)$, hence by the order-preserving property of $\pi$, $\mathcal T[i',i'+m-1] = P$ is the occurrence of $P$ minimizing $\pi(i')$.
    \end{remark}  

    \begin{remark}
        For each secondary occurrence $\mathcal T[i',i'+m-1]$, there exists a type-2 phrase $\mathcal T[i,i+\LPF_\pi[i]-1]$ containing it, i.e. $[i',i'+m-1] \subseteq [i,i+\LPF_\pi[i]-1]$. This immediately follows from the definition of secondary occurrence. 
    \end{remark}

    Observe that each type-2 phrase is copied from another text position with a smaller $\pi$. We formalize this fact as follows:

    \begin{definition}[Phrase source]\label{def:phrase source}
        Let $\mathcal T[i,i+\LPF_\pi[i]-1]$ be a type-2 phrase. Then, we define $\SRC[i] = j$ to be the position $j$ with $\pi(j)<\pi(i)$ such that $\mathcal T[i,i+\LPF_\pi[i]-1] = \mathcal T[j,j+\LPF_\pi[i]-1]$ (in case of ties, choose the leftmost such position $j$).
        % \sunghwan{In the next version, we may have to break ties using $\pi()$. For example, when $\pi$ is lex order, it seems that choosing $j$ maximizing $\pi(j)$ makes $\SRC$ the same as $\phi$-function (even when we extend the array $\SRC$ into all $i\in[n]$). Maybe we can find something interesting with this. But in this version, it seems that we do not have any advantage with breaking ties using $\pi()$, so let us leave it as is for now.}
    \end{definition}

    We also use the fact that each phrase can be split into a prefix containing strictly decreasing $\LPF_\pi$ values and the remaining suffix. This will play a crucial role in our locating algorithm, as it will allow us to report each secondary occurrence exactly once. 

    \begin{definition}[Reducible prefix]\label{def:reducible prefix}
        Let $\mathcal T[i,i+\LPF_\pi[i]-1]$ be a type-2 phrase. The \emph{reducible prefix} of $\mathcal T[i,i+\LPF_\pi[i]-1]$ is its longest prefix $\mathcal T[i,i+\ell_i-1]$, with $1\le \ell_i \le \LPF_\pi[i]$, such that $\LPF_\pi[j] = \LPF_\pi[j-1]-1$ for each $j \in [i+1,i+\ell_i-1]$.
    \end{definition}

    \begin{remark}\label{rem: reducible prefix}
    Observe that all $\LPF_\pi$ positions in $[i+1,i+\ell_i-1]$ are reducible (that is, not irreducible).    
    \end{remark}

    See Figure \ref{fig:locateall} for a running example. 

    \begin{figure}
        \centering
        \includegraphics[width=0.7\linewidth]{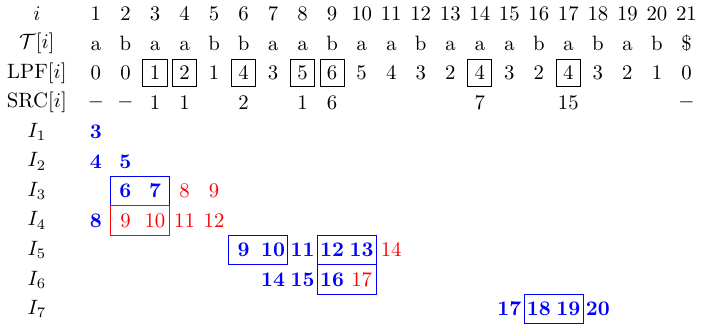}
        \caption{
        Consider this particular string $\mathcal T$ and take $\pi = id$ to be the identity function. Then, $\LPF_\pi = \LPF$ is the classic Longest Previous Factor array. In the third line, we put in boxes the irreducible $\LPF$ positions $i$ such that $\LPF[i]>0$; these are the beginnings of our phrases (7 phrases in total). For example, consider such a position $i=9$. Since $\LPF[9]=6$, the corresponding phrase is $\mathcal T[9,9+6-1] = \mathcal T[9,14]$. The corresponding $\LPF$ sub-array is $\LPF[9,14] = (\underline 6,\underline 5,\underline 4,\underline 3,\underline 2,4)$, where we underlined the phrase's reducible prefix (Definition \ref{def:reducible prefix}) of length $\ell_9=5$. In Lines $I_1$---$I_7$ in the figure, we show each type-2 phrase's source as an interval containing the phrase's positions, colored according to its reducible prefix (in blue) and the remaining suffix (red). For example, 
        consider again phrase $\mathcal T[9,14]$, and let $s_9 = \SRC[9] = 6$ be its source. Row $I_5$ shows the phrase's source $\mathcal T[s_9, s_9 + \LPF[9] - 1] = \mathcal T[6, 11]$, highlighting in blue the source of the phrase's reducible prefix and in red the remaining suffix. In lines $I_1$---$I_7$, we moreover show, using boxes, how the occurrences of string $ba$ intersect the phrases' sources; the box is blue when the occurrence intersects (the source of) the phrase's reducible prefix, red otherwise. By the way we design the rectangles indexed in our orthogonal point enclosure data structure (see Definition \ref{def: rectangle} and Lemma \ref{lemma: orthogonal point enclosure} below), only occurrences in a blue box trigger the location of further secondary occurrences. 
        \textbf{Example of locate}. After locating the primary occurrence $\mathcal T[2,3]=ba$, an orthogonal point enclosure query on point $(2,3)$ yields rectangle $[2,3]\times[2,5]$, corresponding to the phrase starting in position 6 (and whose source is shown in line $I_3$). This leads us to discover the secondary occurrence $\mathcal T[6,7]$. 
        Observe that, even if $\mathcal T[2,3]$ overlaps also the source of the phrase starting in position 8, point $(2,3)$ is not contained in the phrase's rectangle $[1,1]\times [1,5]$ (in Line $I_4$, this fact can be visualized by noting that the box lies completely in the red area). This is correct, as otherwise following this source would lead us locating secondary occurrence $\mathcal T[9,10]$, which will be located again later when issuing the orthogonal point enclosure query on point $(6,7)$.
        }
        \label{fig:locateall}
    \end{figure}
        
    The idea to locate secondary occurrences, is to associate every type-2 phrase $\mathcal T[i,i+\LPF_\pi[i]-1]$ with a 2-dimensional rectangle whose coordinates reflect the source of the whole phrase and of its reducible prefix: 

    \begin{definition}[Rectangle associated with a type-2 phrase]\label{def: rectangle}
        Let $\mathcal T[i,i+\LPF_\pi[i]-1]$ be a type-2 phrase, $s_i = \SRC[i]$ be its source, and $\ell_i$ be the length of its reducible prefix. We associate with $\mathcal T[i,i+\LPF_\pi[i]-1]$ the 2-dimensional rectangle $[s_i,s_i+\ell_i-1]\times[s_i,s_i+\LPF_\pi[i]-1] \subseteq [n]^2$, and label it with position $i$.
    \end{definition}

    \begin{example}
        Continuing the running example of Figure \ref{fig:locateall}, consider again phrase $\mathcal T[9,14]$, let $s_9 = \SRC[9] = 6$ be its source, and $\ell_9 = 5$ be the length of its reducible prefix. This phrase is associated with rectangle $[s_9,s_9+\ell_9-1]\times[s_9,s_9+\LPF[9]-1] = [6,10] \times [6,11]$. In Figure \ref{fig:locateall}, Line $I_5$, the first coordinate $[6,10]$ is highlighted in blue, while the second coordinate $[6,11]$ is the whole colored interval in Line $I_5$ (red and blue). 
    \end{example}

    Our locating algorithm relies on the following well-known data structure result on the \emph{orthogonal point enclosure problem}: 
    
    \begin{lemma}[Orthogonal point enclosure, {\cite[Theorem~6]{Chazelle86}}]
        \label{lemma: orthogonal point enclosure}
        Let $\mathcal{R}$ be a collection of axis-parallel two-dimensional rectangles in $[n]^2$. There exists an $O(|\mathcal{R}|)$-space data structure supporting the following query: given a point $(x,y)$, find all rectangles $[a,b]\times [c,d] \in \mathcal{R}$ containing $(x,y)$, i.e. such that $x\in[a,b]$ and $y\in[c,d]$. The query is answered in $O(\log|\mathcal{R}|+k)$ time, where $k$ is the number of returned rectangles. 
    \end{lemma}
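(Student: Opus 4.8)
The plan is to construct the classical two-level \emph{interval tree} and then invoke Chazelle's \emph{filtering search} paradigm together with \emph{fractional cascading} to remove the extra logarithmic factor that a na\"ive two-level scheme would incur.

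First I would build a balanced interval tree $\mathcal{T}_x$ on the at most $2|\mathcal{R}|$ distinct $x$-coordinates occurring as vertical sides of the rectangles, storing each rectangle $R=[a,b]\times[c,d]\in\mathcal{R}$ at the unique node $v$ whose discriminant $x_v$ is the \emph{highest} one contained in $[a,b]$. For a query point $(x,y)$, the nodes whose discriminant separates $x$ form a single root-to-leaf path $\Pi(x)$, and a rectangle contains $(x,y)$ if and only if it is stored at some $v\in\Pi(x)$ and, in addition, satisfies (i) the \emph{one-sided} $x$-condition at $v$ --- since every rectangle stored at $v$ has $a\le x_v\le b$, only ``$a\le x$'' need be checked when $x<x_v$, and only ``$x\le b$'' when $x>x_v$ --- and (ii) $c\le y\le d$. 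Thus the query reduces to $O(\log|\mathcal{R}|)$ one-dimensional sub-queries, one per node of $\Pi(x)$: among the rectangles at $v$ passing the one-sided $x$-test, report those whose $y$-interval contains $y$.

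At each node $v$ I would keep its rectangles in two sorted sequences --- one by left endpoint $a$ ascending (used when $x<x_v$, where the rectangles with $a\le x$ form a prefix) and one by right endpoint $b$ descending (used when $x>x_v$, where those with $b\ge x$ form a prefix) --- and, attached to each sequence, a one-dimensional stabbing structure on the corresponding $y$-intervals. Since each rectangle is stored at exactly one node, all secondary structures together occupy $O(|\mathcal{R}|)$ space. The obstacle is time: a query touches $\Theta(\log|\mathcal{R}|)$ nodes of $\Pi(x)$, and a generic one-dimensional stabbing structure costs $\Omega(\log|\mathcal{R}|)$ per node merely to locate where reporting should begin, which would give $O(\log^{2}|\mathcal{R}| + k)$ overall.

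The crux is to eliminate this per-level logarithm. Following Chazelle's filtering search, each secondary query is allowed to return $O(1+k_v)$ \emph{candidates}, of which $k_v$ are genuine hits at $v$ and the rest are rejected in $O(1)$ time each; a charging argument then bounds the total discarded work by $O(k+\log|\mathcal{R}|)$, so per-node reporting becomes output-sensitive up to a constant factor. To avoid a fresh binary search at every node, the sorted lists encountered along any root-to-leaf descent of $\mathcal{T}_x$ --- both the $a$-order and the $b$-order lists and the refinements used inside each secondary structure --- are threaded together by fractional-cascading pointers, so that one $O(\log|\mathcal{R}|)$ search at the root, followed by $O(1)$ work to transfer the search position at each subsequent node, positions every secondary query. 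Combining these ingredients yields the claimed $O(|\mathcal{R}|)$ space and $O(\log|\mathcal{R}|+k)$ query time. I expect the main difficulty to be precisely this last coordination step: checking that the fractional-cascading links can be installed consistently across the two endpoint orders and the internal refinements without blowing the space past $O(|\mathcal{R}|)$, and that the filtering-search charging really limits the rejected candidates to $O(k)$ and not $O(k\log|\mathcal{R}|)$. Since the statement is quoted verbatim from \cite[Theorem~6]{Chazelle86}, one may also simply cite that reference; the outline above indicates how its proof proceeds.
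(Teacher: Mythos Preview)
The paper does not prove this lemma at all: it is stated with the citation \cite[Theorem~6]{Chazelle86} and used as a black box. Your final remark --- that one may simply cite the reference --- is exactly what the paper does, so your outline already goes beyond what is required here; as a sketch of Chazelle's interval-tree plus filtering-search/fractional-cascading argument it is accurate.
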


    Let $\mathcal R$ be the set of rectangles defined in Definition \ref{def: rectangle}. As noted above, $|\mathcal R| \le |\PDA_\pi|$. We build the data structure of Lemma \ref{lemma: orthogonal point enclosure} on $\mathcal R$. The structure uses $O(|\PDA_\pi|)$ words of space and answers orthogonal point enclosure queries in $O(\log|\PDA_\pi|+k) \subseteq O(\log n+k)$ time.

    \paragraph{Locating algorithm.} Let $\mathcal T[j,j+m-1] = P$ be the primary occurrence of $P$, found with Algorithm~\ref{alg: locate primary}. To locate the secondary pattern occurrences, initialize a stack $Q\gets\{j\}$. While $Q$ is not empty:
    \begin{enumerate}
        \item Pop an element $x$ from $Q$ and report pattern occurrence $x$.
        \item Locate all rectangles in $\mathcal R$ containing point $(x,x+m-1)$. For each such retrieved rectangle $[s_i,s_i+\ell_i-1]\times[s_i,s_i+\LPF_\pi[i]-1]$ labeled with position $i$, push $i+x-s_i$ in $Q$.
    \end{enumerate}

    \paragraph{Correctness.}

    To prove that we only report pattern occurrences, we prove inductively that the stack always contains only pattern occurrences. 
    At the beginning, the stack contains the primary occurrence of $P$. 
    Assume inductively that, before entering in Step 1 of the above algorithm, the stack contains only pattern occurrences. In Step 1, we pop $x$. Then, let $[s_i,s_i+\ell_i-1]\times[s_i,s_i+\LPF_\pi[i]-1]$ be a rectangle, labeled with position $i$, found in Step 2 by querying point $(x,x+m-1)$. 
    In particular, $[x, x+m-1] \subseteq [s_i, s_i + \LPF_\pi[i]-1]$. Then, 
    by the definition of $\SRC$ (Definition \ref{def:phrase source}), it holds $\mathcal T[i, i+\LPF_\pi[i]-1] = \mathcal T[s_i, s_i + \LPF_\pi[i]-1]$, hence $[i+x-s_i, i+x-s_i+m-1] \subseteq [i, i + \LPF_\pi[i]-1]$, therefore $\mathcal T[i+x-s_i, i+x-s_i+m-1] = \mathcal T[x, x+m-1]$ is a pattern occurrence. This proves that in Step 2 we only push pattern occurrences in $Q$.

    \paragraph{Completeness.} We prove that every pattern occurrence at some point is pushed in the stack (and is therefore located). Assume, for a contradiction, that this is not true. Let $\mathcal T[j,j+m-1]$ be the secondary occurrence minimizing $\pi(j)$ that is never pushed on the stack. Then, there exists a type-2 phrase $\mathcal T[i,i+\LPF_\pi[i]-1]$ entirely containing $\mathcal T[j,j+m-1]$, i.e. $[j,j+m-1] \subseteq [i,i+\LPF_\pi[i]-1]$. Without loss of generality, let $i$ be the largest position with such a property. 
    Let $\ell_i$ be the length of the phrase's reducible prefix. 
    Observe that $\mathcal T[j',j'+m-1] = \mathcal T[j,j+m-1]$, where $j' = s_i + (j-i)$, is another pattern occurrence with $\pi(j')<\pi(j)$. 
    We distinguish two cases.
    
    (i) $j \in [i,i+\ell_i-1]$. Then, the point $(j',j'+m-1)$ belongs to the rectangle $[s_i,s_i+\ell_i-1]\times[s_i,s_i+\LPF_\pi[i]-1]$. This means that also pattern occurrence $\mathcal T[j',j'+m-1]$ is never pushed on the stack, otherwise at some point the algorithm would pop $j'$ from the stack and locate $j$ as well. This is a contradiction, since either $j'$ is the primary occurrence, or $\pi(j')<\pi(j)$ and we assumed that $\mathcal T[j,j+m-1]$ is the secondary occurrence minimizing $\pi(j)$ that is never pushed on the stack.

    (ii) $j \in [i+\ell_i,i+\LPF_\pi[i]-1]$. Then, either $i+\ell_i = n+1$ (a contradiction with the fact that $j \ge i+\ell_i$) or, by Definition \ref{def:reducible prefix}, position $i' = i+\ell_i$ is irreducible with $\LPF_\pi[i']>0$. Then, $\mathcal T[i',i'+\LPF_\pi[i']-1]$ is a phrase with $i'>i$ and, as observed in Remark \ref{lem: LPF almost nondecreasing}, it must be $i'+\LPF_\pi[i']-1 \ge i+\LPF_\pi[i]-1 \ge j+m-1$. This means that $\mathcal T[i',i'+\LPF_\pi[i']-1]$ is a type-2 phrase with $i'>i$ that contains $\mathcal T[j,j+m-1]$ (i.e. $[j,j+m-1] \subseteq [i',i'+\LPF_\pi[i']-1]$), a contradiction since we assumed that $i$ was the largest such position. 

    \paragraph{Complexity.} In order to prove Lemma \ref{lem:finding secondary occ}, we are left to show that every occurrence is pushed at most once in the stack, that is, we never report an occurrence twice.
    Assume, for a contradiction, that secondary occurrence $\mathcal T[j,j+m-1]$ is pushed at least twice on the stack. 
    Without loss of generality, we can assume that $\mathcal T[j,j+m-1]$ is the secondary occurrence pushed at least twice on the stack that minimizes $\pi(j)$. 
    This can happen in two cases, treated below.

    (i) There exists \emph{one} irreducible $\LPF_\pi$ position $i$, associated with rectangle $[s_i,s_i+\ell_i-1]\times[s_i,s_i+\LPF_\pi[i]-1] \in \mathcal R$, that causes pushing $j$ twice on the stack. This can happen only in one situation: when position $j' = j - i + s_i$ (uniquely determined from $i$ and $j$) is pushed twice on the stack, therefore querying twice the orthogonal point enclosure data structure on point $(j',j'+m-1)$ leads to pushing twice $j = i + j' - s_i$ in $Q$. In such a case, however, note that $\pi(j') < \pi(j)$, hence $\mathcal T[j',j'+m-1]$ is a secondary occurrence pushed at least twice on the stack with $\pi(j')< \pi(j)$. This contradicts the assumption that $\mathcal T[j,j+m-1]$ is the occurrence with this property minimizing $\pi(j)$. 

    (ii) There exist (at least) \emph{two distinct} irreducible $\LPF_\pi$ positions $i_1 \neq i_2$, associated with rectangles $[s_{i_t},s_{i_t}+\ell_{i_t}-1]\times[s_{i_t},s_{i_t}+\LPF_\pi[{i_t}]-1] \in \mathcal R$ for $t=1, 2$, respectively, each causing to push $j$ on the stack. Similarly to the case above, this happens when both positions $j_t = j - i_t + s_{i_t}$ (for $t=1,2$) are pushed on the stack, therefore querying the orthogonal point enclosure data structure on point $(j_t,j_t+m-1)$ leads to pushing $j = i + j_t - s_{i_t}$ in $Q$ for $t=1,2$.
    In particular, this means that $j \in [i_1, i_1 + \ell_{i_1}-1] \cap [i_2, i_2 + \ell_{i_2}-1]$, that is, $j$ belongs to the reducible prefix of both phrases. Assume, without loss of generality, that $i_1 < i_2$ (the other case is symmetric). Then, $i_1 < i_2 \le j \le i_1 + \ell_{i_1} - 1$: in other words, $i_2 \in [i_1+1, i_1 + \ell_{i_1}-1]$. This is a contradiction, because all $\LPF_\pi$ positions $[i_1, i_1 + \ell_{i_1}-1]$ are reducible (Remark \ref{rem: reducible prefix}), while $i_2$ is irreducible.

    \paragraph{Putting everything together.} Combining Lemmas \ref{lem:locate primary complexity}, \ref{lem: locate primary correct}, and \ref{lem:finding secondary occ} we obtain: 

   \begin{theorem}\label{thm:locate all general}
       Let $\mathcal T\in \Sigma^n$ be a text and $\pi:[n]\rightarrow [n]$ be an order-preserving permutation. 
       Suppose we have access to an oracle  supporting fingerprinting queries on $\mathcal T$ in $O(h)$ time (respectively, I/O complexity) and the extraction of $\ell$ contiguous characters of $\mathcal T$ in $e(\ell)$ time (respectively, I/O complexity).
       Then, there exists a data structure taking $O(|\PDA_\pi|)$ words of space on top of the oracle and able to report the $occ$ occurrences of any pattern $P\in \Sigma^m$ in $O(d\cdot h\log m + e(m) + occ\cdot \log |\PDA_\pi|) \subseteq O(d\cdot h\log m + e(m) + occ\cdot \log n)$ time (respectively, I/O complexity), where $d$ is the node depth of $\locus(P)$ in the suffix tree of $\mathcal T$.
   \end{theorem}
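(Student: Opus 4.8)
The plan is simply to assemble the components built in this subsection and bound their combined space and running time. First I would collect the data structures: (i) the array $\PDA_\pi$ together with the sources $\SRC[i]$ of all type-2 phrases, taking $O(|\PDA_\pi|)$ words; (ii) a Range Minimum data structure over $\pi(\PDA_\pi)$, taking $O(|\PDA_\pi|)$ bits, which realizes the $\arg\min$ at Line~\ref{line:argmin}; (iii) the collection of z-fast tries (one per character $c\in\Sigma$, built over the text prefixes $\{\mathcal T[1,i-1]\ :\ i\in\PDA_\pi\wedge\mathcal T[i]=c\}$) together with the offsets $\Delta_c$, which jointly occupy $O(|\PDA_\pi|)$ words and implement $\sufsearch$; and (iv) the orthogonal point enclosure structure of Lemma~\ref{lemma: orthogonal point enclosure} built on the set $\mathcal R$ of rectangles from Definition~\ref{def: rectangle}, which uses $O(|\mathcal R|)=O(|\PDA_\pi|)$ words since $|\mathcal R|\le|\PDA_\pi|$. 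Adding the text oracle, the total space is $O(|\PDA_\pi|)$ words on top of the oracle.

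To answer a query for $P\in\Sigma^m$, first run Algorithm~\ref{alg: locate primary}. By Lemma~\ref{lem: locate primary correct} it either returns the (unique) primary occurrence $\mathcal T[j,j+m-1]=P$ or correctly reports that $P$ does not occur in $\mathcal T$; in the latter case $occ=0$ and we are done. By Lemma~\ref{lem:locate primary complexity}, using the z-fast trie implementation of $\sufsearch$, this step costs $O(d\cdot h\log m + e(m))$ time (respectively, I/O complexity), where $d$ is the node depth of $\locus(P)$ in the suffix tree of $\mathcal T$. If $P$ occurs, feed $j$ to the locating algorithm for secondary occurrences: by Lemma~\ref{lem:finding secondary occ} this reports all $occ-1$ secondary occurrences in $O(occ\cdot\log|\PDA_\pi|)$ time (equivalently, I/O complexity); since its stack is initialized with $Q\gets\{j\}$, the primary occurrence $j$ is reported as well, so altogether all $occ$ occurrences are output, each exactly once.

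Summing the two phases gives running time $O(d\cdot h\log m + e(m) + occ\cdot\log|\PDA_\pi|)$, and since $|\PDA_\pi|\le n$ this is contained in $O(d\cdot h\log m + e(m) + occ\cdot\log n)$. Finally, every building block above was analyzed so that replacing ``time'' by ``I/O complexity'' leaves all statements valid: the text oracle is the only externally accessed component and $h$, $e(\cdot)$ are its costs in whichever model is chosen, while the auxiliary structures (Range Minimum, z-fast tries, orthogonal point enclosure) perform a number of operations independent of the model. The statement is thus a bookkeeping composition of the three lemmas; the only points requiring care are checking that each auxiliary structure fits in $O(|\PDA_\pi|)$ words and that the interface between Algorithm~\ref{alg: locate primary} (which outputs the primary occurrence, or \texttt{NOT\_FOUND}) and the secondary-locating routine (which consumes the primary occurrence) is exactly what the two lemmas provide --- in particular that the case $occ=0$ is detected by Algorithm~\ref{alg: locate primary} alone.
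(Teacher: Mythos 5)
Your proposal is correct and follows essentially the same route as the paper, which proves the theorem by directly combining Lemma~\ref{lem:locate primary complexity}, Lemma~\ref{lem: locate primary correct}, and Lemma~\ref{lem:finding secondary occ}; you merely spell out the space bookkeeping (RMQ, z-fast tries, point-enclosure structure, all $O(|\PDA_\pi|)$) that the paper leaves implicit. No gaps.
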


For example, using the text oracle of Prezza \cite{prezzaTalg21}, the I/O complexity of Theorem \ref{thm:locate all general} becomes $O(d\log m + m/B + occ\cdot \log n)$. The resulting index uses $O(|\PDA_\pi|)$ memory words on top of the oracle of $n\log\sigma + O(\log n)$ bits.

\subsection{Colexicographic rank ($\stcolex$): smaller-space I/O-efficient pattern matching}\label{sec:stcolex}

We now move to the particular case $\pi = \IPA$, for which the algorithm described in the previous section can be simplified, leading to a very space-efficient and fast implementation.

Figure \ref{fig:STCOLEX} depicts the STPD obtained by choosing $\pi = \IPA$ to be the Inverse Prefix Array. We denote with $\stcolexm = \PDA_\pi$ the path decomposition array associated with this permutation $\pi$. Similarly, $\stcolexp$ denotes the path decomposition array associated with the dual permutation $\bar\pi(i) = n-\IPA[i]+1$. 
The following properties hold: 

\begin{lemma}\label{lem: stcolex <= bar r}
    Let $\mathcal T\in \Sigma^n$ be a text.
    The permutations $\pi, \bar\pi$ defined as	$\pi(i) = \IPA[i]$ and $\bar\pi(i) = n-\IPA[i]+1$ for $i\in[n]$ are order-preserving for $\mathcal T$. Furthermore, $|\stcolexm| \le \bar r$ and $|\stcolexp| \le \bar r$ hold.
\end{lemma}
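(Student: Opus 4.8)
The plan is to transliterate the proof of Lemma~\ref{lem: stlex <= r}, replacing the lexicographic order on suffixes by the colexicographic order on prefixes, $\ISA$ by $\IPA$, and $\BWT$ by $\coBWT := \BWT(\mathcal{T}^{rev})$ (whose number of runs is $\bar r$). For the order-preserving property: suppose $\IPA[i] < \IPA[j]$ and $\mathcal{T}[i,i+1] = \mathcal{T}[j,j+1]$, so in particular $\mathcal{T}[i+1] = \mathcal{T}[j+1]$. Then $\mathcal{T}[1,i+1] = \mathcal{T}[1,i]\cdot\mathcal{T}[i+1]$ and $\mathcal{T}[1,j+1] = \mathcal{T}[1,j]\cdot\mathcal{T}[j+1]$ share the same last character, so their colexicographic comparison reduces to that of $\mathcal{T}[1,i]$ and $\mathcal{T}[1,j]$, which are already ordered by $\IPA[i] < \IPA[j]$; hence $\mathcal{T}[1,i+1] <_{\text{colex}} \mathcal{T}[1,j+1]$, i.e.\ $\IPA[i+1] < \IPA[j+1]$. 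The same argument with all inequalities reversed handles $\bar\pi$.

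For $|\stcolexm| \le \bar r$: by Remark~\ref{rem: PDA = irreducible} it suffices to injectively map the irreducible $\LPF_\pi$ positions into the runs of $\coBWT$. The two ingredients are (a) the indexing identity $\coBWT[\IPA[k]] = \mathcal{T}[k+1]$ for $k \in [n-1]$ --- the reversed prefix $\mathcal{T}[1,k]^{rev}$, seen as a suffix of $\mathcal{T}^{rev}$, is preceded there by $\mathcal{T}[k+1]$ --- with the boundary value $\coBWT[\IPA[n]]$ landing at the first (hence run-starting) position of $\coBWT$, since $\mathcal{T}[1,n]$ ends with $\$$ and is therefore the colexicographically smallest prefix; and (b) Lemma~\ref{lemma: generalized iLCP vs bwt}. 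For an irreducible position $i$ --- after peeling off, as in Lemma~\ref{lem: stlex <= r}, the cases $i=1$, $\LPF_\pi[i]=0$, and the cases where the relevant $\coBWT$ cell or its left neighbour involves the unique $\$$ --- pick a witness $j$ with $\rlce(i,j) = \LPF_\pi[i] \ge 1$ and $\IPA[j] < \IPA[i]$. Since $\mathcal{T}[i] = \mathcal{T}[j]$, stripping this common last character preserves colexicographic order, so $\IPA[j-1] < \IPA[i-1]$, i.e.\ $\pi(j-1) < \pi(i-1)$; Lemma~\ref{lemma: generalized iLCP vs bwt} then yields $\mathcal{T}[i-1] \ne \mathcal{T}[j-1]$. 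Translating this through identity (a) --- using colexicographic order in the exact role that $LF$-mapping plays in the suffix case --- exhibits two consecutive cells of $\coBWT$ that differ, a run boundary canonically attached to $i$; as $\IPA$ is a permutation the map is injective, so $|\stcolexm| \le \bar r$. The bound $|\stcolexp| \le \bar r$ is symmetric (irreducible $\LPF_{\bar\pi}$ positions mapping to run \emph{ends}) and is omitted.

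The main obstacle is the index bookkeeping around ingredient (a): identifying exactly which $\coBWT$ cell an irreducible $i$ is sent to (the analog of $\BWT[\ISA[i]]$ in Lemma~\ref{lem: stlex <= r}) and checking that ``$\mathcal{T}[i-1] \ne \mathcal{T}[j-1]$ for a witness $j$'' really corresponds to two \emph{adjacent} $\coBWT$ entries --- which needs the colexicographic analog of $LF$-mapping together with careful treatment of the boundary cases at position $1$ and at the occurrence of $\$$. The rest is a routine restatement of the proof of Lemma~\ref{lem: stlex <= r}.
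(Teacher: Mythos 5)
The order-preserving part of your argument is fine, but the run-boundary part breaks exactly at the point you defer as ``index bookkeeping'', and that point is not bookkeeping --- it is where the lex proof genuinely fails to transliterate. In Lemma~\ref{lem: stlex <= r} the witness can be taken to be the lexicographically \emph{adjacent} suffix $j=\SA[\ISA[i]-1]$, because for $\pi=\ISA$ the maximum of $\rlce(i,\cdot)$ over lexicographically smaller suffixes is attained by the immediate predecessor ($\LPF_\pi=\PLCP$); this is what turns the conclusion $\mathcal T[i-1]\ne\mathcal T[j-1]$ of Lemma~\ref{lemma: generalized iLCP vs bwt} into a statement about two \emph{adjacent} cells $\BWT[\ISA[i]]$ and $\BWT[\ISA[i]-1]$. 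For $\pi=\IPA$ the sorting direction (colexicographic, to the left) and the extension direction in the definition of $\LPF_\pi$ (still $\rlce$, to the right) point opposite ways, so an $\rlce$-maximizing witness $j$ is in general \emph{not} the colex-adjacent prefix; the characters $\mathcal T[i-1]$ and $\mathcal T[j-1]$ sit in the coBWT at positions $\IPA[i-2]$ and $\IPA[j-2]$, which need not be adjacent, and I do not see any ``colexicographic analog of LF-mapping'' that restores adjacency. So the left-mismatch supplied by Lemma~\ref{lemma: generalized iLCP vs bwt} is simply not the right tool here.

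The paper's proof is structurally different. It maps each irreducible position $i$ to the coBWT position $\IPA[i']$ of the \emph{end} of the phrase, $i'=i+\LPF_\pi[i]-1$, and compares with $j'=\PA[\IPA[i']-1]$, the colex-\emph{immediate} predecessor of $\mathcal T[1,i']$ (so adjacency holds by construction). Taking a maximizing witness $t$ with $\rlce(i,t)=\LPF_\pi[i]$ and $\pi(t)<\pi(i)$, the order-preserving property gives $\pi(t')<\pi(i')$ for $t'=t+\LPF_\pi[i]-1$, and since the immediate colex predecessor satisfies $\llce(i',j')\ge\llce(i',t')\ge\LPF_\pi[i]$, one gets $\mathcal T[j'-\LPF_\pi[i]+1,j']=\mathcal T[i,i']$ and $\pi(j'-\LPF_\pi[i]+1)<\pi(i)$; then the \emph{maximality} of $\LPF_\pi[i]$ (not Lemma~\ref{lemma: generalized iLCP vs bwt}) forces $\coBWT[\IPA[i']-1]=\mathcal T[j'+1]\ne\mathcal T[i'+1]=\coBWT[\IPA[i']]$, a run beginning. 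Irreducibility enters only to make the map injective, since the values $i+\LPF_\pi[i]$ are distinct over irreducible positions (Remark~\ref{rem: PDA = irreducible}), and the boundary/wrap-around at $i'=n$ is handled by first rotating $\mathcal T$ into $\$\cdot\mathcal T[1,n-1]$, which preserves $\bar r$. In short, the missing idea --- which coBWT cell to attach to $i$ and why its left neighbour must differ --- requires this change of viewpoint; the proof of Lemma~\ref{lem: stlex <= r} cannot just be restated with $\ISA$, $\BWT$ replaced by $\IPA$, $\coBWT$.
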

\begin{proof}

    For every $i,j\in[n-1]$ such that $\mathcal{T}[1,i]<_{\text{colex}}\mathcal{T}[1,j]$ and $T[i+1]=T[j+1]$, it holds that $\mathcal{T}[1,i+1]<_{\text{colex}}\mathcal{T}[1,j+1]$ by definition of the colexicographic order. This proves the order-preserving property for $\pi$ and $\bar\pi$.

    In order to upper-bound $|\stcolexm|$ and $|\stcolexp|$, we rotate $\mathcal T$ to the right by one position, i.e. we replace  $\mathcal T$ by $\mathcal T[n]\cdot \mathcal T[1,n-1] = \$ \cdot \mathcal T[1,n-1]$. This rotation does not change the number of $\coBWT$ equal-letter runs (see Appendix \ref{app:basic concepts}). 
    The number of irreducible positions in $\LPF_\pi$ and $\LPF_{\bar\pi}$, on the other hand, increases at most by one: since $\$$ is the smallest alphabet character, the colexicographic order of prefixes does not change after the rotation. 
    Then, $i$ was irreducible before the rotation if and only if $i+1$ is irreducible after the rotation. Additionally, $i=1$ is always irreducible after the rotation.  
    It follows that if we prove the bound for the rotated $\mathcal T$, then the bound also holds for the original one.

    By Remark \ref{rem: PDA = irreducible}, $|\stcolexm|$ is equal to the number of irreducible $\LPF_\pi$ positions.
    Let $i' = i+\LPF_\pi[i]-1$.
    We map bijectively each irreducible $\LPF_\pi$ position $i$ to $\coBWT[\IPA[i']]$ and show that $\coBWT[\IPA[i']]$ is the beginning of an equal-letter $\coBWT$ run.
    This will prove $|\stcolexm|\le \bar r$.
    Symmetrically, to prove $|\stcolexp|\le \bar r$ we map bijectively each irreducible $\LPF_{\bar\pi}$ position $i$ to $\coBWT[\IPA[i']]$ and show that $\coBWT[\IPA[i']]$ is the end of an equal-letter $\coBWT$ run. Since this case is completely symmetric to the one above, we omit its proof.

    Let $i$ be an irreducible $\LPF_\pi$ position and let $i' = i+\LPF_\pi[i]-1$ ($\le n$, by definition of $\LPF_\pi$). We analyze separately the cases (i) $i'=n$ and (ii) $i'<n$.

    (i) If $i'=n$, then $\coBWT[\IPA[i']] = \$$. Since the symbol $\$$ occurs only once in $\mathcal T$, $\coBWT[\IPA[i']]$ is the beginning of an equal-letter BWT run. 

    (ii) If $i'<n$, then either $\IPA[i'] = 1$ and therefore $\coBWT[\IPA[i']] = \coBWT[1]$ is the beginning of an equal-letter coBWT run, or $\IPA[i'] > 1$. In the latter case, let $j' = \PA[\IPA[i']-1]$ (in particular, $\IPA[j'] = \IPA[i'] - 1$) and $j = j' - \LPF_\pi[i] +1$.
    If $j'=n$, then $\coBWT[\IPA[i']-1] = \coBWT[\IPA[j']] = \$ \neq \coBWT[\IPA[i']]$, hence again we have that $\coBWT[\IPA[i']]$ is the beginning of an equal-letter BWT run. In the following, we can therefore assume both $i'<n$ and $j'<n$.
    
    Observe that it must be the case that $\mathcal T[j,j'] = \mathcal T[i,i']$. 
    To see this, let $t$ be such that $\pi(t) < \pi(i)$ and $\LPF_\pi[i] = \rlce(i,t)$. 
    Let $t' = t + \LPF_\pi[i]-1$.
    Then, by the order-preserving property of $\pi$, it must be $\pi(t') < \pi(i')$. Moreover, since $\LPF_\pi[i] = \rlce(i,t)$, then $\llce(i',t') \ge \LPF_\pi[i]$. 
    But then, by definition of $j'$ and $\pi = \IPA$, it must be $\llce(i',j') \ge \llce(i',t') \ge \LPF_\pi[i]$, which implies $\mathcal T[j,j'] = \mathcal T[i,i']$. 
    
    Then, since $\pi(j') = \pi(i')-1 < \pi(i')$, by the order-preserving property of $\pi$ it must be $\pi(j)<\pi(i)$.
    Finally, by the very definition of $\LPF_\pi$ it must be the case that $\coBWT[\IPA[i']-1] = \coBWT[\IPA[j']] = \mathcal T[j'+1] \neq \mathcal T[i'+1] = \coBWT[\IPA[i']]$, which proves the claim. In fact, if it were $\mathcal T[j'+1] = \mathcal T[i'+1]$ then we would obtain $\LPF_\pi[i] \ge \rlce(i,j) \ge \LPF_\pi[i] + 1$, a contradiction. 
\end{proof}

\begin{figure}[h!]
    \centering
    \includegraphics[width=0.6\linewidth,page=5]{figures/st-sample.pdf}
    
    \caption{Example of the STPD obtained by taking $\pi=\IPA$ to be the colexicographic rank of the text's prefixes. For a better visualization, we do not sort suffix tree leaves according to $\pi$ as this would cause some edges to cross. 
    Paths $\mathcal T[j,n]$ are colored according to the color of $j \in \stcolexm$. 
    To see how the paths are built, consider the Prefix Array $\PA = [11,1,2,10,9,3,5,7,4,6,8]$ and the generalized Longest Previous Factor array $\LPF_\pi = [0,1,0,0,4,3,2,1,2,1,0]$. $\PA$ tells us the order in which we have to imagine inserting the text's suffixes in the trie. We show how the process works for the first three suffixes in the order induced by $\pi$. The first suffix to be inserted is $\mathcal T[11,11]$. Position $11+\LPF_\pi[11]=11+0=11$ is orange, hence an orange path labeled $\mathcal T[11,11]$ starts in the root. The next suffix is $\mathcal T[1,11]$. Position $1+\LPF_\pi[1]=1+0=1$ is green, hence a green path labeled $\mathcal T[1,11]$ is inserted. The next suffix is $\mathcal T[2,11]$. 
    This suffix shares a common prefix of length $\LPF_\pi[2] = 1$ with the previously-inserted suffix $\mathcal T[1,11]$, hence position $2 + \LPF_\pi[2] = 2+1 = 3$ is sampled. This position is blue, hence a blue path labeled $\mathcal T[3,11]$ is inserted as a child of $\locus(\mathcal T[2]=A)$.}
    \label{fig:STCOLEX}
\end{figure}

We now show how to locate the pattern's occurrences with $\stcolexm$. In this section, among the various trade-offs summarized in Theorem \ref{thm:locate all general} for general order-preserving permutations, we will analyze the one obtained by assuming a text oracle supporting the extraction of $\ell$ contiguous text characters with $O(1+\ell/B)$ I/O complexity as it models the scenario of our practical implementation discussed in Section \ref{sec: general STPD index}.

\subsubsection{Finding the primary occurrence.}

For the particular order-preserving permutation $\pi = \IPA$ used in this section, the primary occurrence $P = \mathcal T[i,j]$ of $P$ is the one for which the text prefix $\mathcal T[1,j]$ is the colexicographically-smallest being suffixed by $P$.

We use Algorithm~\ref{alg: locate primary} to locate the primary occurrence. Since $\pi = \IPA$ (colexicographic rank) and $\PDA_\pi = \stcolexm$ is sorted colexicographically, observe that $\pi(\PDA_\pi) = \pi(\stcolexm)$ is increasing. This means that, in Line \ref{line:argmin} of Algorithm~\ref{alg: locate primary}, it always holds $\arg\min_{i'\in [b,e]}  \{ \pi(\PDA_\pi[i']) \} = b$ and therefore we do not need a Range Minimum Data structure over $\pi(\stcolexm)$.

By implementing $\sufsearch$ by binary search and random access, we obtain: 

\begin{lemma}\label{lem: locate primary stcolex}
    Let $\mathcal T\in \Sigma^n$ be a text, and fix $\pi = \IPA$.
    Suppose we have access to a text oracle supporting the extraction of $\ell$ contiguous characters with $O(1+\ell/B)$ I/O complexity. Algorithm~\ref{alg: locate primary} requires just array $\stcolexm$, fitting in $|\stcolexm| \le \bar r$ words, on top of the text oracle and locates the primary occurrence of $P\in \Sigma^m$ with $O(d\log|\stcolexm|\cdot(1+m/B)) \subseteq O(d\log \bar r \cdot (1+m/B))$ I/O complexity, where $d$ is the node depth of $\locus(P)$ in the suffix tree of $\mathcal T$.
\end{lemma}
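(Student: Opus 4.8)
The plan is to obtain this lemma by specializing the general analysis of Algorithm~\ref{alg: locate primary} to the case $\pi=\IPA$ and to the binary-search realization of $\sufsearch$. First I would invoke Lemma~\ref{lem: locate primary correct}, which already guarantees that Algorithm~\ref{alg: locate primary} is correct and complete: it returns the occurrence $\mathcal T[i,i+m-1]=P$ minimizing $\pi(i)$ --- and for $\pi=\IPA$ this is exactly the occurrence whose prefix $\mathcal T[1,i+m-1]$ is colexicographically smallest among the prefixes of $\mathcal T$ suffixed by $P$ --- and returns \texttt{NOT\_FOUND} when $P$ does not occur in $\mathcal T$. Thus only the space and I/O bounds remain to be checked.

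For the space bound, I would argue that for $\pi=\IPA$ Algorithm~\ref{alg: locate primary} needs nothing beyond the array $\stcolexm=\PDA_\pi$ and the text oracle. The Range Minimum structure enters only at Line~\ref{line:argmin}; but $\stcolexm$ is, by definition, sorted in the colexicographic order of the prefixes $\mathcal T[1,\cdot]$, which is precisely the order induced by $\pi=\IPA$, so the sequence $\pi(\stcolexm[1]),\pi(\stcolexm[2]),\dots$ is strictly increasing and $\arg\min_{i'\in[b,e]}\{\pi(\PDA_\pi[i'])\}=b$ can be returned in $O(1)$ time with no auxiliary structure. Choosing the binary-search implementation of $\sufsearch$ (which probes only $\stcolexm$ and the text oracle) then leaves $\stcolexm$ as the only structure stored on top of the oracle, and $|\stcolexm|\le\bar r$ by Lemma~\ref{lem: stcolex <= bar r}.

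For the I/O bound, I would substitute the costs into the control flow of Algorithm~\ref{alg: locate primary}. A call $\sufsearch(i-j+1,i-1,c)$ runs $O(\log|\stcolexm|)$ binary-search probes; each probe compares the query string $\mathcal T[i-j+1,i-1]\cdot c$ (of length $\le m$) against a suffix of a sampled prefix $\mathcal T[1,\stcolexm[t]]$, which requires extracting $O(m)$ contiguous characters of $\mathcal T$, i.e.\ $O(1+m/B)$ I/Os with the assumed oracle; hence each $\sufsearch$ call costs $O((1+m/B)\log|\stcolexm|)$ I/Os. As already noted in the general analysis, $\sufsearch$ is invoked once for each STPD path crossed on the way to $\locus(P)$, and the number of crossed paths is at most the node depth $d$ of $\locus(P)$ in the suffix tree. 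The character-by-character matching performed along each crossed path (the body of the \texttt{while} loop between two consecutive $\sufsearch$ calls) should be realized as a single contiguous extraction from $\mathcal T$, so that matching a stretch of length $\ell$ costs $O(1+\ell/B)$; since the combined matched length is at most $m$, this contributes $O(d+m/B)$ over all crossed paths, and the final verification at Line~\ref{line:compare random access} adds $O(1+m/B)$. Summing, the total I/O complexity is $O\bigl(d(1+m/B)\log|\stcolexm|\bigr)\subseteq O\bigl(d\log\bar r\cdot(1+m/B)\bigr)$.

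The one point I expect to require care --- bookkeeping rather than a genuine obstacle --- is making the ``at most one $\sufsearch$ call per crossed STPD path, at most $d$ of them'' counting rigorous: one should combine the loop invariant of Lemma~\ref{lem: locate primary correct} with the fact that each STPD path is a maximal chain of the suffix tree that does not branch from above, so that consecutive $\sufsearch$ calls correspond to descents into strictly deeper subtrees and their number is bounded by the node depth of $\locus(P)$. The remainder is a direct instantiation of the template already established for general order-preserving permutations.
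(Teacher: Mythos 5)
Your proposal is correct and follows essentially the same route as the paper: correctness is inherited from Lemma~\ref{lem: locate primary correct}, the Range Minimum structure is dropped because $\stcolexm$ is colexicographically sorted so $\pi(\stcolexm)$ is increasing and the $\arg\min$ at Line~\ref{line:argmin} is simply $b$, $\sufsearch$ is implemented by binary search with $O(1+m/B)$-I/O extractions per probe, there are at most $d$ such calls (one per crossed STPD path), and $|\stcolexm|\le\bar r$ comes from Lemma~\ref{lem: stcolex <= bar r}. Your extra remark about batching the in-path character matching as a single contiguous extraction is a reasonable tightening of a detail the paper leaves implicit, but it does not change the argument.
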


\subsubsection{Locating the secondary occurrences.}

As it turns out, the mechanism for locating secondary occurrences described in Section \ref{sec: general STPD index} in the particular cases of $\pi = \ISA$ and $\pi = \IPA$ is equivalent to the $\bar\phi$ function of Definition \ref{def:phi}, with the only additional detail that for $\pi=\IPA$ one should replace in Definition \ref{def:phi} $\SA$ and $\ISA$ with $\PA$ and $\IPA$, respectively (below, with symbol $\bar\phi$ we denote this variant using $\PA$ and $\IPA$).

Then, our locating mechanism essentially reduces to the one of the $r$-index \cite{GNP20}, with only minor modifications that we describe below. 
The $r$-index \cite{GNP20} (of the reversed $\mathcal T$) locates pattern occurrences as follows. 
Let $\PA[b,b+occ-1]$ be the Prefix Array range of pattern $P$. The $r$-index (i) finds $\PA[b]$ and $occ$ with a mechanism called the \emph{toehold lemma}, and (ii) it applies $occ-1$ times function $\bar\phi$ 
starting from $\PA[b]$. This yields the sequence $\PA[b], \PA[b+1], \dots, \PA[b+occ-1]$, that is, the list of all pattern's occurrences. As shown by Nishimoto and Tabei \cite{Nishimoto21Move}, function $\bar\phi$ (in the variant using $\PA$ and $\IPA$) can be stored in $O(\bar r)$ memory words in such a way that the above $occ-1$ successive evaluations of $\bar \phi$ take $O(occ + \log\log (n/\bar r))$ time (and I/O complexity). 
%Below, we summarize and improve the space usage of Nishimoto and Tabei's result.

Differently from the toehold lemma of the $r$-index, Algorithm~\ref{alg: locate primary} only allows to compute $\PA[b]$ (not $occ$). To also compute $occ$ and output all pattern's occurrences we proceed as follows.

The simplest solution  to locate all occurrences is to use both arrays $\stcolexm$ and $\stcolexp$ to compute $\PA[b]$ and $\PA[b+occ-1]$, respectively, and then extract one by one 
$\PA[b], \PA[b+1], \dots, \PA[b+occ-1]$ by applying $occ-1$ times function $\bar\phi$. This solution, however, requires also storing $\stcolexp$.

We can still locate efficiently all secondary occurrences using just $\stcolexm$, as follows. 
Let $\PA[b,b+occ-1]$ be the range of $P$ in the Prefix Array. 
Let $Q = \lceil m/B \rceil$ be a \emph{block size}. 
Partition the range $[b,b+occ-1]$ into $q = \lfloor occ/Q \rfloor$ equal-size blocks of size $Q$, i.e. $[b+(i-1)Q, b+i\cdot Q-1]$ for $i=1, \dots, q$, plus (possibly) a last block of size $occ\mod Q$. For $i=1, \dots, q$, reconstruct $\PA[b+(i-1)Q, b+i\cdot Q]$ by applying the $\bar\phi$ function $Q$ times starting from $\PA[b+(i-1)Q]$ (available from the previous iteration). 
Note that we do not know $q$ (since we do not know $occ$); in order to discover when we reach the last block of size $Q$ (i.e. the $q$-th block), after reconstructing $\PA[b+(i-1)Q, b+i\cdot Q]$ we compare $\mathcal T[i',i'+m-1] \stackrel{?}{=} P$ for $i' = \PA[b+i\cdot Q-1]$ (i.e., the last Prefix Array entry in the block). Then, we know that $i\le q$ if and only if this equality test succeeds. 
Since we realize that we reached block number $i=q$ only when extracting the $(q+1)$-th block of length $Q$, this means that the total number of applications of function $\bar \phi$ is bounded by $occ + Q \le occ + m/B + 1$ (constant time each, except the very first $\bar\phi(\PA[b])$, costing $O(\log\log(n/\bar r))$).

As far as the $q+1$ comparisons $\mathcal T[i',i'+m-1] \stackrel{?}{=} P$ are concerned, if the random access oracle supports the extraction of $\ell$ contiguous characters of $\mathcal T$ with $O(1+\ell/B)$ I/O complexity, then the I/O cost of these comparisons amounts to $O((1+occ/Q)\cdot (1+m/B)) = O(1+occ + m/B)$.

Note that, when incrementing the block number $i$, we can re-use the same memory ($Q$ words) allocated for $\PA[b+(i-1)Q, b+i\cdot Q]$ in order to store the new block of $Q$ Prefix Array entries. It follows that the above process uses in total $Q \le 1+m/B$ memory words of space.

If $rem = occ\mod Q > 0$, we are left to show how to reconstruct the last block $\PA[b+q\cdot Q,b+q\cdot Q + rem - 1] = \PA[b',e']$, of length $rem$. The idea is to simply use binary search on $\PA[b',b'+Q-1]$, which we already extracted in the $(q+1)$-th iteration. In each of the $O(\log Q)$ binary search steps, we compare $P$ with a substring of $\mathcal T$ of length $m$.
If the random access oracle supports the extraction of $\ell$ contiguous characters of $\mathcal T$ with $O(1+\ell/B)$ I/O complexity, 
finding this last (partial) block of $rem = occ\mod Q$ pattern occurrences via binary search costs $O((1+m/B)\cdot \log Q) = O((1+m/B)\lceil\log(1+m/B)\rceil)$ I/O complexity using $Q$ memory words of space. 

\subsubsection{Putting everything together.} Locating the $occ-1$ secondary occurrences of $P$ costs $O((1+m/B)\lceil\log(1+m/B)\rceil + occ)$ I/O complexity and requires $O(1+m/B)$ words of memory on top of the index at query time. Combining this with Lemma \ref{lem: locate primary stcolex}, we obtain: 

\begin{theorem}\label{thm: locate all colex}
 Let $\mathcal T\in \Sigma^n$ be a text. 
 Assume we have access to a text oracle supporting the extraction of $\ell$ contiguous characters of $\mathcal T$ with $O(1+\ell/B)$ I/O complexity. Our data structure locates all the $occ$ occurrences of $P\in \Sigma^m$ 
 %any query pattern $P\in \Sigma^m$ 
 with $O\left((d\log \bar r + \log (1+m/B))\cdot (1+m/B) + occ + \log\log(n/\bar r)\right)$ I/O complexity and uses $O(\bar r)$ memory words on top of the oracle. At query time, $O(m/B)$ further memory words are used.
\end{theorem}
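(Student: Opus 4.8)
The plan is to assemble the two halves developed in this subsection: locating the unique primary occurrence and then enumerating the $occ-1$ secondary ones, accounting separately for persistent space, transient workspace, and I/O cost, and finally adding up. This is a synthesis step, so the work is mostly bookkeeping rather than a new idea.

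First I would apply Lemma~\ref{lem: locate primary stcolex} with $\pi=\IPA$. Because $\pi(\stcolexm)$ is increasing, the $\arg\min$ in Line~\ref{line:argmin} of Algorithm~\ref{alg: locate primary} is simply the left endpoint of the returned range, so no Range Minimum structure is needed and only the array $\stcolexm$ --- which by Lemma~\ref{lem: stcolex <= bar r} occupies $|\stcolexm|\le\bar r$ words --- is stored on top of the text oracle. With the assumed oracle (extraction of $\ell$ contiguous characters in $O(1+\ell/B)$ I/O), Algorithm~\ref{alg: locate primary} returns the primary occurrence $\PA[b]$ in $O(d\log\bar r\cdot(1+m/B))$ I/O operations, where $d$ is the node depth of $\locus(P)$. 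Next I would enumerate the secondary occurrences using the $\bar\phi$ function (the $\PA$/$\IPA$ variant), which --- as noted above --- coincides in the cases $\pi\in\{\ISA,\IPA\}$ with the orthogonal-point-enclosure scheme of Section~\ref{sec: general STPD index} and whose correctness rests on the fact that $\PA[b],\PA[b+1],\dots,\PA[b+occ-1]$ is exactly the Prefix-Array range of $P$. Stored in the Nishimoto--Tabei move structure, $\bar\phi$ takes $O(\bar r)$ words and supports $k$ successive applications in $O(k+\log\log(n/\bar r))$ I/O. Since $occ$ is unknown, I would process $\PA$-positions in blocks of size $T=\lceil m/B\rceil$, after each block comparing $P$ against the $m$-length text substring anchored at the block's last position (one comparison, $O(1+m/B)$ I/O) to detect when an over-extended block has been reached, and finally binary-searching inside that last block to recover the remaining $occ\bmod T$ positions; this contributes $O\!\left((1+m/B)\lceil\log(1+m/B)\rceil + occ + \log\log(n/\bar r)\right)$ I/O and $O(1+m/B)$ query-time words (re-using the same block buffer across iterations).

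Summing, the persistent structures are $\stcolexm$ and the move structure for $\bar\phi$, i.e.\ $O(\bar r)$ words on top of the oracle, and the transient workspace is $O(m/B)$ words; the total I/O complexity is
\[
O\!\left(d\log\bar r\cdot(1+m/B)\right) + O\!\left((1+m/B)\lceil\log(1+m/B)\rceil + occ + \log\log(n/\bar r)\right),
\]
which, absorbing $\lceil\log(1+m/B)\rceil$ into $\log(1+m/B)$ up to additive constants and collecting the common $(1+m/B)$ factor, equals $O\!\left((d\log\bar r+\log(1+m/B))\cdot(1+m/B) + occ + \log\log(n/\bar r)\right)$, as claimed. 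The one point that needs care --- the only place where something could go wrong --- is verifying that the ``detect $occ$ only after over-extending by one block'' argument costs merely $O(m/B)$ extra $\bar\phi$ steps and $O(1)$ extra full-pattern comparisons, so that the logarithmic factor multiplies $(1+m/B)$ rather than $(1+m/B)^2$; everything else follows directly from the already-established lemmas and from the discussion preceding the statement.
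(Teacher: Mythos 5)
Your proposal is correct and follows essentially the same route as the paper: locate the primary occurrence via Lemma~\ref{lem: locate primary stcolex} (exploiting that $\pi(\stcolexm)$ is increasing, so no RMQ structure is needed), then enumerate secondary occurrences with the $\PA$/$\IPA$ variant of $\bar\phi$ in the Nishimoto--Tabei move structure, processing blocks of size $T=\lceil m/B\rceil$ with one full-pattern comparison per block and a final binary search inside the over-extended block, giving at most $occ+T$ applications of $\bar\phi$ and the stated $O(\bar r)$ persistent plus $O(m/B)$ transient space. The bookkeeping and the resulting I/O bound match the paper's proof, so no further changes are needed.
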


Observe that the extra $O(m/B)$ memory words of space are negligible; most compressed indexes explicitly store (and/or receive as input) the full pattern at query time anyways, in $O(m)$ words.
The remaining $O(\bar r)$ words of space are spent to store array $\stcolexm$ (at most $|\stcolexm| \le \bar r$ words) and Nishimoto and Tabei's data structure \cite{Nishimoto21Move} storing function $\bar\phi$ (less than $2\bar r$ words using the optimized implementation of \cite{BertramMoveR24}).

\subsection{Identity ($\stpos$): leftmost pattern occurrence}\label{sec:stpos}
% \ruben{In this section we used $\mathcal T[i..j]$ and $\mathcal T[i,j]$. I tried to make it consistent..}\nicola{thanks, yes please check that all notation is consistent in the whole paper}
Another notable STPD is obtained by using the identity function $\pi(i) = i$, trivially satisfying the order-preserving property of Definition \ref{def:order-preserving pi}. We call the corresponding path decomposition array $\stposm$. The corresponding STPD is almost equal to the one in Figure \ref{fig:STCOLEX}, with the only difference being that the path starting in the root is $\mathcal T[1,11]$ (green) rather than $\mathcal T[11,11]$ (orange). In this particular example, the two STPD obtained using $\pi = id$ and $\pi = \IPA$ are essentially equal because (as the reader can easily verify) the colexicographically-smallest occurrence of any substring $\alpha$ of $\mathcal T$ is also the leftmost one (but of course, this is not always the case for any text $\mathcal T$). 

Similarly, $\stposp$ is the path decomposition array associated with the dual permutation $\bar\pi(i) = n-i+1$. 

The size of $\stposm$ is equal to the number of irreducible values in the Longest Previous Factor array $\LPF$, a new interesting repetitiveness measure that, to the best of our knowledge, has never been studied before. While we could not prove a theoretical bound for $|\stposm|$ and $|\stposp|$ in term of known repetitiveness measures\footnote{The techniques used by Kempa and Kociumaka \cite{kempa2020resolution} to prove that the number of irreducible $\PLCP$ values is bounded by $O(\delta\log^2n)$, do not apply to array $\LPF$.}, below we show that these measures are worst-case optimal, meaning that for every $p\ge 1$ we exhibit a family of strings with $p = \Theta(|\stposm|)$ requiring $O(p)$ words to be stored in the worst case.

Moreover, in Section \ref{sec:experiments} we show that in practice $|\stposm|$ and $|\stposp|$ are consistently smaller than $r$.

\begin{theorem}\label{thm:stpos wc optimal}
    For any integers $n\ge 1$ and $p \in [n]$, there exists a string family $\mathcal F$ of cardinality $\binom{n}{p}$ such that every string $\mathcal S\in \mathcal F$ is over alphabet of cardinality $\sigma = p+1$ and satisfies $|\stposm(\mathcal S)| = p+1$ and $|\mathcal S| \le np$.
    In particular, no compressor can compress every individual $\mathcal S\in \mathcal F$ in asymptotically less than $\log_2 \binom{n}{p} = p\log(n/p) + \Theta(p)$ bits. 
    The same holds for $|\stposp|$.
\end{theorem}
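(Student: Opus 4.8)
The plan is to exhibit an explicit family $\mathcal F$ and show that (a) each member has $|\stposm| = p+1$, (b) the family has cardinality $\binom{n}{p}$, and (c) an information-theoretic counting argument yields the incompressibility lower bound. The construction I would use is the following: fix the alphabet $\{0, 1, \dots, p\}$ with $0 = \$$ the terminator. For each $p$-subset $\{i_1 < i_2 < \dots < i_p\} \subseteq [n]$ (thought of as ``gap lengths''), build a string that, read left to right, consists of a long run of the single repeated ``padding'' character, with the $p$ distinct non-terminator characters $1, \dots, p$ inserted at positions determined by the subset, and terminated by $\$$. Concretely, I would let the padding character be one fixed symbol, say $1$, and encode the subset as the positions where the ``new'' symbols $2, 3, \dots, p$ and the final $\$$ appear; the key structural feature we want is that the string is (almost) a concatenation of $p$ maximal runs of padding characters of lengths controlled by the subset, so that $|\mathcal S| \le np$ and the only irreducible $\LPF$ positions are the $p$ ``boundary'' positions plus position $1$.

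The core step (a) is to compute $\LPF$ for a generic member $\mathcal S$ and count its irreducible positions. The intuition is that each of the $p$ non-padding symbols occurs exactly once, so the first time each is encountered the longest previous factor is $0$, giving an irreducible position; position $1$ is irreducible by definition; and everywhere else inside a run of the padding character we have $\LPF_\pi[i] = \LPF_\pi[i-1] - 1$ (a previous copy of the run is available shifted left by one), so those positions are reducible. This gives exactly $p+1$ irreducible positions, hence $|\stposm(\mathcal S)| = p+1$ by Remark~\ref{rem: PDA = irreducible}. For the dual permutation $\bar\pi(i) = n - i + 1$ one reads the string right to left; by reversing the roles of the symbols in the construction (or simply by symmetry of the argument, placing the unique symbols so that reading right-to-left they are each fresh) the same count $p+1$ holds. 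I would state this as ``the same argument applied to $\mathcal S^{rev}$'' or adjust the construction minimally so both directions work simultaneously.

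Step (b) is immediate: distinct $p$-subsets give distinct strings (the positions of the unique symbols, or of the run boundaries, are recoverable from $\mathcal S$), so $|\mathcal F| = \binom{n}{p}$. Step (c) is the standard pigeonhole/counting bound: any injective encoding of $\mathcal F$ into bit strings must use at least $\log_2 \binom{n}{p}$ bits on some member, and $\log_2 \binom{n}{p} = p \log(n/p) + \Theta(p)$ by the usual Stirling estimate; combined with Theorem~\ref{thm: compressing PDA}, which says $O(|\PDA_\pi| \log(n\sigma)) = O(p \log n)$ bits suffice, this pins down $|\stposm|$ as worst-case optimal up to the logarithmic factor. The main obstacle I anticipate is purely bookkeeping: getting the construction precise enough that the run-boundary positions are genuinely the only irreducible $\LPF$ positions (no accidental extra irreducible positions from interactions between the inserted symbols and the padding runs, and no accidental reducibility at a boundary), and making the length bound $|\mathcal S| \le np$ come out cleanly — e.g. by capping each gap length at $n$, which is where the $np$ ceiling arises. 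The analysis of $\LPF$ on long unary runs is elementary, so once the construction is nailed down the rest is routine.
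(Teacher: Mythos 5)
Your overall strategy is the same as the paper's: encode each $p$-subset as a sequence of padding-character runs separated by $p$ symbols that each occur exactly once, count irreducible $\LPF$ positions to get $p+1$, and finish with the standard counting argument (the paper uses $\mathcal S = 0^{x_1}\cdot 1\cdot 0^{x_2}\cdot 2\cdots 0^{x_p}\cdot p$). However, your key step (a) has a genuine flaw. A fresh-symbol position $i$ has $\LPF[i]=0$, but that alone does not make it irreducible: in such a construction the preceding position (the last padding character of a run) has $\LPF[i-1]=1$ whenever the padding character occurred earlier, so $\LPF[i]=\LPF[i-1]-1$ and the fresh-symbol positions are in fact \emph{reducible}. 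The irreducible positions are instead position $1$, position $2$, and the position immediately \emph{after} each fresh symbol (the first position of each later run), which is exactly the paper's list; your identification is off by one, and the reasoning ``$\LPF=0$ gives an irreducible position'' conflates the two notions, even though the total happens to come out to $p+1$ in a correctly designed family.

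More seriously, the issue you defer as ``bookkeeping'' is the essential missing ingredient. With arbitrary gap lengths the count is \emph{not} $p+1$: at the first position of a run of length $a_j$ one gets $\LPF=\min(a_j,M)$ where $M$ is the longest earlier run, and if $a_j>M$ the next position has $\LPF=a_j-1\neq \min(a_j,M)-1$, creating an extra irreducible position (e.g.\ in $1\,2\,1^5\,3\cdots$ the first four positions are all irreducible). The paper avoids this precisely by writing the subset's elements in \emph{decreasing} order, so each run is shorter than all previous ones, the run-start $\LPF$ equals the full run length, and $\LPF$ decreases by exactly one inside every run; without this (or an equivalent monotonicity condition) your family does not satisfy $|\stposm(\mathcal S)|=p+1$ for every member, which the theorem requires. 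For the dual measure, note also that $|\stposp(\mathcal S)|$ is not simply $|\stposm(\mathcal S^{rev})|$ (the dual permutation still uses right-extensions, only with priority reversed), so ``apply the argument to the reverse'' does not directly work; the paper instead re-sorts the subset in increasing order. Your steps (b) and (c) are correct and match the paper.
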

\begin{proof}
    Consider any integer set $\{x_1 > x_2 > \dots > x_p\} \subseteq [n]$. Encode the set as the string $\mathcal S = 0^{x_1}\cdot 1 \cdot 0^{x_2}\cdot 2 \cdots 0^{x_p}\cdot p$. Then, it is not hard to see that the $\LPF$ irreducible positions are $1$, $2$, and $i+1$ for all $i<n$ such that $\mathcal S[i]>0$ (in total, $p+1$ irreducible positions). This proves $|\stposm(\mathcal S)| = p+1$. Fix $n\ge 1$ and $p \in [n]$. Then, the family $\mathcal F_{n,p}$ of strings built as above contains $\binom{n}{p}$ elements and satisfies the conditions of the claim. 
    
    The proof for $\stposp$ is symmetric: just sort the set's elements in increasing order $\{x_1 < x_2 < \dots < x_p\} \subseteq [n]$ and build $\mathcal S$ as above. Then, the irreducible positions are $i=1,n$, and $i+1$ for all $i<n$ such that $\mathcal S[i]>0$ (in total, $p + 1$ irreducible positions).
\end{proof}

Theorem \ref{thm:stpos wc optimal} proves that $p=\Theta(|\stposm|)$ words of space are essentially worst-case optimal as a function of $n$ and $p$ to compress the string (the same holds for $p=|\stposp|$). This result is similar to that obtained in \cite{kociumaka2022toward} for the repetitiveness measure $\delta$.
Since by Theorem \ref{thm: compressing PDA} $O(|\stposm|\log|\mathcal S|)$ bits (in the theorem above, $O(p\log(np)) = O(p\log n)$ bits) are also sufficient to compress $\mathcal S$ (the same holds for $|\stposp|$), this indicates that $|\stposm|$ and $|\stposp|$ are two meaningful repetitiveness measures.

\subsubsection{Applications of $\stpos$.}

By running Algorithm~\ref{alg: locate primary} on arrays $\stposm$ and $\stposp$, one can quickly locate the leftmost and rightmost occurrences of $P$ in $\mathcal T$. This is a problem that is not easily solvable with the $r$-index (unless using a heavy context-free grammar of $O(r\log(n/r))$ words supporting range minimum queries on the Suffix Array --- see \cite{GNP20}).
We briefly discuss an application of $\stpos$ and observe interesting connections between this STPD, suffix tree construction, and the \emph{Prediction by Partial Matching} compression algorithm.  

\paragraph{Taxonomic classification of DNA fragments.}
The STPD discussed in this section finds an important application in taxonomic classification algorithms, which we plan to explore in a future publication. 
Imagine having a collection of genomes, organized in a phylogenetic tree.
The taxonomic classifiers Kraken \cite{wood2014kraken} and Kraken 2 \cite{wood2019improved} index the genomes in a given phylogenetic tree such that, given a DNA read (i.e., a short string over the alphabet $\{A,C,G,T\}$), they can map each $k$-mer (substring of length $k$) in that read to the root of the smallest subtree of the phylogenetic tree containing all the genomes containing that $k$-mer.  Considering the roots for all the $k$-mers in the read, the tool then tries to predict from what part of the tree the read came from (with a smaller subtree corresponding to a more precise prediction). The mapping is fairly easy with $k$-mers but becomes more challenging with variable-length substrings such as maximal exact matches (MEMs), for example.

A first step to generalizing the strategy to variable-length strings has been taken in \cite{GagieKatka2022}. In that paper, the authors concatenate the genomes in leaf order and index the resulting string with a structure able to return the leftmost and rightmost occurrence of all the $k$-mers of an input string, where (differently from \cite{wood2014kraken,wood2019improved}) $k$ is provided at query time rather than index construction time. At this point, for each $k$-mer a constant-time lowest-common-ancestor (LCA) query on the (at most) two genomes (tree nodes) containing those occurrences, yields the smallest subtree containing the $k$-mer. 

%However, while locating the leftmost and rightmost occurrences of a variable-length string is fairly easy with uncompressed data strutures (e.g. with the Suffix Array), in those applications one wants a compressed data structure since the concatenated genomes are highly similar. 
%As mentioned above, this has been proven to be an elusive task on repetition-aware compressed data structures.

Using suffix tree path decompositions, we can easily extend this solution to any subset of strings of the query string (e.g.\ the set of al MEMs). 
%Suppose we concatenate the genomes in the phylogenetic tree in the left-to-right order they appear in the tree, separated by some special character, before we index them, and we keep a sparse bitvector marking where each one starts in the concatenation and a lowest-common-ancestor (LCA) data structure for the phylogenetic tree.  
%Then, o
Once we have a MEM, with $\stposm$ and $\stposp$ we can find the MEM's leftmost and rightmost occurrences, the genomes that contain those occurrences (with a simple predecessor data structure), and those genomes' LCA --- which is the root of the smallest subtree of the phylogenetic tree containing all occurrences of the MEM.  This means $\stposm$ and $\stposp$ finally offer the opportunity to generalize Kraken and Kraken 2's approach to taxonomic classification to work \emph{efficiently} (fast query times and fully-compressed space) with arbitrary-length strings such as MEMs. 

%For example, this task stands at the core of the taxonomic-classification tool of Wood and Salzberg \cite{wood2014kraken}, where the leftmost and rightmost pattern occurrences in a (repetitive) collection of genomes clustered by similarity, allows one to identify the biological lowest common ancestor of the genomes containing the match.

    \paragraph{Relation to Ukkonen's suffix tree construction algorithm.}
    We observe the following interesting relation between the irreducible values in $\LPF$ and 
    Ukkonen's suffix tree construction algorithm \cite{Ukkonen95}. 
    Ukkonen's algorithm builds the suffix tree by inserting the suffixes from the longest ($\mathcal{T}[1,n])$ to the shortest ($\mathcal{T}[n,n]$). Instead of inserting exactly one suffix per iteration like McCreight's algorithm \cite{McCreight76}, Ukkonen's algorithm inserts a variable number of suffixes per iteration.
    
    Consider the beginning of the $i$-th iteration, for $i=1,\dots,n$. The algorithm maintains a variable $j\in[n]$, indicating that we have already inserted suffixes $\mathcal{T}[1,n],\cdots\mathcal{T}[j-1,n]$ and we are at the locus of $\mathcal{T}[j,i-1]$ on the suffix tree. We have two cases. (i) If there already exists a locus corresponding to $\mathcal{T}[j,i]$, i.e., if there exists an out-edge labeled with $\mathcal{T}[i]$ from the current locus, we just walk down the tree accordingly taking child $\mathcal{T}[i]$. (ii) If there is no locus for $\mathcal{T}[j,i]$ in the tree, then we create a path labeled with $\mathcal{T}[i,n]$ (if $\mathcal{T}[j,i-1]$ ends in the middle of the label of a suffix tree edge, we also create an internal node and a suffix link to this new internal node if necessary), meaning that we are inserting the leaf that corresponds to $\mathcal{T}[j,n]$, then move to the locus for $\mathcal{T}[j+1,i-1]$ (following a suffix link), increment $j$ by $1$ and repeat (ii) until we fall into Case~(i) or until $j>i$. 

    When inserting a suffix $\mathcal{T}[j,n]$ during the $i$-th iteration, we are on the locus of $\mathcal{T}[j,i-1]$, meaning that there exists a suffix $\mathcal{T}[j',n]$ with $j'<j$ that is prefixed by $\mathcal{T}[j,i-1]$, but none of the suffixes $\mathcal{T}[j',n]$ with $j'<j$ is prefixed by $\mathcal{T}[j,i]$, which implies that $\LPF[j]=(i-1)-j+1=i-j$. Note that if more than one suffix is inserted in the same iteration $i$, they must be consecutive in terms of starting positions, i.e., those are $\mathcal{T}[j,n],\mathcal{T}[j+1,n],\cdots,\mathcal{T}[j+k,n]$ for some $k\in[n]$. Note that such suffixes form a decreasing run in $\LPF$ since $\LPF[j+k']=i-j-k'$ for $0\le k'\le k$. Moreover, observe that $(j+k')+\LPF[j+k']=i$ for $0\le k'\le k$, which implies they all those suffixes contribute to a single element $i\in\stposm$. 
    As a result, $\stposm$ can be interpreted as the set of iterations in Ukkonen's algorithm in which at least one leaf is inserted.

    \paragraph{Relation with PPM$^*$.}

    To conclude the section, we observe that $\stposm$ is tightly connected with the \emph{Prediction by Partial Matching} compression algorithm. This connection indicates why this STPD achieves very good compression in practice (see experimental results in Section \ref{sec:experiments}). 

    Consider how we decompose the suffix tree for $\mathcal T [1,n]$ when building $\stposm$.  If we choose a path from a node $u$ to a leaf labeled with SA entry $i$, then either
\begin{enumerate}
\item $u$ is the root, $i = 1$, the path's label is all of $\mathcal T$, and we add 1 to $\stposm$; or
\item $u$ is already in a path to a leaf labelled $h < i$.
\end{enumerate}
We focus on the second case.  Let $\ell > 0$ be the length of $u$'s path label,
\[\mathcal T [h,h + \ell - 1] = \mathcal T [i,i + \ell - 1]\,,\]
so we add $i + \ell$ to $\stposm$.

Notice $\mathcal T [i,i + \ell - 1]$ occurs starting at position $h < i$ but $\mathcal T [i,i + \ell]$ is the first occurrence of that substring.  Let $i' \leq i < i + \ell$ be the minimum value such that $\mathcal T [i',i + \ell - 1]$ occurs in $\mathcal T [1,i + \ell - 2]$; then $\mathcal T [i',i + \ell]$ does not occur in $\mathcal T [1,i + \ell - 1]$.  Setting $j = i + \ell$, we obtain the following corollary:

\begin{corollary}
\label{lem:subset}
\[\stposm
\subseteq \left\{\ j\ :\ \begin{array}{l}
	\mbox{either $j = 1$ or, for the minimum value $i' < j$} \\
	\mbox{such that $\mathcal T [i',j - 1]$ occurs in $\mathcal T [1,j - 2]$,} \\
	\mbox{$\mathcal T [i',j]$ does not occur in $\mathcal T [1,j - 1]$}
\end{array}\right\}\,.\]
\end{corollary}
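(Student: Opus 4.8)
The plan is to read off Corollary~\ref{lem:subset} from Remark~\ref{rem: PDA = irreducible} specialized to $\pi = id$, which gives $\stposm = \{\,i + \LPF[i] : i \text{ is an irreducible } \LPF \text{ position}\,\}$, where $\LPF = \LPF_{id}$. It thus suffices to fix an irreducible position $i$, put $\ell := \LPF[i]$ and $j := i + \ell$, and verify that $j$ satisfies the condition defining the right-hand side. If $i = 1$, then $\LPF[1] = 0$ by definition of $\LPF_{id}$, so $j = 1$ and we are in the first alternative. From now on assume $i > 1$; in that case $\LPF[i] = \max_{q < i} \rlce(q,i)$, a maximum over a finite nonempty set, hence attained.

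Next I would record two facts. (a) $j = i + \ell \le n$: otherwise some $h < i$ would satisfy $\rlce(h,i) = \ell \ge n - i + 1$, forcing the suffix $\mathcal T[i,n]$ --- which ends with the unique symbol $\$$ --- to occur also at position $h < i$, which is impossible. (b) \emph{The substring $\mathcal T[i,j]$ has no occurrence starting strictly before position $i$}: if $\mathcal T[q,\,q+(j-i)] = \mathcal T[i,j]$ with $q < i$, then $\rlce(q,i) \ge j - i + 1 = \ell + 1$, contradicting $\ell = \max_{q' < i}\rlce(q',i)$. (When $\ell = 0$, fact (b) is just the statement $\LPF[i] = 0$.)

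Then I would identify the index $i'$ appearing in the statement --- the minimum $i' < j$ such that $\mathcal T[i',j-1]$ occurs in $\mathcal T[1,j-2]$ --- and show $i' \le i$. If $\ell \ge 1$, choose $h < i$ attaining $\rlce(h,i) = \ell$; then $\mathcal T[i,j-1] = \mathcal T[h,\,h+\ell-1]$ and $h + \ell - 1 \le (i-1) + (\ell-1) = j-2$, so $i$ is itself one of the positions over which the minimum is taken, whence $i' \le i$. If $\ell = 0$ then $j = i$, so any admissible $i' < j$ already has $i' < i$; and if no admissible $i'$ exists the condition in the corollary holds vacuously and $j$ lies in the right-hand side. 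In every remaining situation $i' \le i < j$, so $\mathcal T[i,j]$ is a suffix of $\mathcal T[i',j]$. Now suppose, for contradiction, that $\mathcal T[i',j]$ occurs in $\mathcal T[1,j-1]$, say $\mathcal T[p,\,p+(j-i')] = \mathcal T[i',j]$ with $p + (j-i') \le j-1$; aligning its length-$(j-i+1)$ suffix $\mathcal T[i,j]$ yields an occurrence of $\mathcal T[i,j]$ at position $p + (i-i') = (p + (j-i')) - (j-i) \le (j-1) - (j-i) = i-1 < i$, contradicting fact (b). Hence $\mathcal T[i',j]$ does not occur in $\mathcal T[1,j-1]$, which is exactly the required membership condition.

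I expect the only delicate point to be the bookkeeping for the degenerate regimes --- $\ell = 0$ together with $i > 1$, and the case in which no admissible $i'$ exists --- together with fact (a), i.e.\ the use of the unique terminator $\$$ to guarantee $i + \LPF[i] \le n$ so that $\mathcal T[i,j]$ is a genuine substring; the length-counting step that produces the contradiction is otherwise routine.
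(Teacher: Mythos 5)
Your proof is correct and follows essentially the same route as the paper: write each element of $\stposm$ as $j = i+\LPF[i]$, use the maximality of $\LPF[i]$ (i.e.\ $\mathcal T[i,j]$ has no occurrence starting before $i$) to get that the minimal admissible $i'$ satisfies $i'\le i$ and that an earlier occurrence of $\mathcal T[i',j]$ would force an earlier occurrence of its suffix $\mathcal T[i,j]$. You are in fact more careful than the paper's informal derivation about the degenerate cases ($j\le n$ via the terminator, $\LPF[i]=0$ with $i>1$, and the vacuous reading when no admissible $i'$ exists); the only blemish is the stray claim ``$i'\le i<j$'' in the $\LPF[i]=0$ case, where $i=j$, but only $i'\le i$ is actually used, so the argument stands.
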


\emph{Prediction by Partial Matching with unbounded context (PPM$^*$)} is a popular and effective compression algorithm that encodes each text's symbol $\mathcal T[j] = c$ according to the probability distribution of characters following the longest context $\mathcal T[i',j-1] = \alpha$ preceding it that already occurred before in the text followed by $c$. Here we consider the simple version of PPM$^*$ lengths~\cite{cleary1997unbounded} shown in Algorithm~\ref{algorithm: PPM}. When all previous occurrences of $\alpha$ are never followed by $c$ (that is, the model cannot assign a nonzero probability for character $c$ given context $\alpha$), the algorithm outputs a special \emph{escape} symbol $\bot$ followed by character $c$. 

\begin{comment}
    A great amount of literature has been spent on how to encode $\bot\cdot c$ so as to minimize the final encoding size (while preserving the injectivity of the encoding). For the purposes of our discussion below, it is sufficient to observe the following: 

\begin{remark}
    Let $esc$ be the number of escape symbols output by Algorithm ~\ref{algorithm: PPM}. Any PPM$^*$ encoder implementing Algorithm ~\ref{algorithm: PPM} must output at least $esc$ bits in the worst case. To see this, observe that we can encode any set of $esc$ integers $\{x_1 < x_2 < \dots < x_{esc}\} \subseteq [n]$ as a string $\mathcal S = 0^{x_1}10^{x_2}20^{x_3}3\cdots 0^{x_{esc}}esc$. It is not hard to see that 
\end{remark}
\end{comment}

Consider the set of positions at which we emit copies of the escape symbol $\bot$ when encoding $\mathcal T$ with Algorithm~\ref{algorithm: PPM}. By inspection of the algorithm, that set is the same as the one on the right side of the inequality in Lemma~\ref{lem:subset}. We conclude: 

\begin{lemma}
    $|\stposm|$ is upper-bounded by the number of escape symbols $\bot$ output by the PPM$^*$ algorithm described in~\cite{cleary1997unbounded} (Algorithm \ref{algorithm: PPM}).
\end{lemma}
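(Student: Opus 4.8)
The plan is to deduce the claim directly from Corollary~\ref{lem:subset} by showing that the set of positions at which Algorithm~\ref{algorithm: PPM} emits an escape symbol $\bot$ coincides \emph{exactly} with the set $E$ appearing on the right-hand side of the inclusion in Corollary~\ref{lem:subset}. Once this is done the bound is immediate: Corollary~\ref{lem:subset} gives $|\stposm| \le |E|$, and $|E|$ is by definition the number of escape symbols output by the algorithm.

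First I would unwind Algorithm~\ref{algorithm: PPM}. When the encoder is about to process $\mathcal T[j]$, it works in the longest suffix $\alpha = \mathcal T[i',j-1]$ of the already-processed prefix that has another occurrence inside $\mathcal T[1,j-2]$; by maximality of the context length, $i'$ is precisely the \emph{minimum} index such that $\mathcal T[i',j-1]$ occurs in $\mathcal T[1,j-2]$, matching the quantity ``$i'$'' in Corollary~\ref{lem:subset}. The algorithm emits $\bot$ (followed by $\mathcal T[j]$) exactly when the model assigns probability zero to $\mathcal T[j]$ in context $\alpha$, i.e.\ when no previous occurrence of $\alpha$ in $\mathcal T[1,j-2]$ is followed by $\mathcal T[j]$ --- equivalently, when $\mathcal T[i',j] = \alpha \cdot \mathcal T[j]$ does not occur in $\mathcal T[1,j-1]$. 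The boundary position $j=1$ has no context at all and is therefore always escaped. Comparing these two conditions against the description of $E$ in Corollary~\ref{lem:subset}, the set of escape positions is exactly $E$, and combining with Corollary~\ref{lem:subset} yields $|\stposm| \le |E| = \#\{\text{escape symbols}\}$, which is the statement.

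The main obstacle I anticipate is purely bookkeeping: reconciling the off-by-one indices in the two descriptions (``occurs in $\mathcal T[1,j-2]$'' for the context versus ``occurs in $\mathcal T[1,j-1]$'' for its one-character extension), confirming that PPM$^*$'s ``longest previously-seen context'' corresponds to the \emph{minimum} $i'$ of Corollary~\ref{lem:subset} rather than some other index, and checking that the escape rule in Algorithm~\ref{algorithm: PPM} is literally ``the current context extended by the next character has not been seen before'' and does not escape for any additional reason. Provided the algorithm is the plain version of~\cite{cleary1997unbounded} reproduced in Algorithm~\ref{algorithm: PPM}, this reconciliation is routine and the equality $\{\text{escape positions}\} = E$ holds exactly, so the inclusion of Corollary~\ref{lem:subset} immediately gives the stated upper bound.
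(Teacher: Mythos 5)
Your proposal is correct and follows essentially the same route as the paper: the paper also observes that, by inspection of Algorithm~\ref{algorithm: PPM}, the set of positions where $\bot$ is emitted coincides with the set on the right-hand side of Corollary~\ref{lem:subset}, and then the bound $|\stposm|\le\#\{\text{escapes}\}$ follows immediately. Your write-up merely spells out that inspection (the maximal-context index being the minimum $i'$, and the escape condition being exactly ``$\mathcal T[i',j]$ unseen in $\mathcal T[1,j-1]$''), which is the same bookkeeping the paper leaves implicit.
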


\bigskip

\begin{comment}
\begin{figure}[h]
\begin{algorithmic}
\State encode $\mathcal T [1]$ as $\bot \mathcal T [1]$
\For{$2 \leq j \leq n$}
  \State $i' \leftarrow \mathrm{argmin}_{i'} \left\{ \rule{0ex}{2ex} \mbox{$\mathcal T [i'..j - 1]$ occurs in $\mathcal T [1..j - 2]$} \right\}$
  \If{$\mathcal T [i'..j]$ occurs in $\mathcal T [1..j - 1]$}
    \State encode $\mathcal T [j]$ according to the distribution of characters
    \State immediately following occurrences of $\mathcal T [i'..j - 1]$
  \Else
    \State encode $\mathcal T [j]$ as $\bot \mathcal T [j]$
  \EndIf
\EndFor
\end{algorithmic}
\caption{A simple version of PPM$^*$.}
\label{fig:PPM}
\end{figure}
\end{comment}

\begin{algorithm}
\caption{A simple version of PPM$^*$}\label{algorithm: PPM}
% \KwIn{$\mathcal{T}[1,n]$}
% \KwOut{Encoded sequence}
Encode $\mathcal{T}[1]$ as $\bot \mathcal{T}[1]$\;

\For{$j =2\ldots n$}{
    $i' \gets \arg\min_{i'} \left\{i' \le j: \mathcal{T}[i',j-1] \text{ occurs in } \mathcal{T}[1,j-2] \right\}$\;
    \eIf{$\mathcal{T}[i',j]$ occurs in $\mathcal{T}[1,j-1]$}{
        Encode $\mathcal{T}[j]$ according to the distribution of characters
        immediately following occurrences of $\mathcal{T}[i',j-1]$\;
    }{
        Encode $\mathcal{T}[j]$ as $\bot \mathcal{T}[j]$\;
    }
}
\end{algorithm}

\section{Preliminary experimental results}\label{sec:experiments}

We show some preliminary experimental results on repetitive genomic collections. In Section \ref{sec: PDA in practice} we show that the number of irreducible $\LCP$ values is, in practice, much smaller than $r$. 
Section \ref{sec: experiment ST vs rindex} is dedicated to showing that (as expected from Remark \ref{remark:ST cache}) the suffix tree has excellent locate performance compared to the $r$-index.
In subsection \ref{sec: experiments index} we show that, in addition to being very small, our index is also orders of magnitude faster than the $r$-index (especially as the pattern's length $m$ increases).

\subsection{$|\PDA_\pi|$ in practice}\label{sec: PDA in practice}
We begin with a few numbers (Table \ref{tab:measures}) showing how $|\PDA_\pi|$ compares in practice with $n$ (text length), $r$ (number of runs in the Burrows-Wheeler transform of the text), $\bar r$ (number of runs in the Burrows-Wheeler transform of the reversed text), and $\chi$ (size of the smallest suffixient set) for $\pi = \ISA$ ($|\stlexm|$), $\pi = \IPA$ ($|\stcolexm|$), and $\pi = id$ ($|\stposm|$). For this experiment, we used the corpus \url{https://pizzachili.dcc.uchile.cl/repcorpus.html} of repetitive DNA collections  (this repository became a standard in the field of compressed data structures). While only being representative of a small set of data, we find it interesting to observe that on these datasets the number of irreducible $\LCP$ values ($|\stlexm|$ and $|\stcolexm|$) is consistently smaller than both the number of $\BWT$ runs ($r$ and $\bar r$) and the Suffixient Array samples ($\chi$). From a practical perspective, this means that it is preferable to use data structures whose space is proportional to $|\stcolexm|$ rather than $r$ or $\chi$. 
The same holds true for the number of irreducible $\LPF$ values ($|\stposm|$).

\begin{table}[]
    \centering
    \begin{tabular}{|l|r|r|r|}
    \hline
    & influenza & cere & escherichia\\
    \hline
    $n$ & 154808556 & 461286645 & 112689516 \\
    $|\stlexm|$ & 1815861 & 7455556 & 9817075 \\
    $|\stcolexm|$ & 1805730 & 7455980 & 9825972 \\
    $|\stposm|$ & 1928408 & 8954851 & 11677573 \\
    $\chi$ & 2225543 & 9921698 & 13119749 \\
    %LEX-Parse & 768623 & 1649448 & 2014012 \\
    %LZ77 & 769287 & 1700631 & 2078513 \\
    $r$ & 3022822 & 11574641 & 15044487 \\
    $\bar r$ & 3018825 & 11575583 & 15045278 \\
    \hline
    \end{tabular}
    \vspace{5pt}
    \caption{Experimental evaluation of Path Decomposition Array sizes on repetitive collections from \url{http://pizzachili.dcc.uchile.cl/repcorpus.html}.}
    \label{tab:measures}
\end{table}

\subsection{I/O complexity of classic versus compressed indexes}\label{sec: experiment ST vs rindex}

We compared custom implementations of the suffix tree and Suffix Array against the original $r$-index's implementation (\url{https://github.com/nicolaprezza/r-index}) on a small dataset in order to confirm the caching effects discussed in Section \ref{sec:intro}.
All the experiments in this and in the following section have been run on an Intel(R) Xeon(R) W-2245 CPU @ 3.90GHz workstation with 8 cores and 128 gigabytes of RAM
running Ubuntu 18.04 LTS 64-bit. 

Due to the large space usage of the suffix tree, in this experiment we tested a text of limited length $n = 10^8$ formed by variants of the SarsCoV2 genome downloaded from the Covid-19 data portal (\url{www.covid19dataportal.org/}). The goal of the experiment was to verify experimentally the size and I/O complexity of the suffix tree, Suffix Array, and $r$-index:

\begin{itemize}
    \item Suffix tree (Remark \ref{remark:ST cache}): $O(n)$ words of space, $O(d + m/B)$ I/O complexity for finding one occurrence, $O(occ)$ I/O  complexity to find the remaining $occ-1$ occurrences.
    \item Suffix array: $n$ words of space on top of the text, $O((1+m/B)\log n)$ I/O complexity for finding one occurrence, $O(occ/B)$ I/O  complexity to find the remaining $occ-1$ occurrences.
    \item $r$-index: $O(r)$ words of space, $\Omega(m)$ I/O complexity for finding one occurrence, $\Omega(occ)$ I/O  complexity to find the remaining $occ-1$ occurrences.
\end{itemize}

The experiment consisted in running locate queries on $10^5$ patterns of length $m=100$, all extracted from random text positions (in particular, $occ\ge 1$ for every pattern). 
The average number of occurrences per pattern was 2896 (the dataset is very repetitive).
For each pattern, we measured the running time of locating the first occurrence and that of locating all occurrences. 

To no surprise, due to the highly repetitive nature of the dataset, the $ r$-index's size was of just $0.9$ MiB, orders of magnitude smaller than the Suffix Array (0.3 GiB) and of the suffix tree (9.9 GiB). As expected, however, the poor cache locality of pattern matching queries on the $r$-index makes it orders of magnitude slower than those two classic data structures. 
As far as locating the primary occurrence (finding the pattern's locus in the suffix tree) was concerned, the $r$-index took 572 $ns$ per pattern character on average. This is not far from the 200 $ns$ per character of the Suffix Array (where each binary search step causes a cache miss --- an effect that gets diluted as $m$ increases), but one order of magnitude larger than the 60 $ns$ per character of the suffix tree. This also indicates that the term $d$ (node depth of the pattern's locus in the suffix tree) has a negligible effect in practice. 

As far as locating the remaining $occ-1$ occurrences was concerned, the $r$-index took 105 $ns$ per occurrence on average to perform this task. This is five times larger than the 21 $ns$ per character of the suffix tree; while both trigger at least one cache miss for each located occurrence, the $r$-index executes a predecessor query per occurrence, which in practice translates to several cache misses\footnote{This gap was recently closed by the move structure of Nishimoto and Tabei \cite{Nishimoto21Move}. On this dataset, the move-$r$ data structure \cite{BertramMoveR24} was able to locate each occurrence in 14 $ns$ on average.}. As expected, the Suffix Array solved this task in 3.4 $ns$ per occurrence. This confirms its $O(occ/B)$ I/O complexity for outputting the occurrences after locating the Suffix Array range.

\subsection{Comparing $\stcolexm$ with the state-of-the-art}\label{sec: experiments index}

%We implemented the index described in Section \ref{sec:stcolex} and tested it on \emph{locate} queries against a classic suffix tree and the $r$-index.
Figure \ref{fig:results3} shows some preliminary results comparing our index based on $\stcolexm$ (Section \ref{sec:stcolex}) with the \texttt{$r$-index} \cite{GNP20}, \texttt{move-$r$} \cite{BertramMoveR24}, the \texttt{suffixient array} \cite{cenzato2025suffixientarraysnewefficient} and the \texttt{suffix array}.

\begin{figure}[pt]
\centering
    \begin{minipage}{0.45\linewidth}
        \includegraphics[width=\linewidth]{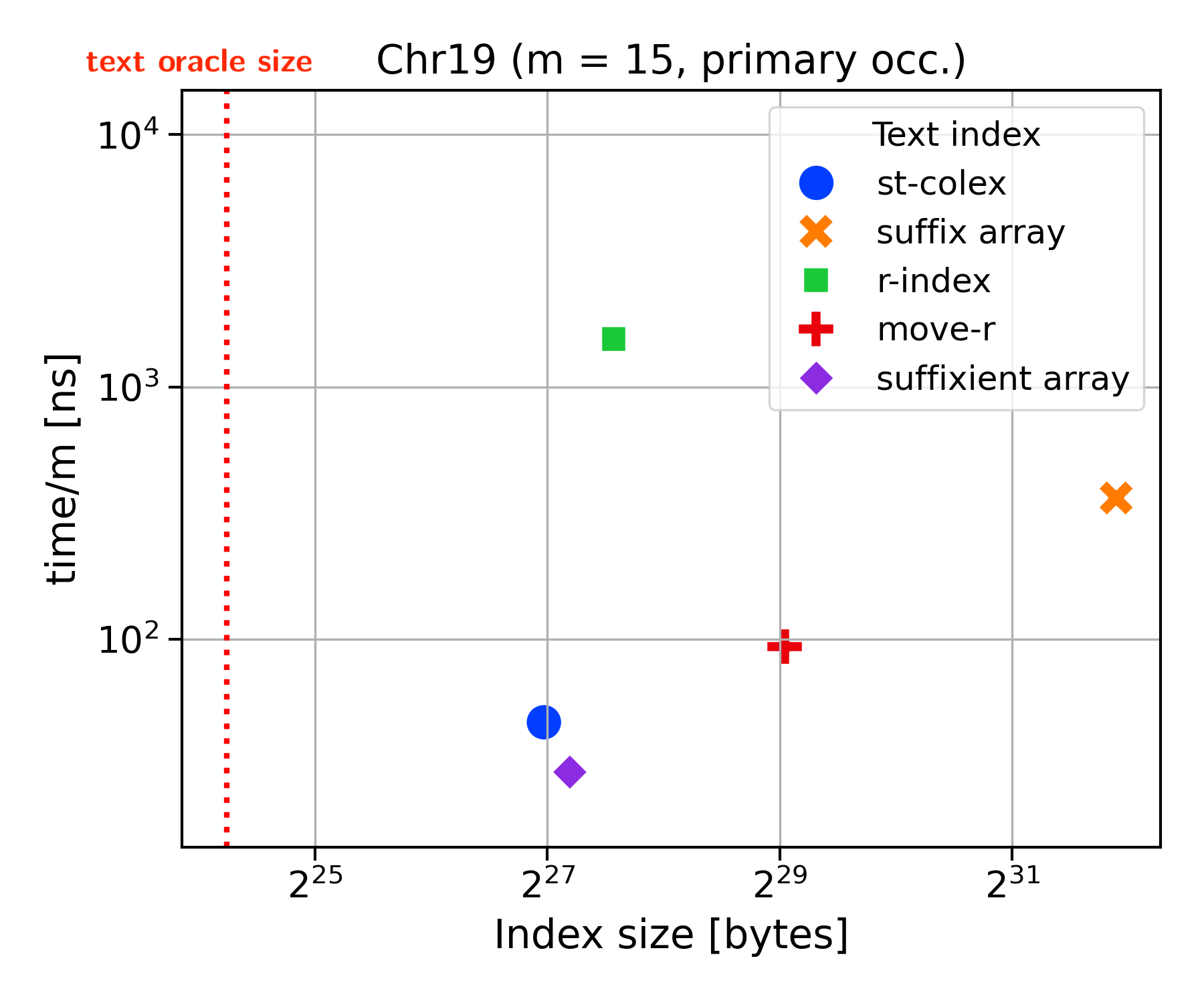}
    \end{minipage}
    \begin{minipage}{0.45\linewidth}
        \includegraphics[width=\linewidth]{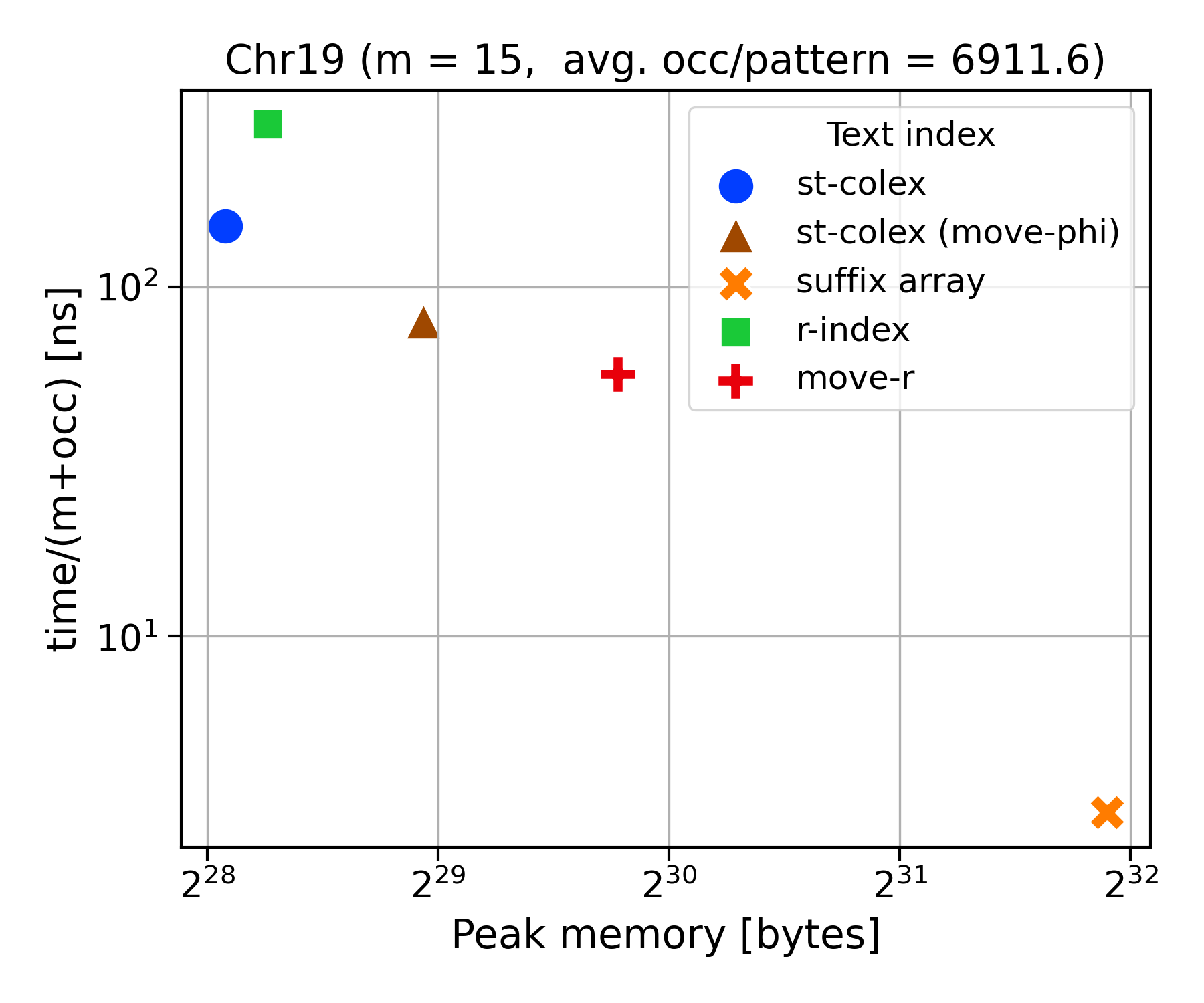}
    \end{minipage}
    \\
    \begin{minipage}{0.45\linewidth}
        \includegraphics[width=\linewidth]{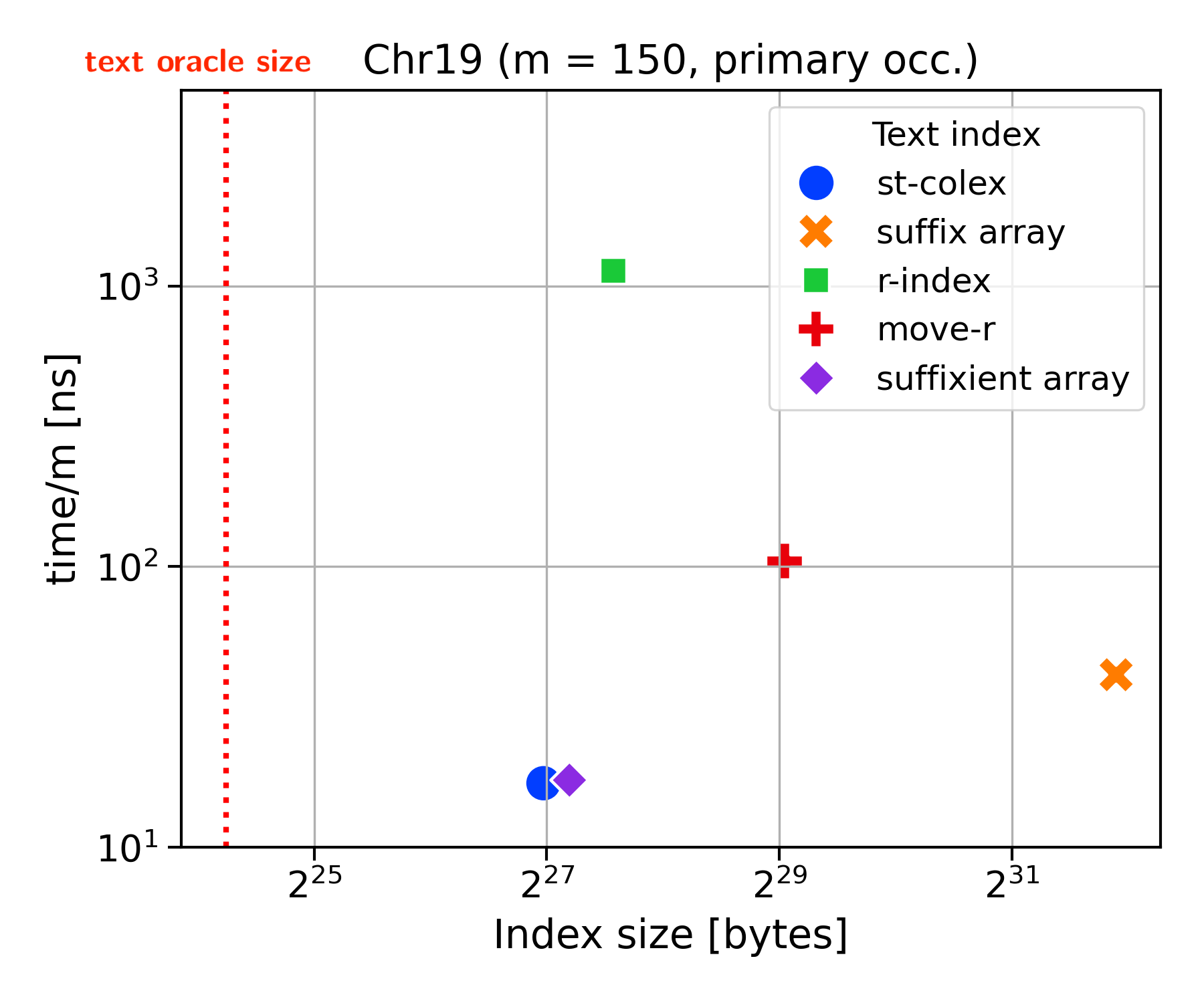}
    \end{minipage}
    \begin{minipage}{0.45\linewidth}
        \includegraphics[width=\linewidth]{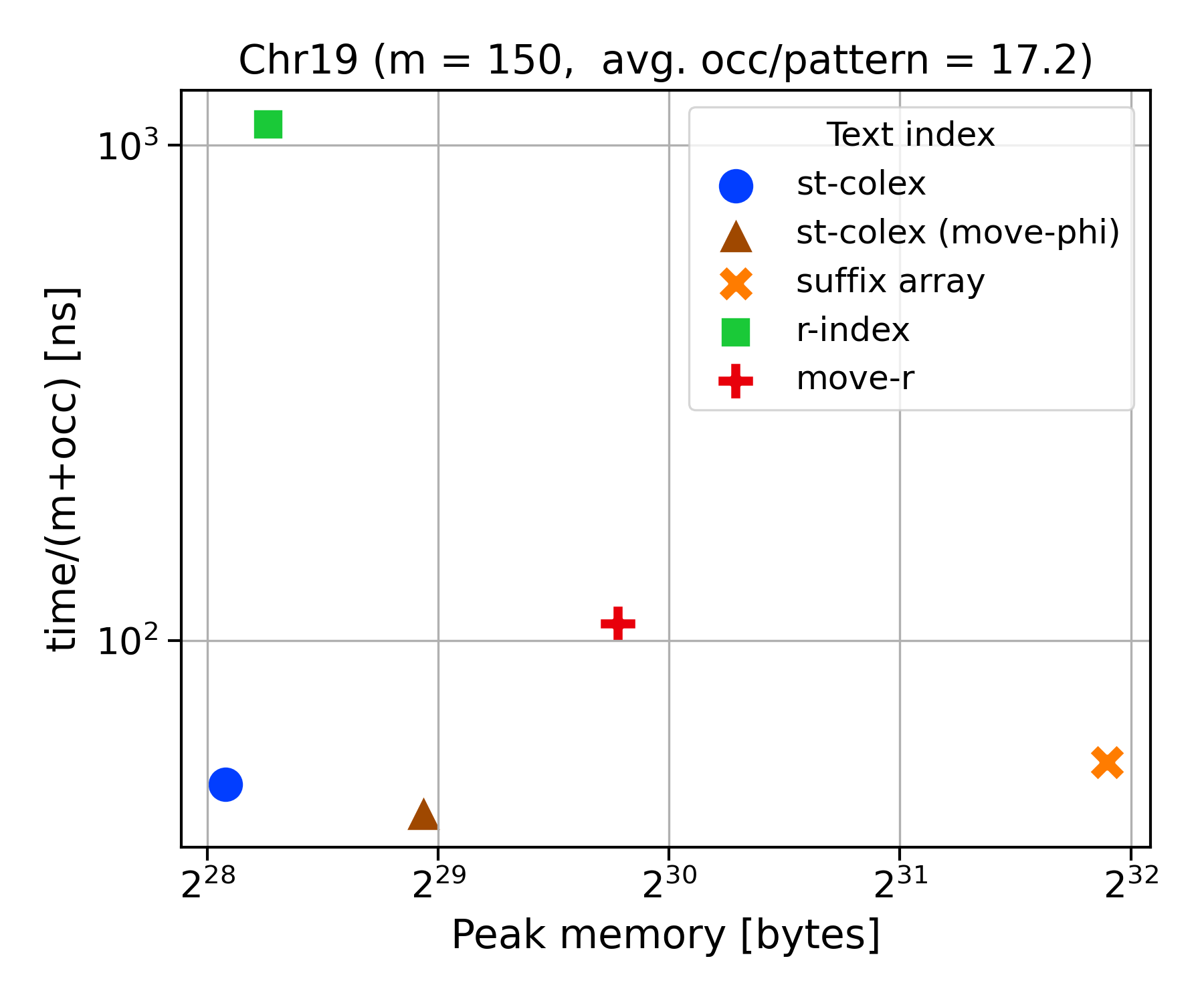}
    \end{minipage}    
    \\
    \begin{minipage}{0.45\linewidth}
        \includegraphics[width=\linewidth]{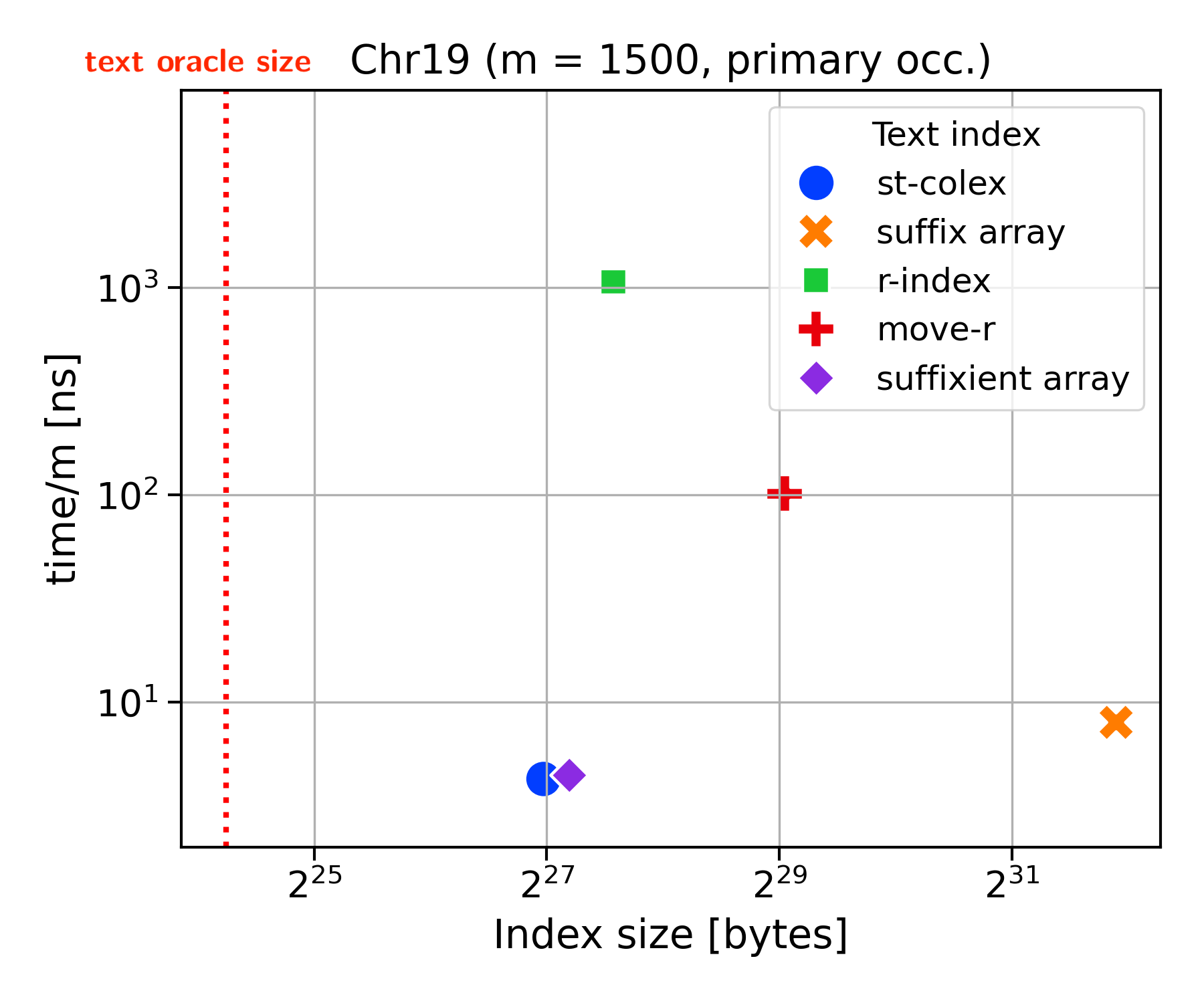}
    \end{minipage}
    \begin{minipage}{0.45\linewidth}
        \includegraphics[width=\linewidth]{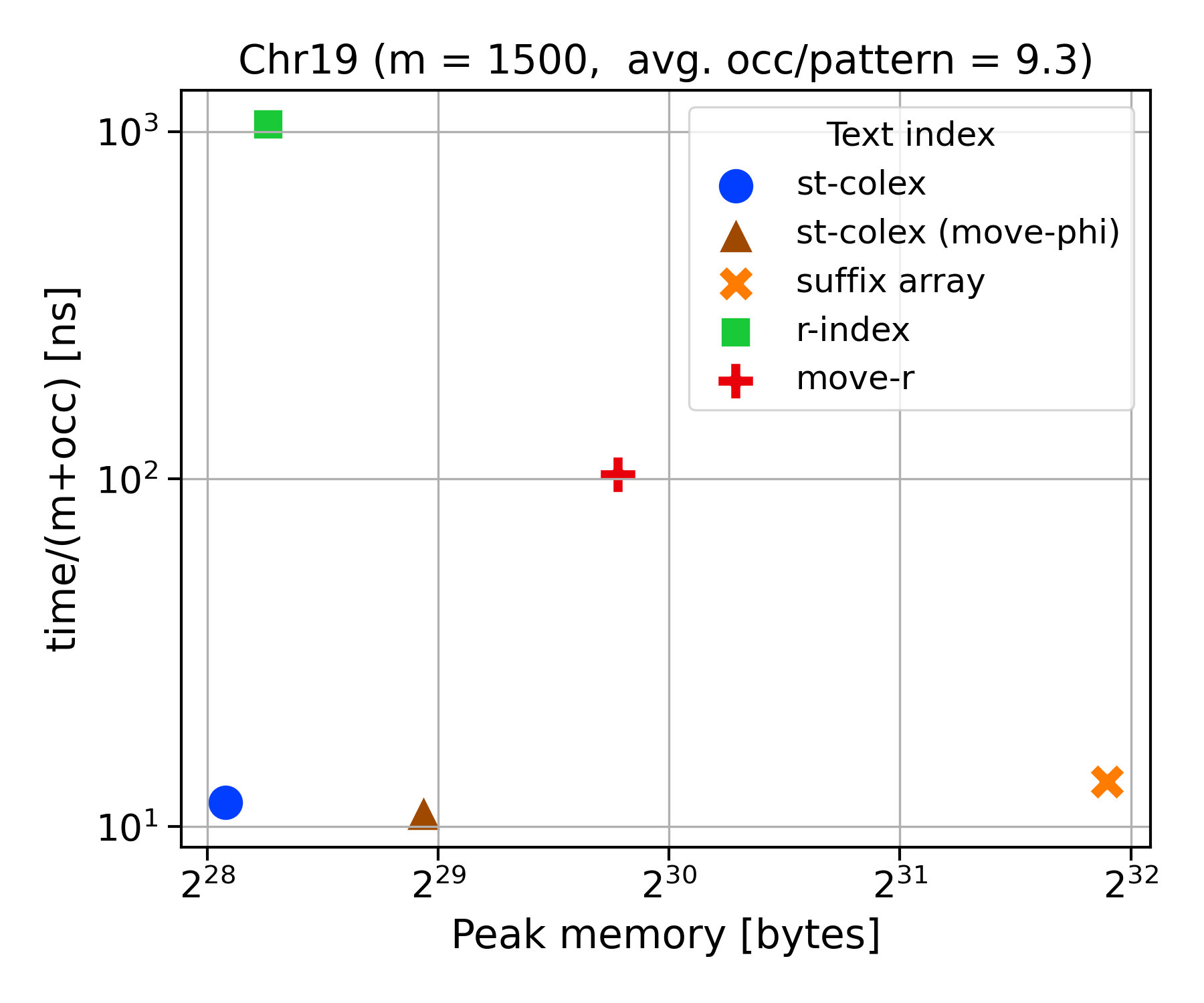}
    \end{minipage}    
    \caption{
    \small
    Time and space used by our index $\stcolex$ (on the right: using both the $r$-index and the move-$r$ implementation of the $\phi$-function to locate all occurrences; see description below for more details), \texttt{$r$-index}, \texttt{move-$r$}, \texttt{suffixient array}, and \texttt{suffix array} to find primary occurrences (left column) and to locate all occurrences (right column) for a set of $10^5$ patterns (lengths $m=$ 15, 150, 1500). 
    For primary occurrences, our index deviates from the usual Pareto curve, dominating all solutions (except the \texttt{suffixient array} for $m=15$) in both dimensions. 
    Conversely, being based on the $\phi$ function---incurring $O(1)$ I/O operations per reported occurrence---, our locate mechanism follows the Pareto curve when $occ$ is large (top-right; $\approx 6911.6$ occurrences/pattern). Escaping this curve is beyond the scope of this article and left to future research.}
    \label{fig:results3}
\end{figure}

We compare these indexes on the task of locating just one (here called ``primary'') occurrence and all occurrences of a set of patterns in a text of length $n \approx 10^9$. The suffix tree was excluded from this experiment due to its size (it did not fit in the RAM of our workstation).
%Next, we discuss some implementation choices made in our practical index. Then, we describe the details of the performed experiments. Finally, we conclude with some comments. 

\paragraph{Implementation details.} 

Our index follows the design of Section \ref{sec:stcolex}, except for a few details and optimizations that we describe below. 

%First, due to a smaller space usage and good performance in practice, we decided to use the original \texttt{$r$-index} $\bar\phi$-function implementation, locating each occurrence in $O(\log(n/r))$ time (rather than Nishimoto and Tabei's move structure \cite{Nishimoto21Move}). This is reflected in the fact that, for large $occ$ (Figure \ref{fig:results2}, pattern length $30$), we locate pattern occurrences slower than move-$r$ \cite{BertramMoveR24} (an optimized implementation of the move structure \cite{Nishimoto21Move}, which in turn improves the locating mechanism of the \texttt{$r$-index}), which however uses much more space. 
%Importantly, the locating mechanism of move-$r$ can be seamlessly integrated in our index: this gives a space-time trade-off that we discuss below. 

First, when locating all occurrences (right plots in Figure \ref{fig:results3}), we integrated both the original $r$-index $\bar{\phi}$-function implementation, which locates each occurrence in $O(\log(n/r))$ time---in the plots, this version of our index is indicated with \texttt{st-colex}---, and the move structure by Nishimoto and Tabei \cite{Nishimoto21Move} (whose optimized version is implemented in move-$r$ \cite{BertramMoveR24})---in the plots, \texttt{st-colex (move-phi)}. This choice is motivated by the fact that, for large $occ$ (Figure~\ref{fig:results3}, pattern length 15), the $r$-index $\bar{\phi}$-function is slower than the move-$\bar{\phi}$ alternative, which, however, requires significantly more space. This results in a space-time trade-off that we discuss below.

As random access oracle, we used an ad-hoc implementation of Relative Lempel Ziv \cite{kuruppu2010relative}, yielding excellent compression and random access cache locality on repetitive collections of genomes. While relative Lempel-Ziv cannot formally guarantee $O(\ell/B)$ I/O operations for extracting a contiguous subsequence of $\ell$ characters, in practice it gets close to this performance on repetitive collections (as these experiments show).

Finally, we implemented two heuristics speeding up function $\sufsearch$ (see Algorithm~\ref{alg: locate primary}): (i) we tabulate the results of short suffixes of length at most $k_{max}$ (where the maximum $k_{max}$ is chosen so that the table uses at most $50\%$ of the space of the remaining part of the index), and (ii) we start Algorithm~\ref{alg: locate primary} from $j=k_{max}+1$, choosing the value of $i$ returned by the table on $P[1,k_{max}]$. 
Heuristics (i) and (ii) allow avoiding binary search in most cases and dramatically speed up Algorithm~\ref{alg: locate primary} in practice. 

\paragraph{Datasets and experiments.}

The input text consisted of 19 variants of Human chromosome 19 (total length $n\approx 10^9$) downloaded from \url{https://github.com/koeppl/phoni}.
We extracted $10^5$ patterns of variable length from the text. For each pattern, we measured the resources (peak memory and running time normalized by dividing by $m$ --- left --- and $m+occ$ --- right) used by all the indexes while finding one pattern occurrence and locating all pattern occurrences (Figure \ref{fig:results3}).

\paragraph{Comments.}

On this dataset, our index \texttt{st-colex} is smaller than \texttt{$r$-index}, about four times smaller than \texttt{move-$r$}, and orders of magnitude smaller than \texttt{suffix array}. 

\textbf{Locate one occurrence.} As expected, the left column of Figure \ref{fig:results3} shows that our index is always orders of magnitude faster than \texttt{$r$-index} on the task of locating one occurrence, and even faster than \texttt{suffix array}. 
As expected, I/O-inefficient indexes incurring $O(m)$ I/O operations (\texttt{$r$-index} and \texttt{move-$r$}) exhibit a constant time-per-character running time as a function of $m$ (note that times are normalized by dividing by $m$). The other indexes, incurring a number of I/O operations close to the optimal $m/B$, on the other hand, exhibit a time-per-character search time that gets smaller as the pattern length increases. Our index and \texttt{suffixient array} always dominate all other indexes by a wide margin in both coordinates. Our index is smaller than \texttt{suffixient array}: this reflects the fact that, as we proved in Lemma \ref{lem: stlex <= r}, we sample at most $r$ prefix array entries (in practice much less than that; see also Table \ref{tab:measures}), while \texttt{suffixient array} can sample up to $2r$ entries~\cite{navarro2025smallestsuffixientsetsrepetitiveness, date2025neartightnesschileq2r}.

%\texttt{move-$r$} is only slightly slower than our index for $m=15$, becoming orders of magnitude slower for longer patterns. This confirms the fact that move-$r$ incurs $O(m)$ cache misses while solving \emph{find} queries. With our index, on the other hand, the query time per character gets smaller as the pattern length increases. This is because, often, a short pattern prefix suffices to find an STPD path that continues with all the remaining pattern's characters (matched very quickly on our random access oracle). 

\textbf{Locate all occurrences.} Figure \ref{fig:results3} right column compares all indexes on the task of locating all occurrences. \texttt{suffixient array} was excluded from this experiment since its available implementation does not support this query; in any case, as observed previously, note that locating all occurrences starting from the arbitrary occurrence returned by \texttt{suffixient array} (whose running time is shown in the left plots), requires storing \emph{twice} the phi function mechanism (since both backward and forward navigation of the suffix array are required). Hence, even if an implementation were available, this index would exhibit essentially the same running time as our index, but a much higher space usage. 

Note that running times are normalized by dividing by $m+occ$, that is, the size of the input plus the output. As expected, when $occ \gg m$ the time performance of our index gets close to that of \texttt{$r$-index} and worse than that of \texttt{move-$r$}. This is the case for $m=15$, where each pattern occurred on average 6911.6 times in the (repetitive) text. Longer patterns occurred less frequently and, again, in those regimes our index outperforms all the others in both query time and space usage. 

As proved in these experiments (\texttt{st-colex (move-phi)}), the locating mechanism of \texttt{move-$r$} can be seamlessly integrated in our index (since it is just a way to speed up computing the $\bar\phi$ function of Definition \ref{def:phi}, which stands at the core of our locating mechanism). This yields (at the same time) about the same \emph{locate} time per occurrence of \texttt{move-$r$} data structure, but a much better space usage (almost twice as small). 
Reducing the space of the move data structure of Nishimoto and Tabei \cite{Nishimoto21Move} without sacrificing query times goes out of the scope of this article and will be covered in a journal extension. In any case, this approach cannot break the $\Omega(occ)$ cache misses for locating the pattern's occurrences. Obtaining $O(occ/B)$ cache misses in a space comparable to that of our data structure is the ultimate (challenging) goal of this line of research.

\newpage
\bibliography{stpd}
\newpage

\begin{appendix}

\section{Basic Concepts}
\label{app:basic concepts}

In this section, we review some basic concepts to help readers not being familiar with the field.

\subsection{Suffixient Arrays}\label{app:suffixient arrays}

Like Suffixient Arrays \cite{cenzato2025suffixientarraysnewefficient}, ours is a technique for sampling the Prefix Array while still maintaining search functionalities. 
%Our work improves upon suffixient arrays in two directions: by provably reducing their space usage (that is, using less samples) and by supporting full locate functionality.
Suffixient Arrays are based on the concept of \emph{suffixient set}. See Figure \ref{fig:sA example} for a running example.
A suffixient set is a set $S\subseteq [n]$ of positions on the text $\mathcal T$ with the property that, for every string $\alpha$ labeling the path starting from the suffix tree root to the first character of every suffix tree edge, there exists a position $i\in S$ such that $\alpha$ is a suffix of $\mathcal T[1,i]$. In other words, for every one-character right-extension $\alpha$ of every right-maximal string $\alpha[1,|\alpha|-1]$, there exists $i\in S$ such that $\alpha$ is a suffix of $\mathcal T[1,i]$. The Suffixient Array $\sA$ is a (not necessarily unique) suffixient set $S$ of smallest cardinality, sorted according to the colexicographic order of the corresponding text prefixes  $\{\mathcal T[1,i]\ :\ i\in S\}$.
As discussed in the caption of Figure \ref{fig:sA example} with an example, binary search on $\sA$ and random access on $\mathcal T$ suffice to locate one pattern occurrence.

\begin{figure}
    \centering
    \includegraphics[width=0.6\linewidth,page=3]{figures/st-sample.pdf}
    
    \caption{
    Visualization of a smallest suffixient set for string AACGCGCGAA\$.  
        The corresponding Suffixient Array is $\sA = [11,2,9,3,7,4]$ (assuming alphabet order $\$ < A < C < G$).
        We show how pattern matching works with an example.
        Imagine the task of matching pattern $P = CGCGA$ on $\mathcal T$, and assume that $P$ does occur in $\mathcal T$. Since the alphabet has cardinality at least 2, then the empty string $\epsilon$ is \emph{right-maximal}, hence the pattern prefix $C$ is a one-character extension of a right-maximal string. 
        By binary search on $\sA$ and random access on $\mathcal T$, we find all $i\in \sA$ such that $C$ suffixes $\mathcal T[1,i]$: in this case,  $\mathcal T[1,3]$ and $\mathcal T[1,7]$. Choose arbitrarily such a prefix, for instance $\mathcal T[1,3]$ (this arbitrary choice does not affect correctness of the procedure). From this point, continue matching (by random access on $\mathcal T$) the remaining suffix $P[2,5] = \underline{GCG}A$ with $\mathcal T$'s suffix following the match: $\mathcal T[4,7] = \underline{GCG}CGAA\$$. As highlighted, three characters ($GCG$) match. At this point, we know that both $\mathcal T[3,7] = CGCGC$ and $P[1,5] = CGCGA$ occur in the text. But then, this means again that $P[1,4] = CGCG$ is right-maximal, hence binary-searching $\sA$ with string $CGCGA$ will yield at least one prefix being suffixed by $CGCGA$: in this case, $\mathcal T[1,9]$. Since we reached the end of $P$, we found an occurrence of $P$, that is: $P = \mathcal T[9-m+1,9] = \mathcal T[5,9]$.
    }
    \label{fig:sA example}
\end{figure}

While Suffixient Arrays sample every edge of the suffix tree (see Figure \ref{fig:sA example}), our technique samples just a subset of the edges (that is, the first edge in every STPD path, see Figure \ref{fig:STLEX}). As we show in this paper, this has several benefits: it leads to a smaller sampling size (consistently smaller in practice), it allows us to locate the occurrence optimizing a user-defined function $\pi$, and ultimately it allows us to simulate suffix tree operations and locate all pattern occurrences in small space.

\subsection{Suffix sorting and basic succinct data structures}

\begin{definition}[Suffix Array (SA) etc.~\cite{manber1993suffix}]
    Let $\mathcal S$ be a string of length $n$.
    \begin{itemize}
        \item The \emph{Suffix Array} $\SA$ of $\mathcal S$ is the permutation of $[n]$ such that $\mathcal S[\SA[i], n] <_{\text{lex}} \mathcal S[\SA[j], n]$ holds for any $i,j\in [n]$ with $i<j$. 
        \item The \emph{Prefix Array} $\PA$ of $\mathcal S$ is the permutation of $[n]$ such that $\mathcal S[1,\PA[i]] <_{\text{colex}} \mathcal S[1,\PA[j]]$ holds for any $i,j\in [n]$ with $i<j$.
        \item The \emph{Inverse Suffix Array} $\ISA$ of $\mathcal S$ is the permutation of $[n]$ such that $\ISA[i] = j$ if and only if $\SA[j] = i$.
        \item The \emph{Inverse Prefix Array} $\IPA$ of $\mathcal S$ is the permutation of $[n]$ such that $\IPA[i] = j$ if and only if $\PA[j] = i$.
    \end{itemize}
\end{definition}
We proceed with the definition of the suffix tree. In what follows, we fix a text $\mathcal T$ of length $n$.
\begin{definition}[Suffix trie and Suffix tree (ST)~\cite{weiner1973linear}]
    The \emph{suffix tree} of $\mathcal T$ is an edge-labeled rooted tree with $n$ leaves numbered from $1$ to $n$ such that (i) each edge is labeled with a non-empty substring of $\mathcal T$, (ii) each internal node has at least two outgoing edges, (iii) the labels of outgoing edges from the same node start with different characters, and (iv) the string obtained by concatenating the edge labels on the path from the root to the leaf node numbered $\SA[i]$ is $\mathcal T[\SA[i],n]$ where $\SA$ is the Suffix Array of $\mathcal T$.
\end{definition}
The \emph{suffix trie} of $\mathcal T$ is the edge-labeled tree obtained from the suffix tree by replacing every edge labeled with a string $\alpha$ with a path of $|\alpha|$ edges labeled with $\alpha[1], \ldots, \alpha[|\alpha|]$. Hence the suffix trie is edge-labeled with characters, while the suffix tree is edge-labeled with strings. A node of degree $2$ in the suffix trie that is not the root is an \emph{implicit node}, all other nodes are \emph{explicit nodes}. Note that explicit nodes are the nodes that are both in the suffix tree and the suffix trie, while a node that is introduced by the procedure of replacing edges with paths above is an implicit node. For an internal node $u$ in the suffix tree, we denote with $\out(u)\subseteq \Sigma$ the set of first characters of strings labeling outgoing edges of $u$.

\begin{definition}[Path Label, String Depth, Locus]
    (i)~For a node $u$ in the suffix tree/trie of $\mathcal T$, we call $\alpha(u)$ the unique string obtained by concatenating the edge labels on the path from the root of the suffix tree to $u$ the \emph{path label} of $u$. (ii)~We call $\sd(u) = |\alpha(u)|$ the \emph{string depth} of $u$. (iii)~For a right-maximal substring $\alpha$ of $\mathcal T$, the \emph{locus} of $\alpha$, denoted by $\locus(\alpha)$, is the unique suffix tree node $u$ for which $\alpha(u) = \alpha$.
\end{definition}
Observe that, for any $i\neq j$, the longest common extension $\mathcal T[i,i + \rlce(i,j) - 1]$ at $i,j$ is a right maximal substring whose locus is at string depth $\rlce(i,j)$ in the suffix tree.  

% We now define the following standard suffix tree operations.
% \begin{definition}[ST operations]
%     \begin{description}
%         \item[Root.] 

%         \item[String depth.]
        
%         \item[Ancestor.] 
        
%         \item[Is leaf.]  
        
%         \item[Locate leaf.] 
        
%         \item[Leftmost/rightmost leaves.] 
        
%         \item[Edge label.] 
        
%         \item[Next leaf.]

%     \end{description}
% \end{definition}
We proceed with the definitions of the longest common Prefix Array, the permuted longest common Prefix Array, and the longest previous factor array. Note that $\LCP$ is the longest common Prefix Array for a text $\mathcal T$, while $\lcp$ was the function that for two strings returns the length of their longest common prefix.
\begin{definition}[LCP, PLCP, and LPF]
    Let $\mathcal S$ be a string of length $n$.
    \begin{itemize}
        \item The \emph{longest common prefix} array $\LCP$ of $\mathcal S$ is the length-$(n-1)$ integer array such that $\LCP[i]:=\mathcal S.\rlce(\SA[i], \SA[i-1])$, for all $i\in [2,n]$, where $\SA$ is the Suffix Array of $\mathcal S$.
        \item The \emph{permuted longest common prefix} array $\PLCP$ of $\mathcal S$ is the length-$(n-1)$ integer array such that $\PLCP[i]:=\LCP[\ISA[i]]$, for all $i\in [n - 1]$, where $\ISA$ is the Inverse Suffix Array of $\mathcal S$.
        \item An \emph{irreducible $\PLCP$ value} is a value $\PLCP[i]$ such that $i =1$ or $\PLCP[i]\neq \PLCP[i-1] - 1$.
        \item The \emph{longest previous factor} array $\LPF$ of $\mathcal S$ is the length-$n$ integer array such that $\LPF[i]:=\max\{\mathcal S.\rlce(i,j): j<i\}$ for all $i\in [n]$.
    \end{itemize}
\end{definition}
    
We now define the Burrows-Wheeler transform. In order to do so we need to define the rotations of a string. For a string $\mathcal S$ of length $n$ and an integer $i\in [n]$, the $i$'th rotation of $\mathcal S$ is the string $\mathcal S^{\leftarrow i}:=\mathcal S[i + 1] \ldots \mathcal S[n]\mathcal S[1] \ldots \mathcal S[i]$. 
    
\begin{definition}[Burrows-Wheeler Transform~\cite{BW94}]
    \label{def: BWT}
    \begin{itemize}
        \item The \emph{Burrows-Wheeler transform} (\emph{co-Burrows-Wheeler transform}) of a string $\mathcal S$, denoted by $\BWT(\mathcal S)$ ($\coBWT(\mathcal S)$), is the permutation of the characters of $\mathcal S$ that is obtained by lexicographically sorting (colexicographically sorting) all the rotations of $\mathcal S$ and concatenating the last (first) character of each string in this sorted list.
        % defined by concatenating the characters preceding the lexicographically-sorted rotations of $\mathcal T$, i.e., $\BWT(\mathcal T)[i] = \mathcal T[\SA[i]-1]$ where $\mathcal T[0]=\mathcal T[n]$. 
        % \item The \emph{co-Burrows-Wheeler transform} of the text $\mathcal T$, denoted by $\coBWT(\mathcal T)$ is the permutation of the characters of $\mathcal T$ that is obtained by colexicographically sorting all the rotations of $\mathcal T$ and taking the first characters of the strings in this sorted list. 
        \item We define $r$ ($\bar r$) as the number of equal-letter runs in $\BWT(\mathcal S)$ ($\coBWT(\mathcal S)$), i.e., the number of maximal substrings of $\BWT(\mathcal S)$ ($\coBWT(\mathcal S)$) containing a single character. 
    \end{itemize}
    We will omit $\mathcal S$ when clear from the context.
\end{definition}

\begin{remark}\label{remark: properties}
    The following properties concerning the $\SA$, $\ISA$, $\BWT$, and $\coBWT$ of a string $\mathcal S$ are immediate from their definition. For $i,j\in[n]$, 
    \begin{enumerate}
        \item $\BWT(\mathcal S)[i] = \mathcal S[\SA[i]-1]$, where $\mathcal{S}[0]:=\mathcal{S}[n]$.
        \item \label{remark: properties: ISA order} If $\ISA[i]<\ISA[j]$ and $\mathcal{S}[i-1]=\mathcal{S}[j-1]$, then $\ISA[i-1]<\ISA[j-1]$, where $\ISA[0]:=\ISA[n]$.
        \item \label{remark: properties: BWT ISA} $\BWT[\ISA[i]]=\mathcal{S}[i-1]$.
        \item \label{remark: properties: BWT rotation} $\BWT(\mathcal{S})=\BWT(\mathcal S^{\leftarrow i})$.
        \item \label{remark: properties: coBWT} $\coBWT[i]=\mathcal{S}[\PA[i]+1]$ where $\mathcal{S}[n+1]:=\mathcal{S}[1]$.
    \end{enumerate}
\end{remark}

We proceed with the definition of Range Minimum and Maximum queries. 
\begin{definition}
    Given a list $L$ of $n$ integers, the \emph{Range Minimum (Maximum) query} on $L$ with arguments $\ell, r\in [n]$, returns the index $\argmin_{k \in [\ell, r]} L[k]$ ($\argmax_{k \in [\ell, r]} L[k] $) of the minimum (maximum) element in $L[\ell,r]$.   
\end{definition}    
     
There exists a data structure that uses $2n+o(n)$ bits and supports Range Minimum (Maximum) queries in $O(1)$ time~\cite{FischerHeun11}.
		
\end{appendix}

\end{document}